\newtheorem{theorem}{Theorem}
\newtheorem{observation}{Observation}
\newtheorem{lemma}[theorem]{Lemma}
\newtheorem{claim}{Claim}
\newcommand{\YES}{\ensuremath{\mathsf{Yes}}\xspace}
\newcommand{\NO}{\ensuremath{\mathsf{No}}\xspace}
\newcommand{\TJ}{\textsc{TJ}\xspace}
\newcommand{\TS}{\textsc{TS}\xspace}
\newcommand{\TJA}{\ensuremath{\leftrightarrow_{\text{\textsc{TJ}}}}\xspace}
\newcommand{\TSA}{\ensuremath{\leftrightarrow_{\text{\textsc{TS}}}}\xspace}
\newcommand{\TSeq}{\ensuremath{\leftrightsquigarrow_{\text{\textsc{TS}}}}\xspace}
\newcommand{\TJR}{\textsc{Token Jumping}\xspace}
\newcommand{\TSR}{\textsc{Token Sliding}\xspace}
\newenvironment{claimproof}[1][\proofname]
  {%
    \proof[#1]%
  }
  {%
    \endproof%
  }
\title{Independent set reconfiguration in $H$-free graphs\footnote{The authors are supported by ANR project GrR (ANR-18-CE40-0032).}}
\author[1]{Valentin Bartier}
\author[1]{Nicolas Bousquet}
\author[2]{Moritz Mühlenthaler}
\affil[1]{Univ. Lyon, Université Lyon 1, CNRS, LIRIS UMR CNRS 5205, F-69621, Lyon, France}
\affil[2]{G-SCOP, Université Grenoble-Alpes, Grenoble, France}
\date{}
\begin{document}

\maketitle

\begin{abstract}
	Given a graph $G$ and two independent sets of $G$, the independent set reconfiguration problem asks whether one independent set can be transformed into the other by moving a single vertex at a time, such that at each intermediate step we have an independent set of $G$. We study the complexity of this problem for $H$-free graphs under the token sliding and token jumping rule. Our contribution is twofold. First, we prove a reconfiguration analogue of Alekseev's theorem, showing that the problem is \PSPACE-complete unless $H$ is a path or a subdivision of the claw. We then show that under the token sliding rule, the problem admits a polynomial-time algorithm if the input graph is fork-free.
\end{abstract}

\section{Introduction}

For some nominal combinatorial problem, such as the satisfiability problem or the independent set problem, a corresponding \emph{reconfiguration problem} asks the following question: Given two solutions to some instance of the nominal problem, can one be transformed into the other by a sequence of allowed modifications, such that each intermediate solution is feasible?
Such problems appear in many domains, such as combinatorial puzzles (Rubik's cube, Sokoban), motion planning, or in the context of phylogenetic trees, see~\cite{Heuvel:13,Nishimura:18} for an overview of recent results. The complexity of reconfiguration problems on graphs has received considerable attention recently, for example the reconfiguration of vertex covers \cite{hutchison_parameterized_2013, mouawadVC2014}, dominating sets \cite{HADDADAN201637, Suzuki2016, BONAMY20216}, colorings \cite{BOUSQUET20161,BONAMY202045,DVORAK2021103319,BDKLJ:23}, and independent sets~\cite{hutchison_reconfiguring_2014,Bartier2020OnGA}. The natural complexity class of reconfiguration problems is \PSPACE\xspace and we may ask under which conditions we can do better, that is, under which conditions a reconfiguration problem admits a polynomial-time algorithm. We consider this question for the reconfiguration of independent sets in graphs that are $H$-free, that is, graphs that do not contain a fixed graph $H$ as an induced subgraph.

Our study of independent set reconfiguration in $H$-free graphs is largely motivated by a similar line of research on the nominal problem of finding a maximum independent set (MIS) in $H$-free graphs. The input is an $H$-free graph $G$ and the task is to find an independent set of $G$ of maximum size. Despite roughly 40 years of effort, a complete classification of the complexity is not known. Alekseev showed that the problem remains \NP-hard for a large class of graphs $H$. 
\begin{theorem}[{\cite{alekseev1982effect}}]
	\label{thm:alekseev-h-free} 
	The maximum independent set problem on $H$-free graphs is \NP-hard, unless $H$ is a path, the claw, or a subdivision of the claw. 
\end{theorem}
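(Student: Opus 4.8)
The plan is to reduce from the maximum independent set problem on cubic graphs, which is \NP-hard, using a single construction that works for every ``bad'' $H$. We treat connected $H$: the graphs in the exceptional list are connected, and for a disconnected $H$ having a component $H'$ that is not a path or a subdivision of the claw, hardness transfers since every $H'$-free graph is $H$-free. The structural fact we use is that a connected graph is a path or a subdivision of the claw exactly when it is a tree of maximum degree at most $3$ having at most one vertex of degree $3$. So if $H$ is connected and not of this form, it satisfies at least one of: (i) $H$ contains a cycle; (ii) $H$ has a vertex of degree at least $4$; (iii) $H$ is a tree of maximum degree at most $3$ with at least two vertices of degree $3$.

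The engine of the reduction is Poljak's subdivision identity: if $G'$ is obtained from $G$ by subdividing every edge exactly twice (replacing each edge by a path on four vertices), then a routine exchange argument gives $\alpha(G') = \alpha(G) + |E(G)|$, where $\alpha(\cdot)$ is the independence number. Iterating this operation $t$ times yields a graph $\Gamma := G^{(t)}$ with $\alpha(\Gamma) = \alpha(G) + \tfrac{3^t-1}{2}\,|E(G)|$ and with $|V(\Gamma)|$ polynomial in $|V(G)|$ for each fixed $t$, so $G \mapsto \Gamma$ is a polynomial-time many-one reduction. Starting from a cubic $G$, the graph $\Gamma$ has three convenient features: its maximum degree is $3$; its vertices of degree $3$ are precisely the original vertices of $G$, any two of which are at distance at least $3^t$ in $\Gamma$; and its girth is $3^t$ times the girth of $G$, hence exceeds $3^t$.

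Now fix $t$ with $3^t > |V(H)|$, a constant depending only on $H$; I claim $\Gamma$ is $H$-free. In case (i), any induced copy of $H$ in $\Gamma$ would contain a cycle of length at most $|V(H)| < 3^t < \operatorname{girth}(\Gamma)$, impossible. In case (ii), $\Gamma$ has no vertex of degree at least $4$, so no induced subgraph of $\Gamma$ is isomorphic to $H$. In case (iii), an induced copy $\Gamma[S]\cong H$ would send the two degree-$3$ vertices of $H$ to two distinct vertices of $\Gamma$ of degree at least $3$, hence --- since $\Delta(\Gamma)=3$ --- to two distinct original vertices of $G$; these are at distance at least $3^t$ in $\Gamma$, and since distances in an induced subgraph are no smaller than in the host graph, they would be at distance at least $3^t$ in $\Gamma[S]\cong H$, which is impossible because any two vertices of $H$ are at distance less than $|V(H)| < 3^t$. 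Hence $\Gamma$ is $H$-free, and the reduction above shows that maximum independent set on $H$-free graphs is \NP-hard.

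I expect the main obstacle to be twofold. First, one must notice that a single construction --- iterated double subdivision of a cubic graph --- simultaneously defeats all three obstructions: the natural instinct is to handle ``large girth'' and ``degree-$3$ vertices far apart'' with separate ad hoc classes, whereas subdivision secures both while keeping $\alpha$ under exact control. Second, case (iii) is the delicate one: $\Gamma$ is riddled with degree-$3$ vertices, hence with induced claws, so forbidding $H$ there cannot rest on girth or on degree and must instead exploit the metric fact that the degree-$3$ vertices of $\Gamma$ are mutually far apart while the two prescribed degree-$3$ vertices of $H$ sit at bounded distance. Verifying that Poljak's identity iterates cleanly and that all parameters stay polynomial is routine but should be carried out carefully, as it is what makes the reduction polynomial-time.
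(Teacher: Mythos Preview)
Your argument is correct and follows essentially the same route as the paper. The paper uses Alekseev's identity $\alpha(G_{2t}) = \alpha(G) + t\,|E(G)|$ for a single $2t$-subdivision of a graph of maximum degree three, whereas you iterate Poljak's double subdivision of a cubic graph; both are even subdivisions that keep $\alpha$ under exact control while simultaneously forcing large girth and large pairwise distance between degree-$3$ vertices, which is precisely what the paper invokes (more tersely) to exclude any $H$ that is not a path or a subdivision of the claw. Your explicit case split (i)--(iii) just unpacks the paper's sentence ``arbitrarily large girth and vertices of degree at least three arbitrarily far apart'' into the three concrete obstructions.
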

Let $P_k$ denote the path graph on $k$ vertices and see Figure~\ref{fig:smallgraphs} for illustrations of $P_4$, the claw and the fork. Minty \cite{minty_maximal_1980} and Sbihi \cite{sbihi_algorithme_1980} showed independently that a maximum independent set of a claw-free graph can be computed in polynomial time. Twenty years later, Alekseev generalized this result to fork-free graphs \cite{alekseev_polynomial_2004}. Noticing that a maximum independent set can be computed in polynomial time $P_4$-free graphs, the smallest interesting case left open by Theorem \ref{thm:alekseev-h-free} is $H = P_5$. The complexity of this problem remained open for three decades, but recently, polynomial-time algorithms were obtained for $H = P_5$~\cite{lokshantov_independent_2014} and $H=P_6$~\cite{grzesik_polynomial-time_2020}. The complexity remains open for $H=P_\ell$ for $\ell \geq 7$.

\begin{figure}
    \begin{subfigure}{.3\textwidth}
    \centering
        \begin{tikzpicture}[vertex/.style={shape=circle,thick,draw,node distance=3em,inner sep=0.15em},edge/.style={draw,thick}]
            \node[vertex] (v1) {};
            \node[vertex,above of=v1] (v2) {};
            \node[vertex,above of=v2] (v3) {};
            \node[vertex,above of=v3] (v4) {};

            \draw[edge] (v1) -- (v2) -- (v3) -- (v4);
        \end{tikzpicture}
        \caption{$P_4$}
        \label{fig:p4}
    \end{subfigure}%
    \begin{subfigure}{.3\textwidth}
    \centering
        \begin{tikzpicture}[vertex/.style={shape=circle,thick,draw,node distance=3em,inner sep=0.15em},edge/.style={draw,thick}]
            \node[vertex] (v1) {};
            \node[vertex,above left of=v1] (v2) {};
            \node[vertex,below of=v1] (v3) {};
            \node[vertex,above right of=v1] (v4) {};

            \draw[edge] (v2) -- (v1) -- (v3);
            \draw[edge] (v1) -- (v4);
        \end{tikzpicture}
        \caption{claw}
    \label{fig:p4}
    \end{subfigure}%
    \begin{subfigure}{.3\textwidth}
    \centering
        \begin{tikzpicture}[vertex/.style={shape=circle,thick,draw,node distance=3em,inner sep=0.15em},edge/.style={draw,thick}]
            \node[vertex] (v1) {};
            \node[vertex,above left of=v1] (v2) {};
            \node[vertex,below of=v1] (v3) {};
            \node[vertex,above right of=v1] (v4) {};
            \node[vertex,below of=v3] (v5) {};

            \draw[edge] (v2) -- (v1) -- (v3);
            \draw[edge] (v1) -- (v4);
            \draw[edge] (v5) -- (v3);
        \end{tikzpicture}
        \caption{fork}
    \label{fig:p4}
    \end{subfigure}%
    \caption{Small graphs of interest.}
    \label{fig:smallgraphs}
\end{figure}
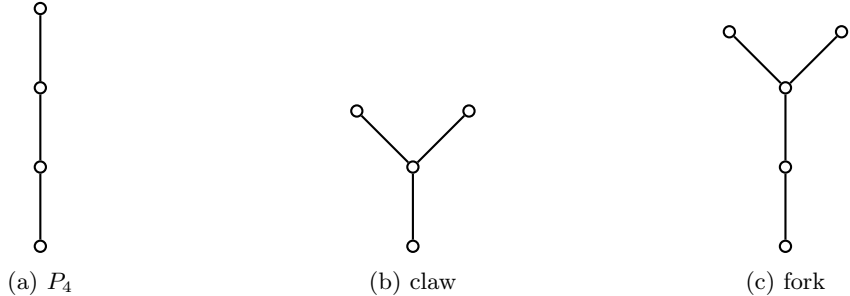

We follow the line of results above and investigate the complexity of \textsc{Independent Set Reconfiguration} (ISR) in $H$-free graphs. The ISR problem has been introduced in~\cite{ito_complexity_2011} and it has been studied widely since, see e.g.,~\cite{hutchison_reconfiguring_2014,  hutchison_parameterized_2014, ahn_fixed-parameter_2014, demaine_linear-time_2015, elbassioni_sliding_2015, bodlaender_token_2017, Bartier2020OnGA}. An instance of ISR is a graph $G$ and two independent sets $I$ and $J$ of $G$ of the same size, which are considered as \emph{tokens} placed on the vertices of $G$. The question is whether $I$ can be transformed into $J$ by a sequence of \emph{moves}, such that in each intermediate step the tokens form an independent set.
Two main types of moves have been considered: \emph{token jumping}, where a token can move to any other vertex of the graph (provided that it is not adjacent to a token), and \emph{token sliding}, where a token can slide along an edge to a neighboring vertex. We denote by \TJR and \TSR and the corresponding variants of ISR, where the only allowed moves are token jumping and token sliding, respectively. 
Both problems are equivalent for maximum independent sets and are known to be \PSPACE-complete, even for planar graphs of maximum degree three \cite{hearn_pspace-completeness_2005,ito_complexity_2011}.
Polynomial-time algorithms are known for example on trees \cite{demaine_linear-time_2015}, interval graphs \cite{bodlaender_token_2017}, bipartite permutation and bipartite distance hereditary graphs \cite{elbassioni_sliding_2015}. 

We prove the following reconfiguration analogue of Alekseev's theorem for ISR in $H$-free graphs.
\begin{theorem}
	\label{thm:alekseev-me}
	\TSR and \TJR in $H$-free graphs are $\PSPACE$-complete, unless $H$ is a path, the claw, or a subdivision of the claw.
\end{theorem}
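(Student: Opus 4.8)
The plan is to mirror the proof of \Cref{thm:alekseev-h-free}. Membership in $\PSPACE$ is standard, since reconfiguration is a reachability question in the exponentially large but implicitly described graph of independent sets of a fixed size, so the task is to establish $\PSPACE$-hardness of \TSR and \TJR on $H$-free graphs. I would restrict attention to connected $H$; the disconnected case reduces to it whenever $H$ has a connected component $H'$ that is itself neither a path nor a subdivision of the claw, since then every $H'$-free graph is $H$-free. The combinatorial fact that makes the reduction work is elementary: a connected graph is a path or a subdivision of the claw if and only if it is a tree of maximum degree at most $3$ having at most one vertex of degree $3$. Consequently, if $H$ is connected and is \emph{not} of this form, then any graph $G^{\star}$ that is subcubic, has girth more than $|V(H)|$, and has the property that its vertices of degree $3$ are pairwise at distance more than $|V(H)|$, is automatically $H$-free: an induced copy of $H$ would span at most $|V(H)|$ vertices and be connected, the girth bound would force it to be a tree, and the distance bound would force it to contain at most one vertex of degree $3$, so it would be a path or a subdivision of the claw --- a contradiction.

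The reduction then proceeds as follows. I would take a $\PSPACE$-complete instance $(G,I,J)$ of \TSR (resp.\ \TJR) on planar subcubic graphs, which exists by \cite{hearn_pspace-completeness_2005, ito_complexity_2011}, and subdivide every edge of $G$ exactly $\ell$ times, where $\ell$ is chosen large enough that $\ell + 1 > |V(H)|$ (for the sliding variant $\ell$ is additionally taken even, so that the number of ``filler'' tokens forced onto a subdivided path does not depend on whether its endpoints carry tokens); call the resulting graph $G^{\star}$. Since $G$ is subcubic, $G^{\star}$ is subcubic; its only vertices of degree $3$ are the original vertices of $G$ of degree $3$, which are now pairwise at distance at least $\ell+1$; and its girth is at least $\ell+1$ (or $G^{\star}$ is acyclic). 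Hence $G^{\star}$ is $H$-free by the observation above. It remains to lift $I$ and $J$ to independent sets $I^{\star}$ and $J^{\star}$ of $G^{\star}$ --- the original tokens together with a canonical choice of filler tokens on each subdivided path --- and to prove the key lemma that $(G^{\star}, I^{\star}, J^{\star})$ is a yes-instance of \TSR (resp.\ \TJR) if and only if $(G,I,J)$ is. This yields the desired reduction and, together with $\PSPACE$-membership, proves the theorem.

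The main obstacle is precisely this subdivision lemma, and especially its token-sliding version. For token jumping it is comparatively easy: to simulate a jump of a token of $G$ onto a vertex $u'$ that is non-adjacent to all current tokens, note that every subdivided path leaving $u'$ has its $G$-endpoint free of tokens, so its filler tokens can be walked off the neighbour of $u'$ toward the interior of the path, after which the long jump onto $u'$ is legal in $G^{\star}$; the reverse direction and the relocation of filler tokens along the paths incident to the vacated vertex are routine bookkeeping. For token sliding the difficulty is that the moving token must physically traverse a subdivided path along which filler tokens sit, and must also make room for itself at the far endpoint by disturbing the filler tokens of the adjacent subdivided paths; arguing that these filler tokens can always be shuffled aside is where the precise value and parity of $\ell$ enter, and this is the step I expect to require the most care. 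Minor additional points are choosing the starting $\PSPACE$-complete class robustly (any subcubic class suffices; planarity is preserved but unnecessary) and handling the reduction to connected $H$ cleanly.
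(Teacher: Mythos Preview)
Your plan is essentially the paper's: reduce from \PSPACE-hard subcubic instances and subdivide every edge enough times so that the resulting graph has large girth and its degree-$3$ vertices are pairwise far apart, forcing it to be $H$-free; then prove the subdivision lemma you describe (the paper's Theorem~\ref{thm:alex-reconf}).

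The one simplification you miss is that the paper works throughout with \emph{maximum} independent sets. The hardness results of \cite{hearn_pspace-completeness_2005,ito_complexity_2011} already hold for maximum independent sets, and Alekseev's relation $\alpha(G_{2t})=\alpha(G)+t|E(G)|$ guarantees that the canonical extensions of maximum independent sets of $G$ are again maximum in $G_{2t}$. Since \TSR and \TJR coincide on maximum independent sets, a single subdivision lemma for \TS suffices and your separate bookkeeping for \TJ becomes unnecessary. Moreover, maximality is what pins down the number and positions of filler tokens on each subdivided edge (Claims~\ref{cl:nr-subdiv-tokens} and~\ref{cl:size-trace}), which is exactly the step you flagged as requiring ``the most care''; restricting to maximum independent sets makes that step short. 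Your direct approach would also go through, but the maximum-IS shortcut is cleaner.
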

It is known that \TSR and \TJR admit polynomial-time algorithm on $P_4$-free graphs~\cite{bonsma_independent_2016,kaminski_complexity_2012} and claw-free graphs~\cite{hutchison_reconfiguring_2014}. We generalize the results for \TSR, showing that the problem admits a polynomial-time algorithm in fork-free graphs. 
\begin{theorem}
	\label{thm:fork-free}
	\TSR in fork-free graphs admits a polynomial-time algorithm.
\end{theorem}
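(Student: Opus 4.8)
The plan is to generalise the polynomial-time algorithm for claw-free graphs of~\cite{hutchison_reconfiguring_2014}, using Alekseev's polynomial-time algorithm for the maximum independent set problem on fork-free graphs~\cite{alekseev_polynomial_2004} as a subroutine. First, since a token never leaves its connected component under the token sliding rule, $(G,I,J)$ is a \YES-instance if and only if $|I\cap C|=|J\cap C|$ for every connected component $C$ of $G$ and each instance $(G[C],I\cap C,J\cap C)$ is a \YES-instance; as induced subgraphs of fork-free graphs are fork-free, we may assume $G$ is connected. Besides computing $\alpha(G)$, Alekseev's algorithm is kept at hand to locate inside $G$ the ``flexible'' substructures --- augmenting graphs and their alternating variants --- along which a set of tokens can be rerouted without changing its size.

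The core is a structural analysis of connected fork-free graphs showing that $(G,I,J)$ either \emph{reduces}, in polynomial time, to an equivalent instance on a fork-free graph with fewer vertices, or else $G$ is so constrained that $I\TSA J$ can be decided by an explicit, polynomially checkable criterion. On the reduction side one looks for the standard simplifying features and shows each is exploitable: a nontrivial false-twin class, equivalently an independent module $M$ of size at least two (as long as $M$ contains a token no external neighbour of $M$ can, so tokens slide freely among the vertices of $M$ and their exact positions within $M$ carry no information; hence $M$ can be replaced by a bounded gadget); a clique cutset (which carries at most one token at a time, so the two sides can be treated almost separately); a pair $u,v$ with $N[u]\subseteq N[v]$ (then $v$ is redundant and may be deleted); and, as a last resort, a vertex lying in no independent set reachable from $I$ or from $J$, which may also be deleted. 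The graphs surviving all the reductions are highly structured, and on them the sliding dynamics are essentially rigid, so reachability is read off directly. Note that the naive criterion ``$I,J$ not maximum $\Rightarrow$ reachable'' suggested by analogy with the claw-free case fails here: already $K_{3,3}$ is fork-free and has a \emph{frozen} independent set of size $2<3=\alpha(K_{3,3})$.

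The main obstacle --- and where the fork-free case genuinely departs from the claw-free one --- is that fork-free graphs may contain induced claws, so a neighbourhood $N(v)$ can contain an arbitrarily large independent set. Consequently the auxiliary graph $G[I\triangle J]$ need not have maximum degree two, and the ``disjoint alternating paths and even cycles'' picture driving the claw-free algorithm collapses. To regain control one uses the absence of the fork as a global rigidity condition: whenever $v$ is the centre of an induced claw with tips $y_1,y_2,y_3$, every vertex adjacent to exactly one of the $y_i$ must also be adjacent to $v$. This tightly restricts how claws attach to the rest of $G$, and is precisely what guarantees that the ``park this token, slide that one past it'' manoeuvres needed in the reduction step are available unless $G$ is one of the rigid base graphs. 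Encoding this local structure into the reduction rules is the technical heart of the proof; by comparison, checking that every reduction is size-decreasing and reachability-preserving, and that the whole procedure runs in polynomial time, is routine.
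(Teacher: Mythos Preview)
Your proposal is a plan rather than a proof: you name a list of reduction rules and assert that after applying them exhaustively the residual graphs are ``rigid'' enough to decide directly, but you never specify what those residual graphs look like, what the decision criterion is, or why the rules you list (false twins, clique cutsets, dominated vertices) are sufficient to reach such graphs. You even flag this yourself: ``Encoding this local structure into the reduction rules is the technical heart of the proof'' --- and then stop. That is the gap.

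More concretely, you misdiagnose the main obstacle. You write that in fork-free graphs $G[I\triangle J]$ need not have maximum degree two, so the paths-and-cycles picture from the claw-free algorithm ``collapses''. The paper's first move is precisely to \emph{restore} that picture: any vertex adjacent to at least three tokens of $I$ is shown to be permanently blocked (no reachable independent set can contain it), hence safely deletable; after this reduction every vertex has at most two $I$-neighbours and at most two $J$-neighbours, so $G[I\triangle J]$ \emph{is} a disjoint union of paths and cycles (Lemma~\ref{lem:sym-diff-fork-free}). Paths resolve trivially; the work is in resolving cycles.

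For cycles the paper does something your sketch does not anticipate. It first reduces (via the modular decomposition, not clique cutsets or dominated vertices) to graphs each of whose components is \emph{prime}, and then invokes a structure theorem of Brandst{\"a}dt et al.\ (\cref{thm:extension-claw}) describing the finitely many prime fork-free extensions $H_1,\dots,H_5$ of a claw. This is what powers the key technical lemma (\cref{lem:rotation-claw}, \cref{thm:ts_tj_equiv_fork}): given any token $v\in I$ and any $I$-free vertex $u$, either one can slide to $I-v+u$ along a shortest $u$--$v$ path, or one finds a further permanently blocked vertex to delete. Combined with the existence of an augmenting path when $I$ is non-maximum (via~\cite{GERBER2006352}), this lets one peel a token off any cycle of $I\triangle J$, rotate the rest, and put it back. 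The maximum case is handled separately by a direct reduction to claw-free graphs. None of these ingredients --- the three-token irrelevance rule, the reduction to prime components, the Brandst{\"a}dt et al.\ classification, or the ``reach an $I$-free vertex or find an irrelevant one'' dichotomy --- appear in your outline, and without them the argument does not go through.
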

Let us note that the complexity of \TJR on fork-free graphs remains open. Furthermore, the smallest graph $H$ for which the complexity of \TSR in $H$-free graphs is open is $H=P_5$.

\paragraph{Overview of the proofs.}
In order to prove Theorem \ref{thm:alekseev-me}, we use similar ideas to those that Alekseev used to prove Theorem \ref{thm:alekseev-h-free}. Let $G_t$ be the graph obtained from a graph $G = (V, E)$ by subdividing each edge of $G$ exactly $t$ times. The main idea is to relate maximum independent sets of a graph $G$ with those of $G_t$. In \cite{alekseev1982effect} Alekseev showed that $\alpha(G_{2t}) = \alpha(G) + t|E|$, where
$\alpha(G)$ denotes the size of a
maximum independent set of $G$. We may therefore assume that the input graph has arbitrarily large girth and that vertices of degree at least three are arbitrarily far apart. Theorem~\ref{thm:alekseev-h-free} then follows from the fact that MIS is \NP-hard on graph of maximum degree three.
To prove Theorem~\ref{thm:alekseev-me}, we associate each maximum independent set of $G$ in a canonical way with a maximum independent set of $G_{2t}$.
Let $I$ and $J$ be two maximum independent sets of $G$ and let $I'$ and $J'$ be the corresponding canonical independent sets of $G_{2t}$. We show that there exists a transformation from $I$ to $J$ if and only if there exists a transformation from $I'$ to $J'$. Combining this with the fact that \TSR and \TJR are \PSPACE-complete on graphs of maximum degree three~\cite{hearn_pspace-completeness_2005,ito_complexity_2011} yields Theorem \ref{thm:alekseev-me}. 

We then show that \TSR in $H$-free admits a polynomial-time algorithm (Theorem~\ref{thm:fork-free}). Let $G$ be a fork-free graph and $I$ and $J$ be independent sets of $G$ of the same size.
We notice that a vertex $v$ of $G$ is \emph{irrelevant} for \TSR if no independent set reachable from $I$ contains $v$. Throughout the algorithm, we reduce the graph by deleting irrelevant vertices. For instance, any vertex with at least three tokens in its neighborhood is irrelevant. In general, we distinguish two cases, depending on whether $I$ and $J$ are maximum or not. 
In the first case, we provide a simple reduction to \TSR in claw-free graphs. We show that if $I$ is maximum then any induced claw of $G$ contains precisely one token on a degree-one vertex of the claw. This means that the center of any claw is irrelevant and may be removed from the graph. We thus obtain an equivalent instance on a claw-free graph. Notice that by the equivalence of \TSR and \TJR for maximum independent sets, this also yields a polynomial-time algorithm for \TJR in claw-free graphs if the independent sets are maximum.

The case that $I$ and $J$ are not maximum is more involved. We observe that, after removing irrelevant vertices, each connected component the graph induced by the vertices $I$ and $J$ is either a path or a cycle. Sliding tokens on induced paths is easy, so it remains to deal with the cycles and isolated vertices. Similar obstructions occur for alternating cycles in \textsc{Matching Reconfiguration}~\cite{ito_complexity_2011} and cycles induced by the symmetric difference of independent sets of claw-free graphs~\cite{hutchison_reconfiguring_2014}). We show that by sliding tokens along an $I$-augmenting path ($I$ is not maximum), we may assume that $G$ contains an $I$-free vertex, that is, a vertex $f$, such that $I + f$ is an independent set.
We then show that, for any vertex $c$ of a cycle in the graph induced by the symmetric difference of the two independent sets, either we can transform $I$ into $I - c + f$ or find an irrelevant vertex, which can be removed (Theorem~\ref{thm:ts_tj_equiv_fork}). In the former case, this allows us to move the remaining tokens on the cycle to their target positions in $J$. To prove Theorem~\ref{thm:ts_tj_equiv_fork}, we first show that we may assume that every connected component of $G$ is prime with respect to the modular decomposition. Notice that the modular decomposition was used ``in disguise'' in the form of a \emph{cotree} in the polynomial-time algorithm for ISR in cographs ($P_4$-free graphs) in~\cite{bonsma_independent_2016}. In order to move the token on $c$ to the free vertex $f$, we imagine sliding it on a shortest path $P$ from $c$ to $f$. Then $c$ may be blocked by other tokens on $P$ or in the neighborhood of $P$. Using a structural characterization of claws in fork-free prime graphs due to Brandst{\"a}dt et al.~\cite{brandstadt2004minimal}, we show we may move $c$ and every vertex that is blocking $c$ forward on the path, one by one in reverse order, or find an irrelevant vertex (see in particular Lemma~\ref{lem:rotation-claw}). 

\paragraph{Related work.} 

The complexity of independent set reconfiguration has been widely studied in the literature, for example in particular graph classes~\cite{demaine_linear-time_2015,bodlaender_token_2017,elbassioni_sliding_2015} and in the context of parameterized complexity~\cite{Nishimura:18,Ito:20}.
For $H$-free graphs we are aware of only two results: Bonsma~\cite{bonsma_independent_2016} showed that \TSR is solvable in polynomial on in $P_4$-free graphs. In~\cite{hutchison_reconfiguring_2014}, Bonsma et al.~showed that \TSR and \TJR admit a polynomial-time algorithm on claw-free graphs~\cite{hutchison_reconfiguring_2014}. In the case that more than one graph is forbidden (as induced subgraph), Kamiński et al.~\cite{kaminski_complexity_2012} showed that \TJR admits a polynomial-time algorithm on even-hole free graphs. Interestingly, the complexity of MIS is open for this graph class. 
The complexity of \TSR and \TJR can be very different, for example, Lokshtanov and Mouawad~\cite{lokshtanov_complexity_2019} showed that \TJR is \NP-complete on bipartite graphs (which are odd-hole free), whereas \TSR remains \PSPACE-complete. Similarly, \TJR admits a linear-time algorithm on split graphs, while \TSR is \PSPACE-complete on this graph class~\cite{belmonte_token_2021}. The reader is referred to~\cite{Bartier21} for a complete and recent bibliography on the topic. ISR has also been studied from the point of view of parameterized complexity, where the parameter is for example the size of the independent sets or the length of the reconfiguration sequence. The reader is referred to~\cite{ISRsurvey} for an overview of results in this direction.
Many other reconfiguration problems have been studied when an induced subgraph is forbidden, especially from a structural point of view. For single vertex recoloring for instance, a complete dichotomy theorem was recently obtained by Belavadi, Cameron, Merkel in~\cite{belavadi23}, who characterized exactly for which graphs $H$, $\G(G,\chi(G)+1)$ is connected for every $H$-free graph $G$ (where $\chi(G)$ is the chromatic number of $G$).

\paragraph{Organization.}

Section~\ref{sec:prelim} contains userful notation and definitions. In Section~\ref{sec:alekseev}, we prove the reconfiguration analogue of Alekseev's theorem (Theorem~\ref{thm:alekseev-me}). We discuss irrelevant vertices of fork-free graphs in \cref{sec:irrelevant}. We propose a polynomial-time algorithm for \TSR for maximum independent sets in fork-free graphs in Section~\ref{sec:max-isr} and for non-maximum independent sets in Section~\ref{sec:nonmax-ts}. Section~\ref{sec:conclusion} concludes the paper with a discussion of open questions.

\section{Preliminaries}
\label{sec:prelim}

For two sets $U, W$ we denote by $U \bigtriangleup W$ their symmetric
difference $(U \setminus W) \cup (W \setminus U)$. Graphs in this paper are
undirected and finite. Let $G = (V, E)$ be a graph. A set $I \subseteq V$ is
\emph{independent} in $G$ if the vertices in $I$ are pairwise
non-adjacent in $G$. We denote by $\alpha(G)$ the maximal size of an independent set of $G$. For a vertex $v \in V$ we denote by $N(v)$ the
\emph{neighborhood of $v$}, that is, the set of vertices  that are adjacent to
$v$ in $G$. For $t \geq 1$ we denote by $P_t$ the path graph on $t$ vertices. A
vertex of degree one of $P_t$ is called endvertex and a vertex of degree two of
$P_t$ is called \emph{internal vertex}. Given graph $G$, the graph $\overline{G}$ is the \emph{complement graph} of $G$ and is obtained by replacing every edge of $G$ by a non-edge en vice-versa. 

A \emph{claw} is a graph on four vertices that has one vertex of degree three (the \emph{center}) and  three vertices of degree one (the \emph{leaves}). A \emph{fork} is a graph obtained from a claw by subdividing one edge, that is, by replacing an edge by a path on three vertices. The \emph{center} of a fork $F$ is the  vertex of $F$ of degree $3$.
For $t \geq 1$, the \emph{$t$-subdivision} of $G$ is the graph obtained from
$G$ by replacing each edge $uv$ of $G$ with a copy of $P_{t+2}$ and identifying its endvertices with $u$
and $v$. We say that a $t$-subdivision of $G$ is \emph{even} if $t$ is even.

\paragraph{Modular decomposition.} 
A set $U \subseteq V$ of vertices is called
\emph{module} if for each vertex $v \in V \setminus U$, the vertex $v$ is either
adjacent to every vertex in $U$ or $v$ is non-adjacent to every vertex in $U$.
A module $U \subseteq V$ is \emph{non-trivial} if $U \neq \emptyset$ and $U
\neq V$.  By recursively dividing a connected graph into maximal modules we may compute in linear time its \emph{modular decomposition}~\cite{mcconnell1999modular}. The modular decomposition of a graph has been introduced by Gallai
in~\cite{gallai1967transitiv}. Using this decomposition, many graph problems that are hard in general can be solved efficiently on certain classes of graphs~\cite{mohring1985algorithmic,mohring1984substitution}. For instance, the modular decomposition is used in the polynomial-time
algorithm for computing a maximum-weight independent set in a fork-free
graph~\cite{lozin_polynomial_2008}. In the context of reconfiguration, the
modular decomposition of cographs (that are $P_4$-free graphs) has been used (implicitly) in order to obtain
a polynomial-time algorithm for \textsc{Token Jumping} on
cographs~\cite{bonsma_independent_2016} \footnote{Note that the algorithm is given in the TAR-model which is proven to be equivalent to the TJ-model for independent set reconfiguration.}.

\paragraph{Independent Set Reconfiguration.}
In the following we define two adjacency relations on the independent sets of
$G$ that correspond to three natural reconfiguration steps on independent
sets: \emph{token sliding} (\TS) and \emph{token jumping} (\TJ).
Let $I, J \subseteq V$ be two independent sets
of $G$. We say that $I$ and $J$ are \TJ-adjacent ($I \TJA J$) if there is
precisely one vertex in $I$ that is not in $J$ and vice versa. Furthermore, we
say that $I$ and $J$ are \TS-adjacent ($I \TSA J$) if $I \TJA J$ and the vertex
$u$ in $I \setminus J$ and the vertex $v$ in $J \setminus I$ are adjacent in
$G$. 
For each
rule $R \in \{\TS, \TJ, \}$ we say that $I$ and $J$ admit a
\emph{$R$-reconfiguration sequence} $I_0, I_1, \ldots, I_\ell$ and write $I \leftrightsquigarrow_R J$ if $I = I_0$, $J
= I_\ell$ and for $0 \leq i < \ell$ we have $I_i \leftrightarrow_{R} I_{i+1}$.
If it is clear from the context we may omit the rule $R$. Given a graph $G$ and
two independent sets $I, J \subseteq V(G)$, the problem \TSR (resp., \TJR) asks
whether $I$ and $J$ admit a \TS-reconfiguration sequence (resp.,
\TJ-reconfiguration sequence).

Note that \TSR and \TJR are equivalent on maximum independent sets since, if a vertex can jump on a non-adjacent vertex, the independent set is indeed not maximum.

\section{A reconfiguration analogue of Alekseev's theorem}
\label{sec:alekseev}
In this section we prove Theorem~\ref{thm:alekseev-me}, a reconfiguration
analogue of Alekseev's Theorem~\ref{thm:alekseev-h-free}).  In the
following, let $G = (V, E)$ be a graph and $G_t$ be the $t$-subdivision of $G$ for some even
$t \geq 1$. 
For any independent set $I$ of $G$, we can construct in a canonical way an independent set $I'$ of $G_t$ by placing for each edge $uv$ of $G$ on the internal vertices of the copy of
$P_{t+2}$ in $G_t$ that corresponds to $uv$ exactly $t/2$ additional tokens. The independent set $I'$ is called the \emph{extension of $I$}.
Alekseev proved the following:

\begin{theorem}[{\cite{alekseev1982effect}}] \label{thm:alex}
	The graph $G$ has a maximum independent set of size $k$ if and only if $G_t$
	has maximum independent set of size $k+ t\cdot |E(G)| / 2$. 
\end{theorem}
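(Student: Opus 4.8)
The plan is to prove the two inequalities $\alpha(G_t) \le \alpha(G) + t|E(G)|/2$ and $\alpha(G_t) \ge \alpha(G) + t|E(G)|/2$ separately, where $t$ is even. For the lower bound, I would simply exhibit a large independent set of $G_t$: take a maximum independent set $I$ of $G$ with $|I| = \alpha(G)$ and form its extension $I'$ as defined in the excerpt, namely keep the $|I|$ original vertices and, on the copy of $P_{t+2}$ replacing each edge $uv$, add $t/2$ tokens on internal vertices. The point is that each $P_{t+2}$ has $t$ internal vertices forming an induced path $P_t$, so it contains an independent set of size $\lceil t/2 \rceil = t/2$; moreover one can choose these $t/2$ vertices to avoid the two internal vertices adjacent to $u$ and to $v$ whenever $u$ or $v$ lies in $I$ — this is where $t$ being even (so that $P_t$ has a perfect "every-other-vertex" independent set that can be shifted to start one step in from either end) is used. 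Since $I$ is independent in $G$, no two original vertices conflict, and by the above choice no original vertex conflicts with an added token, so $I'$ is independent and $|I'| = \alpha(G) + t|E(G)|/2$.

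For the upper bound, the idea is the reverse: given any independent set $S$ of $G_t$, I would bound its size edge by edge. For each edge $uv$ of $G$, let $S_{uv}$ be the set of internal vertices of the corresponding $P_{t+2}$ that lie in $S$; these form an independent set in a path on $t$ vertices, so $|S_{uv}| \le \lceil t/2 \rceil = t/2$, but when $|S_{uv}| = t/2$ the independent set in $P_t$ is forced (for $t$ even, a maximum independent set of $P_t$ has size $t/2$ and there are exactly two of them, namely the "odd" and the "even" positions), and in at least one of those two extremal configurations the endpoint-adjacent internal vertex is occupied — so if both $u$ and $v$ belong to $S$ we actually get $|S_{uv}| \le t/2 - 1$. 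Let $I := S \cap V(G)$, the set of original vertices used by $S$. The key claim is that $I$ is an independent set of $G$: if $uv \in E(G)$ with $u,v \in I$, then the $P_{t+2}$ for $uv$ contributes at most $t/2 - 1$ internal vertices; turning this around, define a modified count that "charges" each edge with both endpoints in $I$ one extra unit, and one checks $|S| = |I| + \sum_{uv} |S_{uv}| \le |I| + t|E(G)|/2 - (\text{number of edges inside } I)$. Since $I$ need not itself be independent, this shows $|S| \le \alpha(G) + t|E(G)|/2$ only after observing that deleting one endpoint of each edge inside $I$ yields an independent set of $G$ no smaller than $|I| - (\text{edges inside } I)$, hence $|I| - (\text{edges inside } I) \le \alpha(G)$. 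Combining, $|S| \le \alpha(G) + t|E(G)|/2$, and taking $S$ maximum gives the bound.

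Putting the two inequalities together yields $\alpha(G_t) = \alpha(G) + t|E(G)|/2$, which is exactly the statement after rephrasing "$G$ has a maximum independent set of size $k$" as "$\alpha(G) = k$" and likewise for $G_t$. The step I expect to be the main obstacle is the upper bound, specifically making precise the combinatorial bookkeeping that an edge with both endpoints selected forces the loss of one token on its subdivision path, and then converting the resulting quantity $|I| - (\text{edges inside }I)$ into a genuine lower bound $\alpha(G)$ via the deletion argument; the parity of $t$ must be used carefully here, since for odd $t$ the path $P_t$ has a unique maximum independent set and the forcing argument breaks, which is precisely why the theorem (and the even-subdivision hypothesis used in Section~\ref{sec:alekseev}) restricts to even $t$.
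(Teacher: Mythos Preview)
The paper does not prove Theorem~\ref{thm:alex}; it is quoted as a known result of Alekseev~\cite{alekseev1982effect} and used as a black box in Section~\ref{sec:alekseev}. So there is no ``paper's own proof'' to compare against.

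That said, your plan is correct and is the standard argument. One small inaccuracy worth cleaning up: you write that for even $t$ the path $P_t$ has ``exactly two'' maximum independent sets (the odd and even positions). This is false already for $P_4$, which has three maximum independent sets. Fortunately you do not actually use that claim. The clean way to get the crucial bound $|S_{uv}| \le t/2 - 1$ when $u,v \in S$ is simply: both $s_{uv}^1$ and $s_{uv}^t$ are forbidden (being adjacent to $u$ and $v$ respectively), so $S_{uv}$ lies in the path $s_{uv}^2,\dots,s_{uv}^{t-1}$ on $t-2$ vertices, hence $|S_{uv}| \le \lceil (t-2)/2 \rceil = t/2 - 1$. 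With that fix, both inequalities go through exactly as you outline, and the conversion $|I| - e(I) \le \alpha(G)$ via ``delete one endpoint per internal edge'' is fine.
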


Theorem~\ref{thm:alex} implies immediately that for any maximum independent set $I$ of $G$, the extension $I'$ of $I$ in $G$ is also maximum.
Our goal in this section is to adapt Theorem~\ref{thm:alex} to \TSR and \TJR as follows.

\begin{theorem}
	\label{thm:alex-reconf}
    Let $I$ and $J$ be two maximum independent sets of $G$ and let $R \in \{\TS, \TJ\}$. Then $I \leftrightsquigarrow_R J$ with respect to $G$ if and only if $I' \leftrightsquigarrow_R J'$ with respect to $G_t$. 
\end{theorem}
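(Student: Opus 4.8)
The plan is to prove both directions of the equivalence by showing that reconfiguration sequences in $G$ and in $G_t$ can be translated into one another, with the extension operation serving as the dictionary. Throughout, fix the even integer $t$, and recall that by Theorem~\ref{thm:alex} the extension $I'$ of a maximum independent set $I$ of $G$ is a maximum independent set of $G_t$, so all the intermediate sets we encounter are maximum and the rules \TS and \TJ coincide on them. This last observation is convenient: it means I only need to argue about \TJ (single-token jumps) and get the \TS statement for free, and moreover every maximum independent set of $G_t$ has a rigid structure on each subdivided edge, namely it occupies exactly $t/2$ internal vertices of every length-$(t+2)$ path, and these must form a contiguous block adjacent to at most one endpoint. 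I would first prove a structural lemma making this precise: if $K$ is a maximum independent set of $G_t$, then restricting attention to one subdivided path $uv$, the tokens of $K$ on that path form a block of $t/2$ consecutive internal vertices, and the "side" this block sits on (towards $u$, towards $v$, or strictly in the middle) is forced by whether $u,v\in K$; consequently $K$ is the extension of the independent set $K\cap V(G)$ of $G$. In particular the map $I\mapsto I'$ is a bijection between maximum independent sets of $G$ and those of $G_t$.

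For the forward direction, suppose $I\leftrightsquigarrow_R J$ in $G$ via a sequence $I=I_0,I_1,\dots,I_\ell=J$ where each step moves a single token from some vertex $a$ to a non-adjacent vertex $b$ (with $ab\in E(G)$ in the \TS case). I would simulate each such step in $G_t$ by a sequence of $\Theta(t)$ token jumps that transforms $I_i'$ into $I_{i+1}'$. Concretely, removing the token from $a$ and the $t/2$ tokens on each edge incident to $a$, then reinserting a token at $b$ and $t/2$ tokens on each edge incident to $b$, can be realized by "walking" each block of tokens along its path one vertex at a time; one has to choose the order of these elementary jumps carefully so that no two tokens ever become adjacent — the standard trick is to vacate the vertex $a$ first (pushing its incident blocks one step away from $a$), then slide each block across, and finally, once $b$ is free of adjacent tokens, place the token at $b$. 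The key point to verify is that at every elementary jump the set remains independent; this is a routine but slightly fiddly case analysis on the at most two subdivided paths that can carry conflicting tokens.

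For the converse direction, suppose $I'\leftrightsquigarrow_R J'$ in $G_t$. Here the difficulty is that a reconfiguration sequence in $G_t$ need not pass through extensions of independent sets of $G$: intermediate sets may have tokens in "non-canonical" positions on the subdivided paths. The plan is to show that any such sequence can be normalized. Since every intermediate set is maximum, by the structural lemma each intermediate set, when restricted to a single subdivided path $uv$, consists of a block of exactly $t/2$ consecutive internal vertices, but that block may sit in any of the $t/2+1$ admissible positions; what changes along the sequence is which block positions are occupied and whether the endpoints $u,v$ carry tokens. I would track, for each subdivided path, the discrete "state" (leftmost admissible, rightmost admissible, or a position in between together with which endpoints are present) and argue that the only moves that change the induced set $I'\cap V(G)$ are the ones that shuttle a token from an endpoint $u$ onto an adjacent path or vice versa — and each such event corresponds exactly to a legal \TJ move in $G$ (and a legal \TS move when $t$ is chosen so that the relevant endpoints are adjacent in $G$, which is automatic since a token can only leave $u$ along an edge $uv$). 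Collapsing all the intra-path block shuffles and reading off the endpoint events in order yields the desired reconfiguration sequence $I\leftrightsquigarrow_R J$ in $G$.

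The main obstacle I expect is the converse direction, specifically the bookkeeping needed to show that a token cannot "sneak" from $u$ onto an edge $uv$ and then, instead of fully crossing, retreat in a way that has no counterpart in $G$ — one must rule out that such partial excursions let $G_t$ be connected in the reconfiguration graph while $G$ is not. The way to handle this is a potential/invariant argument: define for each subdivided path a value in $\{0,\tfrac12,1\}$ recording how the block is "charged" towards each endpoint, show this value is determined by the presence of the endpoint tokens, and conclude that the pair (set of endpoint tokens, block positions) evolves exactly as dictated by moves in $G$. Once that invariant is in place, both translations are mechanical, and combining them with the \PSPACE-completeness of \TSR and \TJR on subcubic graphs gives Theorem~\ref{thm:alekseev-me}.
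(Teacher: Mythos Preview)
Your overall scaffolding is reasonable and your forward direction is fine, but the structural lemma you want to base the converse on is false, and this breaks the argument.

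You claim that every maximum independent set $K$ of $G_t$ restricts on each subdivided edge to a ``block of $t/2$ consecutive internal vertices'' and that consequently $K$ is the extension of the independent set $K\cap V(G)$ of $G$; in particular you assert a bijection between maximum independent sets of $G$ and of $G_t$. None of this holds. Take $G=K_2$ with edge $uv$ and $t=2$: then $G_t=P_4$ and $\{u,v\}$ is a maximum independent set of $G_t$, yet $K\cap V(G)=\{u,v\}$ is \emph{not} independent in $G$. More generally, for any edge $uv\in E(G)$ it is perfectly possible that both $u$ and $v$ lie in a maximum independent set $K$ of $G_t$; in that case $|K\cap S_{uv}|=(t-2)/2$, not $t/2$, and $K$ is not the extension of anything. (Even when $|K\cap S_{uv}|=t/2$, the tokens need not sit in an arithmetic-progression ``block''.) So the map $I\mapsto I'$ is injective but far from surjective, and the intermediate sets in a $G_t$-sequence may have traces $K\cap V(G)$ that are not independent in $G$. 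Your potential/invariant sketch assumes away exactly this phenomenon.

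The paper's proof of the converse confronts this directly. It first shows (Claim~\ref{cl:3-path-trace}) that $G[K\cap V]$ has no $P_3$, hence is a disjoint union of isolated vertices and isolated edges; a counting argument (Claim~\ref{cl:size-trace}) then gives $v(K)+e(K)=\alpha(G)$. From each such $K$ one extracts a maximum independent set of $G$ by keeping all isolated vertices and choosing one endpoint (consistently, via a fixed vertex ordering) from each isolated edge. The heart of the argument is that along a \TS-sequence in $G_t$, the trace on $V$ changes by at most one vertex per step (Claim~\ref{cl:aleks:G_t}), so the number of isolated edges changes by at most one, and a short case analysis shows that the extracted independent sets are either equal or \TS-adjacent in $G$. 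Your proposal would be repaired by replacing the false bijection with this ``trace plus edge-selection'' projection and proving the analogues of Claims~\ref{cl:3-path-trace}--\ref{cl:aleks:G_t}.
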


The rest of this section is devoted to prove Theorem~\ref{thm:alex-reconf}.
Recall that \TSR and \TJR are $\PSPACE$-complete in graphs of maximum degree
three~\cite{ito_complexity_2011,hearn_pspace-completeness_2005}. By considering subdivisions of $G$, we may assume that $G$ has arbitrarily high girth. Therefore, just as Alekseev's Theorem \ref{thm:alex} implies Theorem
\ref{thm:alekseev-h-free}, it suffices to prove Theorem~\ref{thm:alex-reconf} to
obtain Theorem~\ref{thm:alekseev-me}.
The remainder of this section is devoted to the proof of
Theorem~\ref{thm:alex-reconf}.  Observe that by Theorem~\ref{thm:alex} the
independent sets $I'$ and $J'$ of Theorem~\ref{thm:alex-reconf} are maximum
independent sets of $G_t$. Recall that for maximum independent sets, \TSR and
\TJR are equivalent\footnote{If a token can jump to a non-neighbor, then the independent set is indeed not maximum.}, so it suffices to consider the \TS version.

For an edge $uv$ of $G$ let $S_{uv} = \{s_{uv}^1, s_{uv}^2,
\ldots, s_{uv}^t\}$ be the set of internal vertices of the copy of $P_{t+2}$ that replaces the edge $uv$ in the construction of
$G_t$. We assume that $s_1$ is adjacent to $u$ and $s_t$ is adjacent to $v$ and
that for $1 < i < t$, the vertex $s_i$ is adjacent to $s_{i+1}$ and $s_{i-1}$. 
A \emph{left-move} on $S_{uv}$ (assuming that we have an order between $u$ and $v$) is a move of a token on $s_i$ with $i > 1$ to $s_{i-1}$, provided that $s_{i-1}$ is not adjacent to a token.

\begin{claim}\label{cl:nr-subdiv-tokens}
    Let $uv$ be an edge of $G$ and let $I'$ be a maximum independent set of $G_t$. If $\{u, v\} \subseteq I'$ then $|I' \cap S_{uv}| = (t-2)/2$ and $|I' \cap S_{uv}| = t/2$ otherwise.
\end{claim}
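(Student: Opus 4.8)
The plan is a global double‑counting argument that compares $|I'|$ with $\alpha(G_t)$, which by Theorem~\ref{thm:alex} equals $\alpha(G) + t\cdot|E(G)|/2$ (recall that $I'$ is assumed maximum). Write $m = |E(G)|$, set $I'_V = I' \cap V(G)$, and for every edge $e = uv$ of $G$ put $n_e = |I' \cap S_{uv}|$, so that $|I'| = |I'_V| + \sum_{e \in E(G)} n_e$. First I would record two elementary per‑edge bounds. Since $S_{uv}$ together with $u$ and $v$ induces a copy of $P_{t+2}$ and $t$ is even, a short case analysis on which of $u,v$ lie in $I'$ gives $n_e \le t/2$ always (the set $I' \cap S_{uv}$ is an independent set of some subpath of the $P_t$ induced by $S_{uv}$), and moreover $n_e \le t/2 - 1 = (t-2)/2$ whenever $\{u,v\} \subseteq I'$, because then the neighbours of $u$ and of $v$ inside $S_{uv}$ are excluded from $I'$, so $I' \cap S_{uv}$ is an independent set of a path on only $t-2$ vertices.

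Next, let $F$ be the set of edges of $G$ whose two endpoints both lie in $I'_V$; equivalently $F$ is the set of edges $uv$ with $\{u,v\} \subseteq I'$. Summing the per‑edge bounds over all edges of $G$ yields $\sum_{e} n_e \le tm/2 - |F|$, hence $|I'| \le |I'_V| + tm/2 - |F|$. On the other hand $|I'| = \alpha(G_t) = \alpha(G) + tm/2$, which rearranges to $\alpha(G) \le |I'_V| - |F|$. For the reverse inequality, observe that $G[I'_V]$ has exactly $|F|$ edges, so deleting one endpoint of each of these edges produces an independent set of $G$ of size at least $|I'_V| - |F|$; thus $\alpha(G) \ge |I'_V| - |F|$. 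Combining, $\alpha(G) = |I'_V| - |F|$, and consequently every inequality used above is in fact an equality.

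In particular, equality in $\sum_e n_e \le tm/2 - |F|$ together with the per‑edge bounds forces $n_e = (t-2)/2$ for every $e \in F$ — that is, for every edge $uv$ with $\{u,v\} \subseteq I'$ — and $n_e = t/2$ for every $e \notin F$, i.e. for every edge with not both endpoints in $I'$. This is exactly the claimed statement. The only step that requires genuine care is the per‑edge case analysis: one must correctly identify, in each of the cases ``both, one, or neither of $u,v$ in $I'$'', the subpath on which $I' \cap S_{uv}$ lives and use the parity of $t$ to evaluate its independence number; once these bounds are in hand the rest is bookkeeping.
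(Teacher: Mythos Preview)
Your argument is correct and takes a genuinely different route from the paper. The paper proves the claim \emph{locally}: on the path $S_{uv}$ it pushes all tokens to the left (``left-moves''), reads off the resulting pattern of occupied positions, and then uses only \emph{maximality} of $I'$ to rule out gaps. In contrast, you argue \emph{globally}: you bound each $n_e$ from above by the independence number of the relevant subpath, sum over all edges, and compare the total to the exact value $|I'| = \alpha(G) + t|E(G)|/2$ supplied by Alekseev's Theorem~\ref{thm:alex}; the matching lower bound $\alpha(G) \ge |I'_V| - |F|$ then forces every per-edge inequality to be tight.

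What each approach buys: the paper's argument is more self-contained in that it never invokes the numerical value of $\alpha(G_t)$, only that $I'$ is maximal. Your approach, on the other hand, is cleaner once Theorem~\ref{thm:alex} is available, and it yields more: the equality $\alpha(G) = |I'_V| - |F|$ together with the greedy-deletion bound forces every vertex of $G[I'_V]$ to have degree at most one (otherwise deleting a degree-$\ge 2$ vertex first would produce an independent set of $G$ larger than $\alpha(G)$). Thus your computation simultaneously proves Claims~\ref{cl:3-path-trace} and~\ref{cl:size-trace}, which the paper establishes separately and only after Claim~\ref{cl:nr-subdiv-tokens}. The one place to be careful is exactly where you flag it: the per-edge upper bounds. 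Your case analysis is fine --- when both endpoints are in $I'$ the intersection lives in a $P_{t-2}$, and since $t$ is even its independence number is $(t-2)/2$; in the remaining cases the intersection lives in a $P_{t-1}$ or $P_t$, both with independence number $t/2$.
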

\begin{proof} 
	Starting from $I'$, apply left-moves to $S_{uv}$ until no longer
	possible and let $I''$ be the resulting independent set.  Since no
	left-move can be applied to $S_{uv}$, if $\{u, v\} \subseteq I''$
	we have that for $1 \leq i < t$ even, there is a token of $I''$ on
	$s_i$. Since $t$ is even and $I'$ is maximum, there are precisely $(t-2)/2$ tokens in $S_{uv} \cap I''$. Since we only moved tokens on $S_{uv}$ to obtain $I''$, the same must hold for $I'$.
	On the other hand, suppose that $\{u, v\} \nsubseteq I''$ and (without
	loss of generality) that $u \notin I''$. Since no left-move can be
	applied to $S_{uv}$, any token of $I''$ is on some $s_i$ with $1 \leq i
	\leq t$ odd. Since $t$ is even and $I''$ is maximum, there must be a
	token on each $s_i$ for $1 \leq i < t$ odd and thus there are
	exactly $t/2$ tokens of $I''$ on $S_{uv}$, and the same holds for $I'$.
\end{proof}

Let $G_t$ be the $t$-extension of $G=(V,E)$. By abuse of notation, we will denote by $V$ the subset of vertices of $G_t$ corresponding to vertices of $G$.

Using Claim~\ref{cl:nr-subdiv-tokens}, we show that any two maximum independent sets of $G_t$ that agree on $V$ admit a \TS-reconfiguration sequence, and hence also a \TJ-reconfiguration sequence.

\begin{claim}
	\label{cl:same-trace-same-compo}
	Let $I'$ and $J'$ be two maximum independent sets of $G_t$ such that $I' \cap V = J' \cap V$. Then $I' \TSeq J'$.
\end{claim}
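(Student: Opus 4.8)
The plan is to reduce the statement to a fact about a single path-gadget $S_{uv}$ and then glue these facts together. Since $I'$ and $J'$ are both maximum and agree on $V$, for every edge $uv$ of $G$ the two sets place the same number of tokens on $S_{uv}$: by Claim~\ref{cl:nr-subdiv-tokens} this number is $(t-2)/2$ if $\{u,v\}\subseteq I'\cap V=J'\cap V$, and $t/2$ otherwise. Moreover the tokens of $I'$ and $J'$ on $S_{uv}$ are completely insulated from tokens on other gadgets: the only vertices of $G_t$ adjacent to $S_{uv}$ outside $S_{uv}$ itself are $u$ and $v$, and those are fixed (either both in, or at least one out, for both $I'$ and $J'$). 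So it suffices to show that, for each edge $uv$ separately, we can transform $I'\cap S_{uv}$ into $J'\cap S_{uv}$ by token slides that stay within $S_{uv}$ and respect the fixed status of $u$ and $v$; performing these transformations one gadget at a time yields the desired sequence $I'\TSeq J'$.

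The core sub-claim is therefore: in a path $P_{t+2}$ with the two endpoints $u,v$ carrying fixed tokens or not, any two independent sets of the prescribed (maximum) size on the $t$ internal vertices are \TS-connected. This is exactly the statement that token sliding is easy on a path, which the paper already invokes informally (``Sliding tokens on induced paths is easy''). Concretely, I would argue via a canonical form: apply left-moves on $S_{uv}$ (in the sense defined before Claim~\ref{cl:nr-subdiv-tokens}) until none is possible. By the analysis inside the proof of Claim~\ref{cl:nr-subdiv-tokens}, the resulting configuration is forced — it is the unique ``leftmost packing'' of the correct number of tokens (tokens on $s_1,s_3,s_5,\dots$ when $u\notin I'$, shifted appropriately when $u\in I'$). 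Hence starting from $I'$ we can reach this canonical configuration $C_{uv}$, and likewise starting from $J'$ we can reach the same $C_{uv}$; reversing the second sequence and concatenating gives a \TS-sequence from $I'\cap S_{uv}$ to $J'\cap S_{uv}$. Each such move is a legal move in $G_t$ because it touches only internal vertices of one gadget, whose external neighbours $u,v$ have a status common to $I'$ and $J'$ and unchanged throughout.

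Finally, I would assemble the global sequence: enumerate the edges $e_1,\dots,e_m$ of $G$, and for $i=1,\dots,m$ perform the \TS-sequence that fixes gadget $S_{e_i}$ as above; since these sequences operate on pairwise disjoint, non-adjacent vertex sets and never move a vertex of $V$, the independent-set property is maintained at every step and the earlier gadgets stay corrected. After processing all $m$ gadgets the configuration equals $J'$ on every $S_{uv}$ and on $V$, i.e.\ it equals $J'$. The main (and only) thing to be careful about is the bookkeeping in the canonical-form argument — making sure the ``no left-move possible'' configuration really is unique given the token count and the fixed endpoint statuses, and that each left-move is individually valid in $G_t$ — but this is essentially already contained in the proof of Claim~\ref{cl:nr-subdiv-tokens}, so no genuine obstacle is expected.
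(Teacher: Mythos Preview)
Your proposal is correct and follows essentially the same approach as the paper: reduce to one gadget $S_{uv}$ at a time and use left-moves to drive both $I'\cap S_{uv}$ and $J'\cap S_{uv}$ to the unique leftmost packing, then concatenate. The only difference is cosmetic---the paper splits into cases on $|\{u,v\}\cap T|$ and observes that when exactly one endpoint lies in $T$ the packing is already forced, whereas you handle all edges uniformly via the canonical-form argument.
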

\begin{proof}
	Let $uv$ be an edge of $G$ and let $T := I' \cap V = J' \cap V$.
	If $\{u, v\} \subseteq T$ we have $I' \cap S_{uv} = J' \cap S_{uv} =
	\{s_{uv}^2, s_{uv}^4, \ldots, s_{uv}^{t-2}\}$ by Claim \ref{cl:nr-subdiv-tokens}. If
	$|\{u,v\} \cap T| = 1$, suppose without loss of generality that $u
	\notin T$. Then Claim \ref{cl:nr-subdiv-tokens} ensures that $I' \cap
	S_{uv} = J' \cap S_{uv} = \{s_{uv}^1, s_{uv}^{3} \ldots, s_{uv}^{t-1}\}$. Finally, if
	$\{u,v\} \cap T = \emptyset$, Claim \ref{cl:nr-subdiv-tokens} ensures
	that $|T| = t/2$. By applying any possible valid left-move on
	$S_{uv}$ for both $I'$ and $J'$ we obtain independent sets $I''$ and
	$J''$, respectively, such that $I'' \cap S_{uv} = J'' \cap S_{uv} =
	\{s_{uv}^1, s_{uv}^3, \ldots, s_{uv}^{t-1}\}$. Notice that $T = I'' \cap
	V = J'' \cap V$. It is then sufficient to repeat this process for
	any edge of $G$ that satisfies $\{u,v\} \cap T = \emptyset$ to obtain a
	\TS-reconfiguration sequence from $I'$ to $J'$. 
\end{proof}

The following claim, which ensures that a single reconfiguration step in $G$ can be adapted in $G_t$, will immediately imply the ``if''-direction of Theorem~\ref{thm:alex-reconf}. Indeed, it suffices to apply the following claim to each step of the transformation to get the conclusion.

\begin{claim}
    Let $I_1$, $I_2$ be two maximum independent sets of $G$ such that $I_1 \TSA I_2$ and let $I'_1$ and $I'_2$ be the extensions of $I_1$ and $I_2$ in $G_t$. Then $I'_1 \TSeq I'_2$.
\end{claim}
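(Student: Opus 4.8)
The plan is to localize the single token move from $I_1$ to $I_2$ to the one edge of $G$ where the action happens, and then patch the subdivision gadget on that edge while leaving everything else untouched.

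First I would set up notation: since $I_1 \TSA I_2$, there is an edge $uv$ of $G$ with $u \in I_1 \setminus I_2$ and $v \in I_2 \setminus I_1$, and $I_1 \setminus \{u\} = I_2 \setminus \{v\}$. Call this common set $T_0$; note $T_0 \subseteq V$. Because $u$ had a token in $I_1$, $v$ did not, so in the extension $I'_1$ the gadget $S_{uv}$ carries the ``$u \in I', v \notin I'$'' configuration, which by the definition of the extension consists of $t/2$ tokens on the internal vertices; symmetrically $I'_2$ has the ``$v \in I', u \notin I'$'' configuration on $S_{uv}$, also with $t/2$ tokens, and the two configurations are mirror images of each other along the path. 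On every other edge $e = xy$ of $G$ incident to neither $u$ nor $v$, the two extensions agree on $S_e$; on an edge incident to $u$ (but, since $I_1$ is independent, not to any vertex of $T_0$) the $u$-endpoint is occupied in $I'_1$ and unoccupied in $I'_2$, and vice versa for edges incident to $v$. So $I'_1$ and $I'_2$ differ on $V$ only in the pair $\{u,v\}$, and on the gadgets incident to $u$, $v$, or the edge $uv$ itself.

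The key step is an explicit \TS-reconfiguration that slides the token off $u$, pushes it across $S_{uv}$ to $v$, and rearranges the $t/2$ tokens on $S_{uv}$ along the way. Concretely: starting from $I'_1$, repeatedly apply left-moves oriented toward $v$ on $S_{uv}$ to compactify the $t/2$ tokens against the $u$-end; then slide the token on $u$ onto $s_{uv}^1$ (legal because $s_{uv}^2$ is now empty after the compactification step — one has to order the moves carefully so that the frontmost token advances first), then cascade the tokens one at a time toward $v$ in reverse order, finally landing a token on $v$ and vacating $u$. During this cascade the only vertices ever touched are those of $S_{uv} \cup \{u,v\}$, and since $u$ and $v$ have no other neighbors carrying tokens (all their $G$-neighbors are non-tokens and the adjacent gadgets are in their "empty-endpoint" or "full-endpoint" states consistently), no move is blocked from outside. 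The result of this cascade is some maximum independent set $I'_3$ with $I'_3 \cap V = (T_0 \cup \{v\}) = I'_2 \cap V$. We also have to fix up the gadgets incident to $u$ and to $v$, whose extensions in $I'_2$ differ from their $I'_1$-states because the occupancy of $u$ and $v$ flipped: but each such gadget, now that its endpoint occupancy matches that of $I'_2$, can be brought to the $I'_2$-configuration by left-moves, again affecting nothing else. Call the result $I'_4$; then $I'_4 \cap V = I'_2 \cap V$, so by Claim~\ref{cl:same-trace-same-compo} we get $I'_4 \TSeq I'_2$, and concatenating gives $I'_1 \TSeq I'_2$.

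The main obstacle I expect is the bookkeeping in the cascade across $S_{uv}$: one must exhibit a valid ordering of the left-moves (and the single "board" move of the $u$-token) so that each move lands on a genuinely empty, non-adjacent vertex, using that $t$ is even and that there are exactly $t/2$ tokens so there is always "slack" of one vacant internal vertex to absorb the shift — essentially the same parity argument as in Claim~\ref{cl:nr-subdiv-tokens}. A secondary subtlety is making sure the patch-up of the gadgets hanging off $u$ and $v$ is itself a legal \TS-sequence and can be done independently of $S_{uv}$; this follows because these gadgets are vertex-disjoint and their endpoints in $V$ are already in their final state, so the argument of Claim~\ref{cl:same-trace-same-compo} applies verbatim to each of them. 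Everything outside the edges touching $\{u,v\}$ never moves, so correctness is immediate there.
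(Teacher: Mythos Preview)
Your overall plan is the paper's: localize to the edge $uv$, shuttle the token across $S_{uv}$, and clean up with Claim~\ref{cl:same-trace-same-compo}. But the execution has a real gap in the order of operations. You assert that during the cascade ``$u$ and $v$ have no other neighbors carrying tokens'' so ``no move is blocked from outside''. In $G_t$ the neighbours of $v$ are precisely the vertices $s_{vw}^1$ for the edges $vw$ of $G$. For $w \in N_G(v)\setminus\{u\}$ we have $v,w \notin I_1$ (since $I_2 = I_1 - u + v$ is independent), and the canonical extension places $t/2$ tokens on the $t$ internal vertices of $S_{vw}$; any such independent set of size $t/2$ on a path of $t$ vertices must occupy one of the two extreme vertices $s_{vw}^1$, $s_{vw}^t$. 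Nothing forces it to be the $w$-side, so $s_{vw}^1$ may well be occupied in $I'_1$, and then the final slide onto $v$ is illegal. Your ``patch-up of the gadgets hanging off $u$ and $v$'' comes only \emph{after} the cascade, which is too late.

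The fix is exactly what the paper does: \emph{before} touching $S_{uv}$, use Claim~\ref{cl:same-trace-same-compo} to shift every $S_{vw}$ with $w\in N_G(v)\setminus\{u\}$ to the configuration $\{s_{vw}^2,s_{vw}^4,\ldots,s_{vw}^t\}$, vacating all $G_t$-neighbours of $v$ other than $s_{uv}^t$. Then the cascade on $S_{uv}$ goes through. Note also that your description of the cascade itself is inconsistent: with $u$ occupied the tokens on $S_{uv}$ already sit on $s_{uv}^2,s_{uv}^4,\ldots,s_{uv}^t$, so $s_{uv}^2$ is \emph{not} empty and you cannot move $u$ to $s_{uv}^1$ first. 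The correct order (which the paper uses) is to slide $s_{uv}^t$ to $v$, then $s_{uv}^i$ to $s_{uv}^{i+1}$ for the occupied $i$ in decreasing order, and only then $u$ to $s_{uv}^1$.
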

\begin{proof}
	Since $I_1 \TSA I_2$ let $e=uv$ be the edge such that $I_1 \bigtriangleup I_2 = \{u, v\}$ with $u \in I_1$ and $v \in I_2$.
    We show that the exists a reconfiguration sequence $I'_1 \TSeq I'_2$ in $G_t$. 
    
    First note that, except for $u$, there is no token on $w$ in $I_1'$ for every neighbor $w$ of $v$ distinct from $u$ (since $I_2 = I_1 + v - u$ is an independent set). So by Claim~\ref{cl:same-trace-same-compo}, $I'' = (I'_1 \setminus S_{vw}) \cup \{s_{vw}^2, s_{vw}^4, \ldots s_{vw}^t\}$ is a maximum independent set of $G_t$ that can be reached from $I_1'$. So, for every $w \in N(v) \setminus u$, we can assume that there is no token on $s_{vw}^1$. Let us denote by $I_1''$ the resulting independent set. We can slide in $I_1''$ the token $s_{uv}^t$ of $I''_1$ to $v$.
    Finally, it suffices to slide the token from $s_{uv}^i$ to $s_{uv}^{i+1}$ for every $i$ such that $s_{uv}^i \in I'_1$ (in decreasing order) and then slide the token from $u$ to $s_{uv}^1$ in order to obtain $I'_2$ from $I'_1$. We conclude that $I'_1 \TSeq I'_2$.
\end{proof}

Let us now prove the converse direction of Theorem~\ref{thm:alex-reconf}. We start with the following lemma:

\begin{claim}
    \label{cl:3-path-trace}
    Let $I'$ be a maximum independent set of $G_t$. Then there are no three vertices $u, v, w \in I' \cap V$, such that $uv, vw \in E(G)$.
\end{claim}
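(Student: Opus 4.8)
The plan is to argue by contradiction, showing that if $I'$ is a maximum independent set of $G_t$ whose trace on $V$ contains three vertices $u, v, w$ with $uv, vw \in E(G)$, then $I'$ is not in fact maximum — contradicting our hypothesis. Since the extension of any maximum independent set of $G$ is maximum in $G_t$ (Theorem~\ref{thm:alex}), it suffices to exhibit an independent set of $G_t$ strictly larger than $I'$, or equivalently to directly count tokens and compare against the bound $\alpha(G_t) = \alpha(G) + t|E(G)|/2$ from Theorem~\ref{thm:alex}.

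The key step is to use Claim~\ref{cl:nr-subdiv-tokens} to count $|I'|$ precisely in terms of the trace $T := I' \cap V$. For each edge $e = xy$ of $G$, Claim~\ref{cl:nr-subdiv-tokens} tells us that $|I' \cap S_{xy}|$ equals $(t-2)/2$ if both endpoints of $e$ lie in $T$, and $t/2$ otherwise. Writing $m_0$ for the number of edges of $G$ with both endpoints in $T$, we therefore get
\[
|I'| = |T| + \frac{t}{2}\bigl(|E(G)| - m_0\bigr) + \frac{t-2}{2}\,m_0 = |T| + \frac{t}{2}|E(G)| - m_0.
\]
Now I would compare this with the maximum size $\alpha(G) + t|E(G)|/2$: maximality of $I'$ forces $|T| - m_0 \geq \alpha(G)$, i.e. $|T| - m_0 = \alpha(G)$ since of course $|T| - m_0 \le \alpha(G)$ because $T$ minus one endpoint of each of the $m_0$ internal edges is an independent set of $G$ of size $|T| - m_0$. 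Hence $T$ must be chosen so that $T$ with one vertex removed per ``bad'' edge is a maximum independent set of $G$, and $m_0$ is as small as possible for that $T$.

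The remaining observation — which is where the three-path assumption bites — is that if $T$ contains a $P_3$, say $u, v, w$ with $uv, vw \in E(G)$, then $T' := T \setminus \{v\}$ is still independent-completable in a cheaper way: removing $v$ kills at least the two edges $uv$ and $vw$ (and any other edges of $G$ incident to $v$ and another vertex of $T$), so $T'$ has $|T'| = |T| - 1$ but its number of internal edges $m_0'$ satisfies $m_0' \le m_0 - 2$. Then a maximum independent set built from $T'$ in $G_t$ (again via Claim~\ref{cl:nr-subdiv-tokens}, after suitably redistributing tokens on the subdivided edges around $v$) would have size $|T'| + t|E(G)|/2 - m_0' \ge |T| - 1 + t|E(G)|/2 - (m_0 - 2) = |I'| + 1$, contradicting maximality of $I'$. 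I expect the main obstacle to be phrased carefully enough: one must note that for the edges incident to $v$ whose other endpoint is \emph{not} in $T$, one must rebalance tokens on the corresponding $S_{v\cdot}$ from $t/2$ down — wait, actually removing $v$ from the trace can only \emph{increase} the token count on those subdivided edges (from $t/2$ to $t/2$, no change, since the relevant count in Claim~\ref{cl:nr-subdiv-tokens} only drops when \emph{both} endpoints are in $T$). So the net effect is exactly $\Delta|I'| = -1 + m_0 - m_0' \ge -1 + 2 = 1 > 0$, giving the contradiction cleanly. The one genuine subtlety to check is that after changing the trace one can actually realize the claimed token counts as an independent set of $G_t$ — but this is immediate by placing tokens at the canonical positions ($s^2, s^4, \dots$ or $s^1, s^3, \dots$) on each subdivided edge as in the proof of Claim~\ref{cl:same-trace-same-compo}.
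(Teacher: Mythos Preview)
Your proof is correct and takes a genuinely different route from the paper. The paper argues locally: using Claim~\ref{cl:same-trace-same-compo} it slides tokens on $S_{uv}$ and $S_{vw}$ into specific positions, then slides the token on $v$ off into $S_{uv}$, exhibiting an unsaturated vertex $s_{vw}^1$ and hence contradicting maximality of $I'$. Your argument is a global count: via Claim~\ref{cl:nr-subdiv-tokens} you obtain the exact formula $|I'| = |T| + t|E(G)|/2 - m_0$ (with $T = I' \cap V$ and $m_0$ the number of $G$-edges inside $T$), then observe that deleting the middle vertex $v$ of the $P_3$ from $T$ decreases $|T|$ by one but $m_0$ by at least two, so the canonical extension of $T \setminus \{v\}$ is a strictly larger independent set of $G_t$. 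Your approach avoids any sliding and in fact simultaneously yields the identity $|T| - m_0 = \alpha(G)$ that the paper proves separately afterwards as Claim~\ref{cl:size-trace}; the paper's approach, by contrast, stays in the reconfiguration idiom and reuses the sliding machinery already set up. One minor phrasing point: when you say removing one endpoint per edge leaves ``an independent set of size $|T| - m_0$'' you mean size \emph{at least} $|T|-m_0$ (endpoints may coincide), though as you note this direction is not actually needed once you use $|I'| = \alpha(G_t)$ directly.
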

\begin{proof}
    Assume for a contradiction that there are three vertices $u, v, w \in I' \cap V$ such that $uv, vw \in E(G)$.
    By Claim~\ref{cl:nr-subdiv-tokens} and~\ref{cl:same-trace-same-compo}, we can transform $I'$ into an independent $J$ such that $J \cap S_{uv} = \{s_{uv}^2, s_{uv}^4, \ldots, s_{uv}^{t-2}\}$ and $J \cap S_{vw} = \{ s_{vw}^3, s_{vw}^5, \ldots, s_{vw}^{t-1}\}$. 
    But then we can slide the token on $v$ to $s_{uv}^t$ and obtain an independent $I''$ such that $s_{vw}^{1}$ has no token in its neighborhood in $G_t$, which contradicts the maximality of $I'$. 
\end{proof}

A graph $G'$ is an induced subgraph of $G$ if there exists a subset of vertices $V' \subseteq V$ such that the graph with vertex $V'$ where $uv$ is an edge if and only if $uv$ is an edge of $G$.
Let $I'$ be a maximum independent set of $G_t$. By the previous claim, the graph $G[I' \cap V]$ consists of isolated vertices and edges. Let $v(I')$ (resp, $e(I')$) denote the number of isolated vertices (resp., isolated edges) of $G[I' \cap V]$.
\begin{claim}\label{cl:size-trace}
    $v(I') + e(I') = \alpha(G)$.
\end{claim}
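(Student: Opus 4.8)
The plan is a straightforward counting argument: partition the tokens of $I'$ into those lying on $V$ and those lying on subdivision vertices, count each part, and then compare with $\alpha(G_t)$ via Theorem~\ref{thm:alex}. Write $T := I' \cap V$. By Claim~\ref{cl:3-path-trace}, $G[T]$ contains no path on three vertices, so every vertex of $G[T]$ has degree at most one; hence $G[T]$ is the disjoint union of $v(I')$ isolated vertices and $e(I')$ isolated edges, and in particular $|T| = v(I') + 2\,e(I')$.

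Next I would count the tokens of $I'$ lying on the subdivision vertices, i.e.\ on $\bigcup_{uv \in E(G)} S_{uv}$. The key observation is that the edges $uv \in E(G)$ with $\{u,v\} \subseteq I'$ are exactly the edges of $G[T]$, and by the previous paragraph each of them is an isolated edge of $G[T]$; so there are precisely $e(I')$ such edges. Applying Claim~\ref{cl:nr-subdiv-tokens} to each edge of $G$ then gives
\[
\sum_{uv \in E(G)} |I' \cap S_{uv}| \;=\; e(I')\cdot\frac{t-2}{2} \;+\; \bigl(|E(G)| - e(I')\bigr)\cdot\frac{t}{2} \;=\; \frac{t\,|E(G)|}{2} - e(I').
\]

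Since every vertex of $G_t$ lies either in $V$ or in exactly one set $S_{uv}$, summing the two counts yields
\[
|I'| \;=\; |T| + \frac{t\,|E(G)|}{2} - e(I') \;=\; v(I') + 2\,e(I') + \frac{t\,|E(G)|}{2} - e(I') \;=\; v(I') + e(I') + \frac{t\,|E(G)|}{2}.
\]
Finally, $I'$ is a maximum independent set of $G_t$, so by Theorem~\ref{thm:alex} we have $|I'| = \alpha(G) + t\,|E(G)|/2$; comparing the two expressions for $|I'|$ gives $v(I') + e(I') = \alpha(G)$. There is essentially no hard step here; the only point needing care is the identification of the edges with both endpoints in $I'$ with the isolated edges of $G[T]$, which is immediate from Claim~\ref{cl:3-path-trace} (together with the fact that the edge set of $G[T]$ is precisely $\{uv \in E(G) : u,v \in T\}$).
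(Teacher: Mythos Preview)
Your proof is correct and follows essentially the same approach as the paper: both partition the tokens of $I'$ into those on $V$ and those on the subdivision vertices, apply Claim~\ref{cl:nr-subdiv-tokens} to count the latter (distinguishing the $e(I')$ edges with both endpoints in $I'$ from the rest), and compare the resulting total with $\alpha(G_t) = \alpha(G) + t\,|E(G)|/2$ from Theorem~\ref{thm:alex}. Your write-up is in fact slightly more explicit than the paper's in justifying that the edges of $G$ with both endpoints in $I'$ are precisely the $e(I')$ isolated edges of $G[T]$.
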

\begin{proof}
    Notice that since $I'$ is a maximum independent set of $G_t$ it has size $\alpha(G) + t\cdot |E(G)| / 2$ by Theorem~\ref{thm:alex}.
    Let $uv$ be an edge of $G$. By Claim \ref{cl:nr-subdiv-tokens}, we have $|S_{uv} \cap I'| = \frac{t-2}{2}$ if $\{u,v\} \subseteq I'$ 
    and $|S_{uv} \cap I'| = \frac{t}{2}$ otherwise. It follows that
    \begin{align*}
        \alpha(G) + t\cdot |E(G)| / 2 &= |I'| \\
             &= v(I') + \left(\sum_{\substack{uv \in E(G) \, :\\  |\{ u, v \} \cap I' | \leq 1}} \frac{t}{2} \right) + 2e(I') + \left( \sum_{\substack{uv \in E(G) \\ \,:\, u,v \in I'}} \frac{t-2}{2} \right) \\
             &= v(I') + e(I') + t\cdot |E(G)| / 2
    \end{align*}
    from which we conclude that $v(I') + e(I') = \alpha(G)$.
\end{proof}

To complete the proof of Theorem~\ref{thm:alex}, consider a \TS-reconfiguration sequence $I'_1 \TSA I'_2 \TSA \dots \TSA I'_\ell$ of maximum independent sets of $G_t$ such that $I' = I'_1$ and $J' = I_\ell$, where $I'$ (resp., $J'$) is the extension of $I$ (resp. $J$) in $G_t$. Notice that $G[I' \cap V]$ and $G[J' \cap V]$ contain no edge. And by Claim~\ref{cl:size-trace}, we can associate to each independent set $I'$ an induced subgraph consisting of $\alpha(G)$ isolated vertices or edges. 
We obtain from $I'$ a maximum independent set $I_i$ of $G$ by picking all isolated vertices and exactly one endpoint of each edge of $G[I'_i \cap V]$, say the first vertex with respect to an arbitrary fixed ordering of $V$. 
To conclude the proof it is enough to show that, for every $i \in \{1, 2, \ldots, \ell-1\}$, we have $I_i \TSeq I_{i+1}$. 
\begin{claim}
    \label{cl:aleks:G_t}
        $|(I'_i \cap V) \bigtriangleup (I'_{i+1} \cap V)| \leq 1 \enspace.$
\end{claim}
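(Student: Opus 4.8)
The plan is to argue directly from the definition of a single token-sliding step in $G_t$. Since $I'_i \TSA I'_{i+1}$, there is exactly one vertex $a \in I'_i \setminus I'_{i+1}$ and exactly one vertex $b \in I'_{i+1} \setminus I'_i$, and moreover $ab \in E(G_t)$. Every vertex of $G_t$ other than $a$ and $b$ belongs to $I'_i$ if and only if it belongs to $I'_{i+1}$, so intersecting with $V$ gives $(I'_i \cap V) \bigtriangleup (I'_{i+1} \cap V) \subseteq \{a,b\} \cap V$. Hence it suffices to show that $a$ and $b$ cannot both lie in $V$.

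This is exactly where the structure of $G_t$ is used. Recall that $G_t$ is obtained from $G$ by replacing each edge $uv$ by a copy of $P_{t+2}$, whose internal vertices form the set $S_{uv}$, and that $t$ is even with $t \geq 1$, so in fact $t \geq 2$. Therefore every edge of $G_t$ has at least one endpoint in some $S_{uv}$; equivalently, no edge of $G_t$ joins two vertices of $V$. Applying this to the edge $ab$ yields $|\{a,b\} \cap V| \leq 1$, and combined with the inclusion from the previous paragraph this gives $|(I'_i \cap V) \bigtriangleup (I'_{i+1} \cap V)| \leq 1$, as claimed.

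I do not anticipate a genuine obstacle here: the statement is essentially the observation that one slide in $G_t$ changes the ``trace'' of the independent set on $V$ at a single vertex, because the moved token either stays among the subdivision vertices (trace unchanged) or crosses exactly one endpoint of a subdivided edge (trace changes at that one endpoint of $V$). The only point requiring care is ruling out a slide between two vertices of $V$, which is handled by $t \geq 2$; maximality of the $I'_j$ is not even needed for this particular claim. This claim will then be combined with Claims~\ref{cl:nr-subdiv-tokens}--\ref{cl:size-trace} to conclude that $I_i \TSeq I_{i+1}$ in $G$ for every $i$, completing the converse direction of Theorem~\ref{thm:alex-reconf}.
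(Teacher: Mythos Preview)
Your proof is correct and follows essentially the same approach as the paper: both use that a single slide in $G_t$ moves along an edge $ab$, and that no edge of $G_t$ has both endpoints in $V$ because $t \geq 2$. The paper spells out the three cases $\{a,b\}\cap V = \emptyset$, $\{a\}$, $\{b\}$ explicitly, whereas you compress this into the inclusion $(I'_i \cap V) \bigtriangleup (I'_{i+1} \cap V) \subseteq \{a,b\} \cap V$, but the argument is the same.
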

\begin{proof}
    Since $I'_i \TSA I'_{i+1}$ there exists an edge $xy$ such that $I'_{i+1}= (I_i \setminus x) \cup y$.  Note that at most one vertex in $x,y \in V$ since no edge of $G_t$ has both endpoints in $V$. We distinguish three cases:
    \begin{enumerate}
        \item $x, y \notin V$. Then $\{x, y\} \subseteq S_{vw}$ for some edge $vw$ of $G$ and therefore $|(I'_i \cap V) \bigtriangleup (I'_{i+1} \cap V)| = 0 \leq 1$.
        \item $x \in V$. Then $y \in S_{vw}$ for some edge $vw$ of $G$ and therefore $|(I'_i \cap V) \bigtriangleup (I'_{i+1} \cap V| = 1$.
        \item $y \in V$. Then $x \in S_{vw}$ for some edge $vw$ of $G$ and therefore $|(I'_i \cap V) \bigtriangleup (I'_{i+1} \cap V| = 1$.
    \end{enumerate}
\end{proof}

By claims~\ref{cl:size-trace} and~\ref{cl:aleks:G_t}, we just have to distinguish the following cases to complete the proof of the converse direction:
\begin{enumerate}
    \item $e(I'_i) = e(I'_{i+1})$. In this case we have $I_i = I_{i+1}$ and we are done.
    \item $e(I'_i) = e(I'_{i+1})+1$. Let $uv$ be the edge of $G$ that is not present in $G[I'_{i+1} \cap V]$ and assume without loss of generality that $u \in I_{i+1}$. It follows that $I'_{i+1}$ is obtained from $I'_i$ by sliding a token from $v$ to $S_{uv}^p$. Then either $u \in I_{i}$, so we have $I_i = I_{i+1}$ or $v \in I_{i}$ and we have $I_i \TSA I_{i+1}$.
    \item $e(I'_i) = e(I'_{i+1})-1$. The same argument can be applied by exchanging $I'_i$ and $I'_{i+1}$.
\end{enumerate}
So, for every $1 \leq i < \ell$, we have $I_i \TSeq I_{i+1}$. And then $I \TSeq J$ which completes the proof.

\section{Irrelevant vertices in fork free graphs}\label{sec:claw-three}
\label{sec:irrelevant}

In the following, let $G = (V, E)$ be a fork-free graph and $I$ and $J$ be two independent sets of $G$. We will identify vertices of $G$ that are irrelevant in any transformation from $I$ to $J$ in the sense that they cannot contain a token. We show that vertices adjacent to at least two tokens are irrelevant if they satisfy a certain neighborhood-condition. As a consequence, all vertices adjacent to at least three tokens are useless and can be safely deleted.
We say that a set $X \subseteq V$ of vertices is \emph{locally blocked} (with respect to $I$) if any vertex in $X$ is adjacent to at least two tokens in $I$. 
We are interested in locally blocked vertices that remain locally blocked under any token move. Let $X \subseteq V$ be a locally blocked subset of vertices. We say that the set $B_X := N(X) \cap I$ is \emph{$X$-blocking}: no token can slide from $B_X$ to $X$, since there is another token in $B_X$ blocking it. 

Let $u \in V$ and $X \subseteq V - u$. A vertex $v$ of $G - (X + u)$ is an \emph{$X$-twin of $u$} if $uv$ is an edge and $N(v) \cap X = N(u) \cap X$.
A set $X \subseteq V$ of vertices \emph{permanently blocked} if it is locally blocked (with respect to $I$) and for each independent set $I'$ such that $I \TSeq I$ and for each vertex $u$ in the blocking set $B_X = N(X) \cap I'$, any $I'$-free neighbor  $v$ of $u$ is an $X$-twin of $u$. (It means that if we move, in $I'$ a token in the neighborhood of $X$ from $u$ to another vertex $v$ then the neighborhood in $u$ and $v$ in $X$ are the same).
We first introduce the following reduction rule, which states that permanently blocked vertices can be safely removed from the graph.

\begin{description}\label{cl:Z}
    \item[Reduction Rule Z] Let $X \subseteq V$ be permanently blocked. Return $(G-X, I, J)$ if $X \cap J = \emptyset$ and a trivial \NO-instance otherwise.
\end{description}

A reduction rule is \emph{safe} if applying it yields an equivalent instance.

\begin{lemma}\label{lem:ruleZ}
    Reduction Rule Z is safe.
\end{lemma}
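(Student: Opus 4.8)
The plan is to show that the instance $(G, I, J)$ is a \YES-instance if and only if $(G-X, I, J)$ is (in the case $X \cap J = \emptyset$), and that $(G, I, J)$ is trivially \NO\ when $X \cap J \neq \emptyset$. The latter is immediate: since $X$ is locally blocked (hence $X \cap I = \emptyset$ because every vertex of $X$ is adjacent to a token of $I$), if some vertex of $X$ must carry a token in $J$ then there is no hope — but we actually need the stronger statement that no independent set reachable from $I$ places a token on any vertex of $X$. This is exactly where permanent blocking is used: I will prove by induction on the length of a \TS-reconfiguration sequence $I = I_0 \TSA I_1 \TSA \cdots \TSA I_\ell$ starting from $I$ that $I_\ell \cap X = \emptyset$ and that $X$ remains locally blocked with respect to $I_\ell$. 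The base case is the hypothesis that $X$ is locally blocked with respect to $I$. For the inductive step, suppose $I_i \cap X = \emptyset$ and $X$ is locally blocked with respect to $I_i$, and let $I_{i+1}$ be obtained by sliding a token from $u$ to $v$ along an edge $uv$. First, $v \notin X$: indeed $u \in I_i$, so if $v \in X$ then $u \in N(X) \cap I_i = B_X$; but $X$ locally blocked means $v$ has a second token $u' \in I_i$ with $u' \neq u$, and $u'$ would be adjacent to $v$ — contradicting that $I_{i+1} = I_i - u + v$ is independent (since $u' \in I_{i+1}$ too). Hence $I_{i+1} \cap X = \emptyset$. Second, I must check $X$ stays locally blocked: take any $x \in X$; it had two tokens of $I_i$ in its neighborhood, and the only token that moved is the one on $u$. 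If $u \notin N(x)$, nothing changes. If $u \in N(x)$, then $u \in B_X$, and $v$ is $I_i$-free (it receives the token), so by the definition of permanent blocking $v$ is an $X$-twin of $u$, meaning $N(v) \cap X = N(u) \cap X \ni x$; thus $v \in N(x) \cap I_{i+1}$ replaces $u$, and $x$ still has two tokens of $I_{i+1}$ in its neighborhood. This completes the induction.

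With this invariant in hand, the trivial-\NO\ case is settled, so assume $X \cap J = \emptyset$. For the equivalence: if $I \TSeq J$ with respect to $G$, the above shows that every independent set in the sequence avoids $X$ entirely, and since removing vertices of $G$ not used by any set in the sequence does not affect the validity of any slide (a slide along an edge $uv$ with $u, v \notin X$ is still a valid slide in $G-X$, and independence in $G-X$ is inherited from independence in $G$), the same sequence witnesses $I \TSeq J$ with respect to $G-X$. Conversely, if $I \TSeq J$ with respect to $G-X$, then since $G-X$ is an induced subgraph of $G$, any independent set of $G-X$ is independent in $G$ and any edge of $G-X$ is an edge of $G$; hence the same sequence is a valid \TS-reconfiguration sequence in $G$. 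This direction needs essentially nothing — it is monotone under taking induced subgraphs.

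The main obstacle is the forward direction of the equivalence, and specifically getting the invariant right: one must be careful that "locally blocked" is preserved, not merely that $X$ stays token-free. The subtlety is that a token adjacent to $x \in X$ could move to another vertex still adjacent to $x$ but \emph{not} one of the two originally-counted tokens, or could move to a vertex adjacent to $x$ from elsewhere — the $X$-twin condition in the definition of permanent blocking is precisely what rules out a token adjacent to $x$ drifting away and leaving $x$ with only one neighbor-token. I would double check that the definition quantifies over \emph{all} $I'$ reachable from $I$ (it does: "for each independent set $I'$ such that $I \TSeq I'$"), which is what makes the induction go through without needing to re-establish permanent blocking at each step — only local blockedness needs to be carried along, and permanent blockedness of $X$ (a property of $G$ and $I$, quantifying over all reachable $I'$) supplies the twin condition at every step for free.
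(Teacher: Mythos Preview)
Your proof is correct and follows essentially the same approach as the paper: establish by induction along any \TS-reconfiguration sequence from $I$ that $X$ remains locally blocked (using the $X$-twin condition from the definition of permanently blocked), conclude that no reachable independent set meets $X$, and hence the sequence lives entirely in $G-X$. The paper compresses this into a single sentence, whereas you spell out the induction and the two sub-cases ($v \notin X$ and preservation of local blocking) explicitly; your version is strictly more detailed but not different in substance.
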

\begin{proof}
    First notice that for any independent set $I'$ such that $I \TSeq I'$, the set $X$ is locally blocked with respect to $I'$, since whenever a token $u \in B_X$ moves to an $I'$-free neighbor $v$ then $v$ is an $X$-neighbor of $u$. In particular, no independent set reachable from $I$ contains a token in $X$.

    If $J \cap X \neq \emptyset$ then $(G, I, J)$ is a \NO-instance, since for any independent set reachable from $I$, the set $X$ is locally blocked. 
    Suppose that $J \cap X = \emptyset$ and that $I \TSeq J$ with respect to $G$. Since no independent set $I'$ reachable from $I$ in $G$ contains a token in $X$, we have $I \TSeq J$ with respect to $G$ if and only if $I \TSeq J$ with respect to $G - X$.
\end{proof}

At this point it is not quite clear how to use Reduction Rule Z algorithmically. We will later prove that certain subsets of vertices are permanently blocked (and can be found in polynomial time). For now, we show that any vertex of $G$ adjacent to at least three tokens of $I$ is permanently blocked and can thus be safely deleted according to Lemma~\ref{lem:ruleZ}. 

\begin{lemma}\label{lem:ktoken}
    Let $c \in V$ be a vertex that is adjacent to exactly $k \geq 3$ tokens in $I$ and let $I'$ be an independent set such that $I' \TSeq I$. Then $c$ is adjacent to exactly $k$ tokens of $I'$.
\end{lemma}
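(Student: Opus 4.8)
The plan is to prove the statement by induction on the length of a token-sliding sequence witnessing $I' \TSeq I$, using fork-freeness to forbid any single move that would change the number of tokens in $N(c)$. Since $\TSeq$ is symmetric, I would fix a sequence $I = I_0 \TSA I_1 \TSA \cdots \TSA I_m = I'$ and show that if $c$ is adjacent to exactly $k \geq 3$ tokens of $I_i$, then $c$ is adjacent to exactly $k$ tokens of $I_{i+1}$. Consider the move from $I_i$ to $I_{i+1}$, sliding a token from $u \in I_i$ to $v \in I_{i+1}$ along the edge $uv$. Two easy preliminary observations: first, $c \notin \{u,v\}$ (else $c$ lies in an independent set yet is adjacent to one of the at least two tokens of $N(c)$ surviving the move); second, $|N(c) \cap I_{i+1}|$ differs from $|N(c) \cap I_i| = k$ only when exactly one of $u, v$ lies in $N(c)$, and then it differs by exactly one.

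The core of the argument is a local claim: \emph{if a token slides from $x$ to $y$ transforming an independent set $A$ into $A' = A - x + y$, and $c$ has at least three neighbours in $A$ with $x \in N(c)$ and $y \notin N(c)$, then $G$ has an induced fork.} To prove it, I would pick two distinct tokens $a, b \in (N(c) \cap A) \setminus \{x\}$, which exist because $|N(c) \cap A| \geq 3$ and $x \in N(c) \cap A$. The five vertices $c, a, b, x, y$ are pairwise distinct (in particular $y \neq c$, since otherwise $c \in A'$ would have a neighbour in $A - x$), and one checks that the only edges among them are $ca$, $cb$, $cx$, $xy$: the pairs $ab, ax, bx$ are non-edges because $A$ is independent, the pairs $ay, by$ are non-edges because $a, b, y \in A'$ and $A'$ is independent, and $cy$ is a non-edge by hypothesis. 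Thus $G[\{c,a,b,x,y\}]$ is precisely a fork centred at $c$ (with subdivided arm $c\,x\,y$ and leaves $a,b$), contradicting that $G$ is fork-free.

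Given the claim, the inductive step is immediate. If the move from $I_i$ to $I_{i+1}$ removed a token from $N(c)$, i.e. $u \in N(c)$ and $v \notin N(c)$, apply the claim with $A = I_i$, $x = u$, $y = v$ (valid since $c$ has $k \geq 3$ neighbours in $I_i$). If instead the move added a token to $N(c)$, i.e. $v \in N(c)$ and $u \notin N(c)$, then $c$ would have $k+1 \geq 4$ neighbours in $I_{i+1}$; since token-sliding is reversible, $I_{i+1} \TSA I_i$ by sliding the token from $v$ to $u$, so applying the claim with $A = I_{i+1}$, $x = v$, $y = u$ again yields an induced fork. Hence neither type of move can occur, so $|N(c) \cap I_{i+1}| = k$, completing the induction.

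The only genuinely non-routine point is isolating the right five vertices and verifying that they induce a fork; the rest is bookkeeping about which independent set each vertex belongs to. The one place to be careful is the reversibility step used to rule out token-additions to $N(c)$: it is here that the hypothesis $k \geq 3$ (rather than merely $k \geq 2$) is essential, because after adding a token to $N(c)$ one still needs at least three neighbouring tokens for the local claim to apply.
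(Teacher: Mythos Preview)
Your proof is correct and follows essentially the same approach as the paper: both arguments pass to the first step of a reconfiguration sequence at which $|N(c)\cap I_i|$ changes, and both exhibit the induced fork $\{c,a,b,x,y\}$ (with center $c$, leaves $a,b$, and subdivided arm $c\,x\,y$) to rule out any such step. Your packaging via a single local claim plus reversibility is slightly cleaner than the paper's separate treatment of the two directions, but the underlying idea is identical. One minor quibble with your closing commentary: the place where $k\geq 3$ is genuinely needed is the \emph{removal} case (you need two surviving tokens $a,b$ in $N(c)\cap I_i\setminus\{x\}$), not the addition case, where $k+1\geq 3$ would already follow from $k\geq 2$.
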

\begin{proof}
    Let $J'$ be an independent set such that $I \TSeq J'$.
    Suppose for a contradiction that $c$ is adjacent to $k' \neq k$ tokens of $J'$. In a reconfiguration sequence from $I$ to $J'$, let $I'$ be the first independent set such that $c$ is adjacent to a number of tokens of $I'$ different from $k$. Suppose first that $|N(c) \cap I'| < k$. Let $x \in N(c)$ be the token that moved to some neighbor $x'$ to obtain $I'$. Since $N(c)$ contains $k \geq 3$ tokens in any independent set preceding $I'$ in the sequence, we have that $I'$ contains at least two tokens $y, z \in N(c)$ different from $x$. Since $x', y, z \in I'$, we have that $x'$ must be adjacent to $c$, otherwise $\{x', x, c, y, z\}$ induces a fork. But then $|N(c) \cap I'| = k$, a contradiction. Now suppose that $|N(c) \cap I'| > k$. Then some token $x' \notin N(c)$ moved to a vertex $x \in N(c)$. Let $y, z \in N(c) \cap I'$ be two tokens different from $x$. But then $\{x', x, c,y,z\}$ induces a fork, a contradiction. 
\end{proof}

Lemma~\ref{lem:ktoken} immediately implies the following sufficient condition for a vertex to be permanently blocked.
\begin{observation}\label{lem:ruleA}
    Let $c \in V$ be a vertex that is adjacent to at least three tokens in $I$. Then $c$ is permanently blocked.
\end{observation}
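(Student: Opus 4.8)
The plan is to apply Lemma~\ref{lem:ktoken} to the singleton set $X = \{c\}$. First I would verify that $\{c\}$ is locally blocked with respect to $I$: this is immediate, since $c$ is adjacent to at least three, hence at least two, tokens of $I$. Fix $k \ge 3$ to be the \emph{exact} number of tokens of $I$ adjacent to $c$, so that Lemma~\ref{lem:ktoken} applies with this value of $k$; in particular $c$ lies in no independent set reachable from $I$.

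Next I would unfold the definition of ``permanently blocked'' for $X = \{c\}$. Let $I'$ be any independent set with $I \TSeq I'$, let $u \in B_X = N(c) \cap I'$, and let $v$ be an $I'$-free neighbor of $u$, i.e.\ a neighbor of $u$ for which $I'' := (I' \setminus \{u\}) \cup \{v\}$ is independent. I must show that $v$ is a $\{c\}$-twin of $u$. Since $uv \in E(G)$ by assumption and $u \in N(c)$, it suffices to prove $v \in N(c)$ together with $v \neq c$, because then $N(v) \cap \{c\} = \{c\} = N(u) \cap \{c\}$ and $v \in G - (X + u)$.

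To prove $v \in N(c)$, observe that $I \TSeq I' \TSA I''$, so $I \TSeq I''$, and Lemma~\ref{lem:ktoken} applies to both $I'$ and $I''$: the vertex $c$ is adjacent to exactly $k$ tokens of $I'$ and to exactly $k$ tokens of $I''$. Since $I''$ is obtained from $I'$ by replacing the token $u \in N(c)$ with $v$, we have $|N(c) \cap I''| = |N(c) \cap I'| - 1 + [\,v \in N(c)\,] = k - 1 + [\,v \in N(c)\,]$, and comparing with $|N(c) \cap I''| = k$ forces $v \in N(c)$. Finally, $v \neq c$: since $c$ has exactly $k \ge 3$ tokens of $I'$ in its neighborhood, $c$ is adjacent to $k - 1 \ge 2$ tokens of $I' \setminus \{u\}$, so $c$ cannot be added to that independent set, whence $v \neq c$. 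Thus $v$ is a $\{c\}$-twin of $u$, $\{c\}$ is permanently blocked, and the observation follows.

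As for difficulty, there is essentially no obstacle here: the statement is a bookkeeping consequence of Lemma~\ref{lem:ktoken}, and the only points requiring care are that $I''$ is genuinely reachable from $I$ (so that the lemma applies to it) and that $v \neq c$, so that $v$ lies in $G - (X + u)$ as demanded by the definition of an $X$-twin.
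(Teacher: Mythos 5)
Your proof is correct and is exactly the derivation the paper has in mind: the paper states the observation as an immediate consequence of Lemma~\ref{lem:ktoken}, and your token-counting argument ($|N(c)\cap I''| = k-1+[\,v\in N(c)\,] = k$ forces $v\in N(c)$, and $k-1\ge 2$ forces $v\neq c$) is the right way to make that immediacy explicit against the definitions of locally and permanently blocked sets.
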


From the safeness of Reduction Rule Z and the previous observation we conclude that the following Reduction Rule is safe.
\begin{description}
    \item[Reduction Rule A.] Let $c \in V$ such that $c$ is adjacent to at least three tokens in $I$. Return $(G-c, I, J)$ if $c \notin J$, otherwise return a trivial \NO-instance.\label{rule:A}
\end{description}

We say that $G$ is $I$-reduced if Reduction Rule A is not applicable with respect to $G$ and $I$. That is, $G$ contains no vertex that is adjacent to at least three $I$-tokens. Notice that by applying Reduction Rule A exhaustively, we may obtain from $G$ in polynomial time a fork-free graph $G'$ that is $I$-reduced and $J$-reduced. The resulting equivalent instance is either $(G', I, J)$ or a trivial \NO-instance. The following observations are consequences  of Lemma~\ref{lem:ktoken}.

\begin{observation}\label{lem:only-reducible}
    If $G$ is $I$-reduced then $G$ is $I'$-reduced for every independent set $I'$ such that $I \TSeq I'$.
\end{observation}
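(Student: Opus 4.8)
The statement is an immediate consequence of Lemma~\ref{lem:ktoken} together with the reversibility of token sliding. The plan is to argue by contraposition: suppose $G$ is not $I'$-reduced for some independent set $I'$ with $I \TSeq I'$, and derive that $G$ is not $I$-reduced either.

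First I would recall that every single $\TS$-step is symmetric, i.e. $I_i \TSA I_{i+1}$ iff $I_{i+1} \TSA I_i$, so that $I \TSeq I'$ implies $I' \TSeq I$. Now, since $G$ is assumed not to be $I'$-reduced, Reduction Rule~A is applicable with respect to $G$ and $I'$, which means there is a vertex $c \in V$ adjacent to some number $k' \geq 3$ of tokens of $I'$; we may take $c$ so that it is adjacent to \emph{exactly} $k' \geq 3$ tokens of $I'$.

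Next I would apply Lemma~\ref{lem:ktoken} with $I'$ playing the role of the base independent set (its hypothesis ``adjacent to exactly $k \geq 3$ tokens'' is met with $k = k'$) and $I$ playing the role of the reachable set (valid because $I' \TSeq I$). The lemma then yields that $c$ is adjacent to exactly $k' \geq 3$ tokens of $I$, so Reduction Rule~A is applicable with respect to $G$ and $I$, contradicting the assumption that $G$ is $I$-reduced. This contradiction establishes the observation.

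There is essentially no substantial obstacle here; the only points requiring a (trivial) word of care are the symmetry of $\TSeq$, which lets us invoke Lemma~\ref{lem:ktoken} in the ``reverse'' direction, and matching the quantifier ``at least three'' in the definition of $I$-reduced with the ``exactly $k \geq 3$'' hypothesis of Lemma~\ref{lem:ktoken}, which is immediate since any vertex adjacent to at least three tokens is adjacent to exactly $k$ of them for some $k \geq 3$.
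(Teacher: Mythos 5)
Your proof is correct and matches the paper's intent exactly: the paper gives no explicit proof, simply stating that the observation is a consequence of Lemma~\ref{lem:ktoken}, and your contrapositive argument (using the symmetry of $\TSeq$ and applying Lemma~\ref{lem:ktoken} with the roles of $I$ and $I'$ swapped) is precisely the intended justification.
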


\begin{observation}\label{lem:at-most-two}
    Let $G$ be $I$-reduced and let $I'$ be and independent set such that $I \TSeq I'$. Then for any vertex $v$ of $G$ we have $|N(v) \cap I'| \leq 2$. 
\end{observation}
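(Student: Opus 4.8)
The plan is to obtain this observation immediately from Observation~\ref{lem:only-reducible} together with the definition of a reduced graph. Recall that $G$ being $I'$-reduced means, by definition, that Reduction Rule A does not apply with respect to $G$ and $I'$; equivalently, that no vertex of $G$ is adjacent to three or more tokens of $I'$. So the statement to be proved is really just an unfolding of the defining property of $I'$-reducedness.

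Concretely, the single step I would carry out is the following: since $G$ is $I$-reduced and $I \TSeq I'$, Observation~\ref{lem:only-reducible} gives that $G$ is $I'$-reduced. Spelling out what ``$I'$-reduced'' means yields exactly that $|N(v) \cap I'| \le 2$ for every vertex $v$ of $G$, which is the claim.

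If one prefers a self-contained argument that does not cite Observation~\ref{lem:only-reducible}, the statement also follows directly from Lemma~\ref{lem:ktoken}. Assume for contradiction that some vertex $v$ has $|N(v) \cap I'| = k' \ge 3$. Apply Lemma~\ref{lem:ktoken} with the vertex $v$ in the role of $c$, with $I'$ in the role of the independent set in whose neighborhood $v$ has $k' \ge 3$ tokens, and with $I$ in the role of the independent set reachable from it; the hypothesis $I \TSeq I'$ is precisely what the lemma requires. The conclusion is that $v$ is adjacent to exactly $k' \ge 3$ tokens of $I$, contradicting the assumption that $G$ is $I$-reduced.

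There is essentially no obstacle here: the observation is a routine bookkeeping consequence of the preceding results. The only point that requires a little care is matching up the roles of $I$ and $I'$ correctly when invoking Lemma~\ref{lem:ktoken}, since that lemma is phrased with the ``later'' independent set reconfiguring to the ``earlier'' one, whereas here we start from the ``earlier'' set $I$.
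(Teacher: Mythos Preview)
Your proposal is correct and matches the paper's approach: the paper simply states that Observations~\ref{lem:only-reducible} and~\ref{lem:at-most-two} are consequences of Lemma~\ref{lem:ktoken}, without spelling out a proof. Your two derivations (via Observation~\ref{lem:only-reducible} or directly from Lemma~\ref{lem:ktoken} with the roles of $I$ and $I'$ swapped, which is fine since $\TSeq$ is symmetric) are exactly the intended unpacking.
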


\section{Maximum independent set reconfiguration in fork-free graphs}
\label{sec:max-isr}

As a warm-up, let us prove the following theorem.
\begin{theorem} \label{thm:misr-fork-me}
    \TSR restricted to maximum independent sets admits a polynomial-time algorithm.
\end{theorem}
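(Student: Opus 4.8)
The plan is to reduce, in polynomial time, to an instance of \TSR on a claw-free graph, for which a polynomial-time algorithm is already known~\cite{hutchison_reconfiguring_2014}. First I would preprocess $(G,I,J)$ by applying Reduction Rule~A exhaustively, with respect to $I$ and, symmetrically, with respect to $J$; this is polynomial and either detects a trivial \NO-instance or yields an equivalent instance, which I again call $(G,I,J)$, in which $G$ is $I$-reduced and $J$-reduced. Since every vertex deleted by Reduction Rule~A is adjacent to a token and therefore lies outside $I\cup J$, the sets $I$ and $J$ are still maximum independent sets of the reduced graph. Recall also that every independent set reachable from a maximum independent set is maximum, and that $G$ stays $I'$-reduced for every such $I'$ by \cref{lem:only-reducible}.

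The core of the argument is the following structural claim: for every maximum independent set $M$ of $G$ with $I \TSeq M$ and every induced claw of $G$ with center $c$, we have $c\notin M$ and $|N(c)\cap M| = 2$. To prove the first part, assume $M$ contains the center $c$ of an induced claw with leaves $a,b,d$; then $a,b,d\notin M$ and are pairwise non-adjacent. If $N(a)\cap M = N(b)\cap M = N(d)\cap M = \{c\}$, then $(M\setminus\{c\})\cup\{a,b,d\}$ would be an independent set of size $|M|+2$, impossible; so without loss of generality $a$ has a neighbor $a'\in M$ with $a'\neq c$, and then $a'\not\sim c$ (both lie in $M$) and $a'\notin\{b,d\}$. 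If $a'$ is adjacent to neither $b$ nor $d$, then $\{a',a,c,b,d\}$ induces a fork, a contradiction; so assume $a'\sim b$, and then $N(a)\cap M = N(b)\cap M = \{c,a'\}$ because $G$ is $M$-reduced. A short case distinction on the token-neighbours of $d$ finishes the argument: if $a'\sim d$ then $N(d)\cap M = \{c,a'\}$ and $(M\setminus\{c,a'\})\cup\{a,b,d\}$ is independent of size $|M|+1$; if $a'\not\sim d$ but $d$ has a token-neighbour $d'\notin\{c,a'\}$, then $d'\not\sim a$ and $d'\not\sim b$ (otherwise $d'$ would lie in $N(a)\cap M$ or $N(b)\cap M = \{c,a'\}$), so $\{d',d,c,a,b\}$ induces a fork; and if $N(d)\cap M = \{c\}$, then again $(M\setminus\{c,a'\})\cup\{a,b,d\}$ is independent of size $|M|+1$. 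Each case contradicts that $M$ is maximum or that $G$ is fork-free, so $c\notin M$. For the second part, since $c\notin M$ and $M$ is maximal, $|N(c)\cap M|\geq 1$; and if $N(c)\cap M = \{e\}$ for a single vertex $e$, then $(M\setminus\{e\})\cup\{c\}$ is a maximum independent set containing the claw center $c$, contradicting the first part. Hence $|N(c)\cap M|\geq 2$, and $|N(c)\cap M|\leq 2$ since $G$ is $M$-reduced. I expect this case analysis to be the main obstacle; the remaining steps are bookkeeping.

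Finally I would let $C$ be the set of centers of all induced claws of $G$, which is computable in polynomial time since $c\in C$ if and only if $N(c)$ contains three pairwise non-adjacent vertices. By the structural claim applied to $M=I$ we have $I\cap C = \emptyset$, and if $J\cap C\neq\emptyset$ we may safely output \NO, because then $J$ (being maximum) cannot be reachable from $I$; so assume $J\cap C = \emptyset$. By the structural claim applied to every maximum independent set reachable from $I$, no token ever occupies a vertex of $C$ along a \TS-reconfiguration sequence starting at $I$, and therefore $I\TSeq J$ in $G$ if and only if $I\TSeq J$ in $G-C$. The graph $G-C$ is claw-free — an induced claw of $G-C$ would be an induced claw of $G$ whose center lies in $C$ — and $I,J$ are still maximum independent sets of $G-C$ of the same size. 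We conclude by running the polynomial-time algorithm for \TSR on claw-free graphs~\cite{hutchison_reconfiguring_2014} on $(G-C,I,J)$; all steps are polynomial. (By the equivalence of \TSR and \TJR on maximum independent sets, the same argument also handles \TJR.)
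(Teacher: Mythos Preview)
Your proposal is correct and follows essentially the same approach as the paper: after exhaustively applying Reduction Rule~A, show that no maximum independent set (reachable from $I$) contains the center of an induced claw, delete all claw centers to obtain an equivalent claw-free instance, and invoke~\cite{hutchison_reconfiguring_2014}. The only difference is organizational: the paper packages the key structural fact as \cref{cl:one-leaf-in-I} (exactly one leaf of any induced claw lies in $I$, hence $c\notin I$), whereas you prove $c\notin M$ directly via a slightly different case analysis and additionally observe $|N(c)\cap M|=2$; the latter is not needed for the reduction, but your argument is sound.
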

To do so, we provide a simple reduction of Maximum Independent Set Reconfiguration in fork-free graphs to the same problem in claw-free graphs, which can be decided in polynomial time \cite{hutchison_reconfiguring_2014}. In the following, let $G = (V, E)$ be a fork-free graph and $I$ and $J$ be two maximum independent sets of $G$. Note that it suffices to consider the token sliding rule only since $I$ and $J$ are maximum.
By \cref{lem:ruleZ} and \cref{lem:ruleA} we may assume that $G$ is both $I$-reduced and $J$-reduced. Therefore, $G$ contains no induced claw that contains three tokens of $I$. Since $I$ is maximum, the following stronger property holds.

\begin{lemma}
    \label{cl:one-leaf-in-I}
    Let $\{c, x, y, z\}$ induce a claw in $G$ with center $c$. Then $|\{x, y, z\} \cap I| = 1$.
\end{lemma}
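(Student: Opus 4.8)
Write $L=\{x,y,z\}$ for the three leaves. Since $G$ is $I$-reduced, no vertex of $G$ is adjacent to three or more tokens of $I$; in particular $c$ is adjacent to at most two vertices of $L$, so $|L\cap I|\le 2$. It therefore remains to rule out $|L\cap I|=2$ and $|L\cap I|=0$. Throughout I will use two tools repeatedly: (a) fork-freeness, applied to five-vertex sets of the shape ``claw plus a pendant on one leaf''; and (b) the maximality of $I$ in the packing form: since $L$ is an independent set with $L\cap I=\emptyset$, the set $(I\setminus N(L))\cup L$ is independent, and comparing sizes with $\alpha(G)=|I|$ gives
\[
|N(L)\cap I|\ \ge\ 3 .
\]

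\textbf{Ruling out $|L\cap I|=2$.} Say $x,y\in I$ and $z\notin I$. Then $c\notin I$ (as $c\sim x\in I$), and since $I$ is maximal, $z$ has a neighbour $w\in I$; note $w\notin\{c,x,y,z\}$ because $z$ is non-adjacent to the other leaves and $c,z\notin I$ while $x,y$ are non-adjacent to $z$. Consider $\{w,z,c,x,y\}$: the only edges are $wz$, $cx$, $cy$, $cz$, together with possibly $wc$, since $x,y,z$ are pairwise non-adjacent and $w$ is non-adjacent to the tokens $x,y$. If $w\not\sim c$, this set induces a fork with center $c$ and pendant $w$ on the leaf $z$, contradicting fork-freeness. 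If $w\sim c$, then $c$ is adjacent to the three tokens $w,x,y$, contradicting that $G$ is $I$-reduced.

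\textbf{Ruling out $|L\cap I|=0$.} Suppose $x,y,z\notin I$, so by the maximality bound $T:=N(L)\cap I$ has $|T|\ge 3$; every vertex of $T$ is adjacent to a leaf. First note the key consequence of fork-freeness (Claim~A): \emph{if $a\in T$, $a\neq c$, $a\not\sim c$, and $a$ is adjacent to exactly one leaf, then $\{a,x,c,y,z\}$ (relabelling so that $a\sim x$) induces a fork} — contradiction; hence such an $a$ is adjacent to at least two leaves. If $c\in I$, then $c\in T$ and $|T\setminus\{c\}|\ge 2$; each element of $T\setminus\{c\}$ is non-adjacent to $c$ (both lie in $I$) and, by Claim~A, adjacent to at least two leaves; two such elements $a_1,a_2$ share a common leaf-neighbour (two $2$-subsets of $L$ intersect), and that leaf is then adjacent to the three tokens $a_1,a_2,c$ — contradicting that $G$ is $I$-reduced. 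If $c\notin I$, pick $w\in N(c)\cap I$ (it exists by maximality of $I$). The plan here is again to bound $|T|$: for a leaf $\ell$ with a token-neighbour $a\in T$ with $a\not\sim c$ and $w\not\sim\ell$, one checks that if $\ell$ has two such token-neighbours $a,a'$ then $\{\ell,c,a,a',w\}$ induces a fork (claw with center $\ell$ and leaves $a,a',c$, and pendant $w$ on $c$), while a single token-neighbour not adjacent to a second leaf already gives the Claim~A fork; feeding these constraints back through the $I$-reducedness bound $|N(\cdot)\cap I|\le 2$ (\cref{lem:at-most-two}, \cref{lem:ktoken}) forces $|T|\le 2$, contradicting $|T|\ge 3$.

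\textbf{Main obstacle.} The cases $|L\cap I|=2$ and $|L\cap I|=0$ with $c\in I$ are short. The remaining case, $|L\cap I|=0$ with $c\notin I$, is the delicate one: a token adjacent to a leaf may also be adjacent to $c$, so Claim~A does not immediately apply, and one must interleave arguments using the token-neighbour $w$ of $c$ with the $I$-reducedness bound to pin down the neighbourhoods of the leaves tightly enough that the maximality inequality $|N(L)\cap I|\ge 3$ is violated. I expect this to be the only part requiring genuine case analysis.
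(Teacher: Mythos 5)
Your treatment of $|L\cap I|\le 2$, of the case $|L\cap I|=2$, and of the case $|L\cap I|=0$ with $c\in I$ is correct and coincides with the paper's argument almost line by line (the token-neighbour $t$ of the uncovered leaf is forced off $N(c)$ by $I$-reducedness, giving a fork; respectively $|N(L)\cap I|\ge 3$, each non-centre token forced onto at least two leaves, and pigeonhole giving a leaf with three token-neighbours). The gap is exactly where you flagged it: the case $|L\cap I|=0$ with $c\notin I$ is only a plan, and its pivotal step --- that the constraints ``force $|T|\le 2$'' --- is not proved and in fact does not hold. Concretely, take the claw $\{c;x,y,z\}$ and three further vertices $a,b,d$ with $N(a)=\{c,x,y\}$, $N(b)=\{c,y,z\}$, $N(d)=\{x,z\}$. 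This $7$-vertex graph is fork-free (its only induced claw is $\{c;x,y,z\}$, and the unique outside vertex $d$ sees two of its leaves, so no fork arises), $I=\{a,b,d\}$ is a maximum independent set ($\alpha=3$), and every vertex has at most two neighbours in $I$, so the graph is $I$-reduced. Yet $|T|=|N(\{x,y,z\})\cap I|=3$ and $|\{x,y,z\}\cap I|=0$. Neither your Claim~A nor your five-vertex fork $\{\ell,c,a,a',w\}$ fires here, because every token either meets two leaves or is adjacent to $c$, and no leaf has two token-neighbours both avoiding $c$. So this case cannot be closed by the route you propose --- and not by any other route, since the configuration is a genuine counterexample to the statement in that case.

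In fairness, the paper's own proof has the identical hole: it asserts $1\le|\{c,x,y,z\}\cap I|$ ``since $I$ is maximum and $G$ is $I$-reduced'' without argument, and then only excludes $|\{c,x,y,z\}\cap I|=2$ and $c\in I$; the configuration above falsifies that asserted lower bound. You have therefore correctly isolated the one delicate case rather than overlooked an idea present in the paper. The honest conclusion is that \cref{cl:one-leaf-in-I} needs either an extra hypothesis or a repaired statement (e.g., ``$|\{x,y,z\}\cap I|\le 1$, and if $|\{x,y,z\}\cap I|=0$ then no reachable independent set puts a token on $c$''), the latter being what is actually needed for the safeness of Reduction Rule MIS and provable in the spirit of Lemma~\ref{lem:rotation-claw}; your write-up should either prove such a weakened claim or record the counterexample, not leave the case as a sketch that purports to reach a contradiction.
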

\begin{proof}
    Since $I$ is maximum and $G$ is $I$-reduced, we have $1 \leq |\{c, x, y, z\} \cap I| \leq 2$. 
    Assume for a contradiction that $|\{c, x, y, z\} \cap I| = 2$ and let without loss of generality $\{c, x, y, z\} \cap I = \{x, y\}$.

    Since $I$ is maximum, the vertex $z$ must have a neighbor $t$ in $I$ and since $c$ has at most two neighbors in $I$, we have $t \notin N(c)$. It follows that $\{c,x,y,z,t\}$ induces a fork in $G$, contradiction.

    So it remains to show that $c \notin I$. Assume for a contradiction that $c \in I$. Since $I$ is maximum, there are at least three tokens in the neighborhood of $\{x, y, z\}$: otherwise we add $\{x, y, z\}$ to $I$ and delete $N(\{x, y, z\})$ from $I$ to obtain a larger independent set. Let $u, v \in I$ be distinct from $c$ in the neighborhood of $\{x,y,z\}$. If $u$ (resp., $v$) is adjacent to only one vertex in $\{x, y, z\}$ then $\{c, x, y, z, u\}$ (resp., $\{c, x, y, z, v\}$) induces a fork. Therefore, the vertices $u$ and $v$ have a common neighbor in $\{x, y, z\}$, say $x$. But then $x$ has three neighbors in $I$, which contradicts our assumption that $G$ is $I$-reduced. 
\end{proof}

By the previous lemma, no maximum independent set of $G$ has a token on the center of any induced claw of $G$. Therefore, the following reduction rule is safe.

\begin{description}
    \item[Reduction Rule MIS.] Let $\{c,x,y,z\}$ induce a claw in $G$ with center $c$. Return $(G-c, I, J)$.
\end{description}

\begin{proof}[Proof of~\cref{thm:misr-fork-me}]
The induced claws of $G$ can be enumerated in polynomial time (by brute-force), so Reduction Rule MIS can be applied exhaustively in polynomial time. We obtain an equivalent instance $(G', I, J)$ of \TSR such that the graph $G'$ is claw-free.
It remains to apply the algorithm from~\cite{hutchison_reconfiguring_2014} on $(G', I, J)$ to decide the initial instance $(G, I, J)$. Notice that any reconfiguration sequence from $I$ to $J$ in $G'$ is a reconfiguration sequence from $I$ to $J$ in $G'$.    
\end{proof}

\section{Non-maximum Token Sliding in fork-free graphs}
\label{sec:nonmax-ts}

We provide a polynomial-time algorithm for \TSR on fork-free graphs, generalizing the algorithm for claw-free graphs given in~\cite{hutchison_reconfiguring_2014}. In the following, let $G = (V, E)$ be a fork-free graph and let $I, J \subseteq V$ be two independent sets of $G$ of the same size. If $I$ and $J$ are maximum, we may apply Theorem~\ref{thm:misr-fork-me} and are done. Otherwise, we show that we may assume that each connected component of $G$ is prime (\cref{sec:prime}). The algorithm then proceeds as follows. We consider the symmetric difference of the two independent sets and notice that each connected component of $G[I \bigtriangleup J]$ is a path or an (even) cycle, each of which is can be treated separately. Paths are easy to deal with while cycles require more work. In order to deal with the cycles we use an idea that has been used for \textsc{Matching Reconfiguration} in~\cite{ito_complexity_2011}: we push tokens along a path from the cycle to a vertex that is $I$-free (that is, it has no neighbor in $I$), which allows us to move the remaining tokens on the cycle to their target positions. Showing that pushing the tokens along such a path is possible is non-trivial and is discussed in Section~\ref{sec:ifree}. It turns out that if we get stuck at any point then there is a permanently blocked vertex that can be removed from the graph. The complete algorithm is presented in Section~\ref{sec:resolve-cycles}.

\subsection{Reduction to prime graphs}\label{sec:reduction-to-prime}
\label{sec:prime}

We show that we may assume without loss of generality that the input graph $G$ is prime with respect to the modular decomposition. First, we observe that if a module of a graph contains at least two tokens then the instance induced by the vertices of this module can be treated independently since no token will be able to leave the module at any step.

\begin{observation}\label{obs:separate-module}
    Let $M$ be a module of $G$ such that $\ell:=|M \cap I|$ and $I'$ be such that $I \TSeq I'$. If $\ell \ge 2$, we have  $|M \cap I'| = \ell$ and if $\ell \le 1$ we have $|M \cap I'| \le 1$.
\end{observation}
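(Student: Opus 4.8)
The plan is to prove Observation~\ref{obs:separate-module} by showing that the number of tokens inside a module $M$ is an invariant of token sliding whenever that number is at least $2$, and is bounded by $1$ whenever it starts at most $1$. The key structural fact is that if $M$ contains at least two independent tokens, then every vertex outside $M$ that is adjacent to some vertex of $M$ is adjacent to \emph{all} of $M$ (by definition of a module), and hence is adjacent to at least two tokens of $I$; in particular such an external vertex can never receive a token in any independent set reachable from $I$. This confines the relevant token movement to $M$ and its non-neighbors.

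First I would set up the argument by induction on the length of a reconfiguration sequence $I = I_0 \leftrightarrow_{\text{\textsc{TS}}} I_1 \leftrightarrow_{\text{\textsc{TS}}} \cdots$, maintaining the invariant $|M \cap I_i| = \ell$ (in the case $\ell \geq 2$). For the inductive step, consider a single slide from $I_i$ to $I_{i+1}$ along an edge $uv$. Either both $u, v \in M$, in which case the count inside $M$ is unchanged, or both $u, v \notin M$, same conclusion, or exactly one endpoint, say $u$, lies in $M$ and the other, $v$, lies outside $M$. I claim the mixed case is impossible: if the token slides out of $M$, then before the move $u \in I_i \cap M$ and since $\ell \geq 2$ there is another token $w \in I_i \cap M$ with $w \neq u$; because $uv \in E$ and $M$ is a module, $v$ is adjacent to every vertex of $M$, in particular to $w$, so $v$ has a token neighbor in $I_i$ and the slide to $v$ is not a valid move. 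Symmetrically, if the token slides into $M$ from $v \notin M$ to $u \in M$, then in $I_i$ there is a token $w \in M$ ($\ell \geq 2$ tokens in $M$ are present throughout by induction), and since $v$ is adjacent to all of $M$ it is adjacent to $w \in I_i$, contradicting $v \in I_i$. Hence no slide ever crosses the boundary of $M$, so $|M \cap I_{i+1}| = |M \cap I_i| = \ell$.

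For the case $\ell \leq 1$: if $\ell = 0$ the argument is subtler since $M$ could gain a token. I would instead argue directly that $|M \cap I'| \leq 1$ for every reachable $I'$. Suppose toward a contradiction that some reachable independent set has at least two tokens in $M$, and take the first such set $I_{j}$ in a reconfiguration sequence from $I$; then $|M \cap I_{j-1}| = 1$ and the slide from $I_{j-1}$ to $I_j$ moves a token from some $v \notin M$ into $M$ (a slide with both endpoints in $M$ cannot increase the count from $1$, and a slide with both endpoints outside $M$ cannot change it). Let $w$ be the token already in $M \cap I_{j-1}$; since the token slides along an edge $vu$ with $u \in M$, $v \notin M$, and $M$ is a module, $v$ is adjacent to $w$, so $v$ has a token neighbor in $I_{j-1}$, contradicting $v \in I_{j-1}$. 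If $\ell = 1$, the same first-violation argument applies verbatim starting from $I$ itself. This closes the case $\ell \leq 1$.

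I do not anticipate a serious obstacle here; the only point requiring care is making the ``module'' hypothesis do its work correctly in the mixed boundary case, namely that an external vertex adjacent to one vertex of $M$ is adjacent to all of $M$, so that the presence of a second token in $M$ blocks any boundary-crossing slide. Everything else is a routine induction on the sequence length together with a ``first bad step'' extremal argument, and no computation is involved.
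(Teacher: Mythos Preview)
Your argument is correct and is exactly the intended justification: the paper states this as an \emph{observation} without proof, and your induction on the length of the reconfiguration sequence together with the module property (an external neighbor of one vertex of $M$ is a neighbor of all of $M$) is precisely the reasoning the reader is expected to supply. The only minor redundancy is treating $\ell=0$ and $\ell=1$ separately in the second case; your first-violation argument already covers both simultaneously, so you could streamline that paragraph.
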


By~\cref{obs:separate-module}, if a non-trivial module $M$ contains at least two tokens of $I$ and $J$, respectively, then all the vertices in the neighborhood of $M$ can be deleted. Otherwise, if $G[M]$ is connected and contains at most one token of $I$ and $J$, respectively, then all vertices of $M$ are equivalent with respect to token sliding, so the vertices of $M$ can be identified. However, the situation is slightly more complicated if $M$ is not connected since a token might have to leave the module to reach its target position. In that case we show that either a token in $M$ can move to a different connected component of $G[M]$ in exactly two moves, or it can never leave its connected component. 

\begin{lemma}\label{lem:move-cc-modules}
    Assume that $G$ is $I$-reduced, let $M$ be a non-trivial module of $G$, and let $C_1, C_2, \ldots, C_\ell$ be the connected components of $G[M]$. Suppose that $\ell \geq 2$ and $M \cap I = \{v\} \subseteq V(C_1)$. Then exactly one of the following holds:
    \begin{itemize}
        \item The vertex $v$ has a neighbor $c \in V \setminus M$ such that $N(c) \cap I = \{v\}$.
        \item $N(M) \setminus M$ is permanently blocked. In particular, we have $|I' \cap V(C_1)|= 1$ for every independent set $I'$ of $G$ such that $I \TSeq I'$.
    \end{itemize}
\end{lemma}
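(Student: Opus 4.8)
# Proof Plan for Lemma~\ref{lem:move-cc-modules}

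\textbf{Setup and strategy.} The plan is to analyze what happens when the single token $v \in V(C_1)$ tries to leave its connected component $C_1$. Since $C_1, C_2, \ldots, C_\ell$ are the connected components of $G[M]$ and $M$ is a module, every vertex outside $M$ is either complete or anticomplete to all of $M$; in particular the only way $v$ can move out of $C_1$ is to first slide to some vertex $c \in N(M) \setminus M$ and then slide from $c$ into some $C_j$ with $j \neq 1$ (note $c$ is adjacent to \emph{all} of $M$). So the dichotomy is naturally: either there exists such an ``escape vertex'' $c$ with $N(c) \cap I = \{v\}$ (the first bullet, which is exactly the precondition that lets $v$ perform the move $v \to c \to C_j$ in two steps), or no such $c$ exists, and I must show this forces $N(M) \setminus M$ to be permanently blocked.

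\textbf{The negative case.} Suppose no vertex $c \in N(v) \setminus M$ satisfies $N(c) \cap I = \{v\}$. Since $N(v) \setminus M = N(M) \setminus M$ (as $M$ is a module and $v \in M$), this says: every vertex $c \in N(M) \setminus M$ has a token of $I$ other than $v$ in its neighborhood, i.e. $|N(c) \cap I| \geq 2$ for every $c \in N(M) \setminus M$. Hence $N(M) \setminus M$ is locally blocked with respect to $I$. To upgrade ``locally blocked'' to ``permanently blocked'' I need to verify the $X$-twin condition from the definition: for $X := N(M) \setminus M$, for every independent set $I'$ with $I \TSeq I'$ and every token $u \in B_X = N(X) \cap I'$, every $I'$-free neighbor $w$ of $u$ satisfies $N(w) \cap X = N(u) \cap X$. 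The key structural fact I will exploit is that $X = N(M) \setminus M$ is itself a module-adjacent set: every vertex of $M$ is adjacent to \emph{all} of $X$. So if $u \in M$, then $u$ is adjacent to all of $X$; I then want to argue that any $I'$-free neighbor $w$ of $u$ is also adjacent to all of $X$ (hence is an $X$-twin). If instead $u \notin M$, then $u \in X$ itself or $u$ is outside $M \cup X$; I'll need to handle where $u$ can be and use fork-freeness (and the fact that $G$ is $I$-reduced, so $|N(c)\cap I'| \le 2$ for all $c$ by Observation~\ref{lem:at-most-two}) to pin down the neighborhoods. The clean way is probably: first establish by the same augmenting-free/maximality-style argument as in \cref{obs:separate-module} that $|M \cap I'| = 1$ and the token in $M$ stays in $C_1$ as long as $X$ stays locally blocked, set up an induction along the reconfiguration sequence showing $X$ remains locally blocked at every step, and simultaneously extract the twin condition.

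\textbf{The main obstacle.} I expect the delicate part to be showing that $X = N(M)\setminus M$ \emph{stays} locally blocked under every token move reachable from $I$ — equivalently, verifying the twin condition cleanly. The danger is a token $u \in B_X$ sliding to a neighbor $w$ with strictly smaller intersection with $X$, which could un-block some $c \in X$. To rule this out I will combine: (i) $M$ being a module forces strong symmetry — any vertex adjacent to one vertex of $M$ is adjacent to all; (ii) fork-freeness, applied to configurations $\{u, w, c, y, z\}$ where $c \in X$, $y, z$ are the two tokens near $c$ (using $I$-reducedness to know there are exactly two), in the spirit of the proof of \cref{lem:ktoken}; and (iii) the fact that if $u \in M$ then $|M \cap I'| \ge 2$ would follow were $u$ to leave toward a non-module vertex $w \notin X$, contradicting $|M \cap I'|=1$. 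The last sentence of the lemma, ``$|I' \cap V(C_1)| = 1$ for every $I'$ reachable from $I$,'' then follows immediately once we know $X$ is permanently blocked: no token can ever slide from $X$ into $M$, the $C_i$ are pairwise non-adjacent, and $M\setminus C_1$ starts token-free, so by induction the single token of $M$ never escapes $C_1$. Finally I must confirm the two bullets are mutually exclusive: if the escape vertex $c$ with $N(c)\cap I=\{v\}$ exists, then $c \in X$ is not locally blocked, so $X$ is certainly not permanently blocked, and the ``in particular'' fails since $v$ can reach $C_2$ — so exactly one alternative holds.
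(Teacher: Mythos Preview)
Your high-level outline is sound --- the dichotomy is exactly right, the observation that $N(v)\setminus M = N(M)\setminus M$ gives local blocking of $X:=N(M)\setminus M$ when the first bullet fails, and the plan to show $X$ stays locally blocked along any reconfiguration sequence is the heart of the matter. However, there is a concrete gap in how you intend to produce the fork. You propose to mimic the proof of Lemma~\ref{lem:ktoken}, applying fork-freeness to a configuration $\{u,w,c,y,z\}$ where $c\in X$ and $y,z$ are its two token neighbours. But the argument of Lemma~\ref{lem:ktoken} works because $c$ has at least \emph{three} token neighbours, two of which survive as pairwise non-adjacent leaves after one token moves; here each $c\in X$ has exactly two token neighbours (the unique $M$-token and some $z\notin M\cup X$), so once one of them slides away you only have four distinct vertices in hand, not five, and no fork materialises from that set alone.

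The paper repairs this with a trick you have not written down, and it is precisely where the hypothesis $\ell\ge 2$ enters: since $c\in X$ is complete to $M$ and $G[M]$ has at least two components, one picks $a\in C_1$ and $b\in C_2$ (not required to be tokens) as two non-adjacent leaves of a claw centred at $c$. Arguing by minimal counterexample, take the first reachable $I'$ with $|N(c)\cap I'|=1$ for some $c\in X$ and let $J'$ be its predecessor. The $M$-token cannot leave $N(c)$ in the step $J'\to I'$ (it can only slide within $M$, which stays inside $N(c)$, or into $X$, but every vertex of $X$ still has two token neighbours in $J'$ by minimality). Hence it is the outside token $z$ that slides to some $z'\notin N(c)$; one checks $z,z'\notin M\cup X$, and then $\{c,a,b,z,z'\}$ induces a fork. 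Note also that the paper does \emph{not} verify the $X$-twin condition; it directly proves that $X$ remains locally blocked at every step, which is all that is needed downstream. Your plan to establish the full twin condition is extra (and possibly unachievable) work, and your item~(iii) is garbled: if $u\in M$ slides to $w$ the token count in $M$ cannot increase, and since $N(u)\subseteq M\cup X$ there is no neighbour $w\notin M\cup X$ to worry about.
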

\begin{proof}
If $v$ has a neighbor $c \in V(G) \setminus M$ such that $N(c) \cap I = \{v\}$ then we can slide the token on $v$ to $c$ to obtain an independent set $I' \TSA I$ that satisfies $|I' \cap C_1| = 0$. 

If $v$ has no such neighbor, let us prove that $N(M) \setminus M$ is permanently blocked. Since $G$ is $I$-reduced and by the assumptions on $v$, each vertex $u$ of $N(M)$ has exactly two neighbors in $I$ and in any independent set reachable from $I$ contains at most two neighbors of $u$. Assume for a contradiction that there exists a vertex $u \in N(M)$ and an independent set $I'$ such that such that $I \TSeq I'$ and $|N(u) \cap I'| = 1$. Chose $I'$ such that the reconfiguration sequence from $I$ to $I'$ is shortest and let $J'$ be the independent set before $I'$ in the sequence. By Observation~\ref{obs:separate-module}, we have $|J' \cap N(u)| = 1$. Let $z \in (N(u)\cap J')\setminus M$. To obtain $I'$ from $J'$, the token on $z$ slides to $z' \notin N(u)$. Let $a, b \in N(u) \cap M$ such that $a$ and $b$ are in different components of $M$. But then $\{u,a,b,z,z'\}$ induces a fork, a contradiction.
\end{proof}

Suppose that $M$ is a module of $G$ such that that $M\cap I$ and $M\cap J$ have size at most one. We denote by $(G, I, J)_{/M}$ the instance of \TSR where $M$ has been replaced by a single vertex $m$ and the independent set $I$ (resp., $J$) contains $m$ if $I \cap M$ (resp., $J \cap M$) has size one\footnote{We still denote by $I$ and $J$ the independent sets by abuse of notations}.
We say that an instance of $(G,I,J)$ of \TSR is \emph{balanced} if for every non-trivial module $M$ of $G$ satisfying $|M \cap I| = |M \cap J| = 1$, the unique vertex in $M \cap I$ and the unique vertex in $M \cap J$ are in the same connected component of $G[M]$. 
By applying the next reduction rule exhaustively, we obtain either a trivial \NO-instance or a balanced instance.

\begin{description}
    \item[Reduction Rule B.] Assume there is a non-trivial module $M$ of $G$ such that $M \cap I = \{u\}$, $M \cap J = \{v\}$ and $u$ and $v$ are in two distinct components of $G[M]$. If there is a vertex $c \in V \setminus M$ such that $N(c) \cap I = \{u\}$, return $(G, I, J)_{/M}$. Otherwise, return a trivial \NO-instance. \label{rule:B}
\end{description}

\begin{lemma}\label{obs:contracte-module}
    Reduction Rule B is safe.
\end{lemma}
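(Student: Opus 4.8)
The plan is to show that Reduction Rule B produces an equivalent instance, that is, $(G,I,J)$ is a \YES-instance if and only if the returned instance is. There are two outcomes of the rule, so I would split the argument accordingly. First, suppose there is no vertex $c \in V \setminus M$ with $N(c) \cap I = \{u\}$. I claim that in this case $(G,I,J)$ is a genuine \NO-instance, so returning a trivial \NO-instance is correct. Indeed, by \cref{obs:separate-module} every independent set $I'$ reachable from $I$ satisfies $|M \cap I'| \le 1$, and the situation of the module $M$ (with its single token currently in the component containing $u$) is exactly the one analyzed in \cref{lem:move-cc-modules}: since the first alternative of that lemma fails by hypothesis, the second alternative holds, so $|I' \cap V(C_1)| = 1$ for every $I'$ reachable from $I$, where $C_1$ is the connected component of $G[M]$ containing $u$. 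In particular no reachable independent set places a token on $v$'s component, so no reachable independent set can equal $J$ (which has its token of $M$ on $v$). Hence $(G,I,J)$ is a \NO-instance.

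Second, suppose such a vertex $c$ exists, so the rule returns $(G,I,J)_{/M}$; here I must prove the two instances are equivalent. For the forward direction, given a reconfiguration sequence $I = I_0 \TSA I_1 \TSA \cdots \TSA I_\ell = J$ in $G$, I would project it to $(G,I,J)_{/M}$ by replacing, in each $I_i$, the (at most one) vertex of $M \cap I_i$ by the contracted vertex $m$. A move that slides a token inside $M$ (between vertices of $M$, necessarily within one component of $G[M]$) becomes a null move and is deleted; a move that slides a token between $M$ and $N(M)\setminus M$ becomes a move incident to $m$, which is legal in the contracted graph because $N(M)\setminus M$ sees all of $M$ uniformly; all other moves are unchanged. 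One checks each intermediate set remains independent using that $M$ is a module. This yields a valid sequence from $I$ to $J$ in $(G,I,J)_{/M}$.

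For the converse direction — which I expect to be the main obstacle — I start from a sequence in $(G,I,J)_{/M}$ and must lift it to $G$. Moves not touching $m$ lift trivially. A move sliding a token onto $m$ from some neighbor $w \in N(M)\setminus M$ lifts to sliding that token onto \emph{any} vertex of $M$ adjacent to $w$, and since $m$ was empty just before, $M$ is then empty except for this one new token, so no independence conflict arises inside $M$; I would adopt the convention of always landing on $u$. The delicate case is a move sliding the token off $m$ onto some $w \in N(M)\setminus M$: in $G$ the token currently sits on some vertex of $M$, and that vertex need not be adjacent to $w$ (it might even be in a different component of $G[M]$ than the one reachable from $w$ by a slide). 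This is precisely why the hypothesis on $c$ is needed. Here I would use the ``detour'' argument underlying \cref{lem:move-cc-modules}: whenever we need the token currently in $M$ to leave $M$ but it is stuck in the wrong component, we first slide it out onto the private neighbor $c$ (legal since $N(c)\cap I = \{u\}$ and, by \cref{obs:separate-module} applied to $M$, the configuration of $M$ is always a single token so $c$ stays private throughout the relevant part of the sequence — this point needs a careful check, possibly strengthening the invariant to "$c$ has no neighbor in the current independent set other than the token in $M$"), then slide it back into whichever vertex of $M$ we actually want, in two moves. Combining these local gadgets, every move of the contracted sequence is realized by one or two moves in $G$, yielding $I \TSeq J$ in $G$ and completing the proof.
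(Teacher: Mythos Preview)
Your \NO case is correct and matches the paper. In the \YES case, the ``delicate case'' you identify is not actually delicate: since $M$ is a module, any $w\in N(M)\setminus M$ is adjacent to \emph{every} vertex of $M$ (you use this yourself in the forward direction), so a token sitting anywhere in $M$ can always slide directly to $w$; there is no such thing as being ``stuck in the wrong component'' for the purpose of leaving $M$. The real obstacle is one you do not explicitly address: with your convention of always landing on $u$, the lifted sequence terminates at $J-v+u$, not at $J$, and you still owe a transfer of the $M$-token from $u$ to $v$ through $N(M)\setminus M$. Your only tool for this is $c$, but --- as you yourself flag --- the hypothesis $N(c)\cap I=\{u\}$ concerns only the \emph{initial} configuration; mid-sequence (or at the end) $c$ may have acquired other token-neighbours outside $M$, and nothing in your argument prevents this.

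The paper's fix is to perform the detour once, at the very start, where it is guaranteed legal: slide $u\to c\to v$ to reach $I':=I-u+v$ with $I\TSeq I'$. Now $I'$ and $J$ place their unique $M$-token on the same vertex $v$, so the equivalence of $(G,I',J)$ with $(G,I,J)_{/M}=(G,I',J)_{/M}$ is immediate: lift $m$ to $v$ throughout, and every slide on or off $m$ is realised verbatim because $v$ is adjacent to all of $N(M)\setminus M$. Your projection argument for the forward direction is correct and is, implicitly, what the paper relies on as well.
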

\begin{proof}
    Let $M$ be non-trivial module of $G$ such that $M \cap I = \{u\}$, $M \cap J = \{v\}$ and $u$ and $v$ are in two distinct components of $G[M]$.
    If there is a vertex $c \in V \setminus M$ such that $N(c) \cap I = \{u\}$, we may move the token on $u$ to $c$ and then to $v$, to obtain an independent set $I' \TSeq I$. Since any vertex of $M$ has the same neighborhood in $V \setminus M$, the module $M$ can contain at most one token in any independent set reachable from $I$. Since $v \in I' \cap J$, we may safely identify all vertices in $M$.
    
    On the other hand, assume that there is no vertex $c \in V \setminus M$ such that $N(c) \cap I = \{u\}$. Let $C$ be the vertex set of the connected component of $G[M]$ containing $u$. By Lemma~\ref{lem:move-cc-modules}, for any independent set $I'$ such that $I' \TSeq I$, we have $|I' \cap C| = 1$, but $|J \cap C| = 0$. Therefore $(G, I, J)$ is a \NO-instance.
\end{proof}

Assume that the instance $(G, I, J)$ of \TSR is balanced. Then Observation \ref{obs:separate-module} tells us that if the input graph contains a non-trivial module the graph can be either reduced or split into connected components that can be treated independently. We now formalize this observation by introducing further reduction rules that are then applied exhaustively under the assumption that Reduction Rule B is not applicable. Each connected component of the resulting graph will be prime. 

\begin{description}
    \item[Reduction Rule D] Let $M$ be a non-trivial module of $G$ such that $|M \cap I| \leq 1$. If    $|M \cap J| \leq 1$ then return $(G, I, J)_{/M}$, otherwise return a trivial \NO-instance.
    \item[Reduction Rule E] Let $M$ be a non-trivial module of $G$ such that $|M \cap I| \ge 2$.  If $|M \cap J| \ne |M \cap I|$ return a trivial \NO-instance, otherwise return $(G - N(M), I, J)$.
\end{description}

Observe that if $G$ is prime then none of the reduction rules B, D, E are applicable. 

\begin{lemma}\label{lem:split-proc-safe}
    Suppose that Reduction Rule B is not applicable. Then reduction rules D and E are safe.
\end{lemma}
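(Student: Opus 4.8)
The plan is to prove, separately for each rule, that the returned instance has the same answer as $(G,I,J)$. Throughout we assume that $G$ is $I$-reduced and $J$-reduced, which we may ensure by first applying Reduction Rule~A exhaustively; this is the setting in which \cref{lem:move-cc-modules} is available. Consider first Reduction Rule~E and a non-trivial module $M$ with $\ell := |M\cap I|\geq 2$ (the hypothesis on Reduction Rule~B is not needed here). By \cref{obs:separate-module}, every independent set $I'$ with $I\TSeq I'$ satisfies $|M\cap I'|=\ell$, so if $|M\cap J|\neq\ell$ then $J$ is unreachable and the trivial \NO-instance is the correct output. Otherwise $|M\cap J|=\ell\geq 2$; since $M$ is a module, every vertex outside $M$ with a neighbour in $M$ is adjacent to all of $M$, hence to at least two tokens of every such $I'$ and of $J$, so such a vertex lies in no independent set reachable from $I$ and does not lie in $J$ either. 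Deleting these vertices therefore does not affect reachability between $I$ and $J$ — by exactly the argument in the proof of \cref{lem:ruleZ} — so $(G-N(M),I,J)$ is equivalent to $(G,I,J)$.

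For Reduction Rule~D, let $M$ be a non-trivial module with $|M\cap I|\leq 1$. If $|M\cap J|\geq 2$ then, since $|M\cap I'|\leq 1$ for every independent set $I'$ reachable from $I$ (\cref{obs:separate-module}), $J$ is unreachable and the trivial \NO-instance is correct. Assume now $|M\cap J|\leq 1$ and write $(G',I',J'):=(G,I,J)_{/M}$, where $m$ is the vertex obtained by contracting $M$; note that the neighbourhood of $m$ in $G'$ is precisely the set of vertices outside $M$ having a neighbour in $M$. Two facts drive the argument: any independent set reachable from $I$ meets $M$ in at most one vertex, and if it meets $M$ then (by independence) it contains no vertex outside $M$ adjacent to $M$. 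Hence the map $\phi$ replacing the at most one vertex of $M$ in an independent set of $G$ by $m$ sends any \TS-reconfiguration sequence of $G$ to one of $G'$, with moves internal to $G[M]$ becoming no-ops, and $\phi(I)=I'$, $\phi(J)=J'$. Conversely, a \TS-sequence of $G'$ lifts to one of $G$ by realising each token ``on $m$'' as a token inside $M$, entering $M$ at one fixed vertex $u_0$ and leaving from there (every vertex of $M$ is adjacent to every neighbour of $m$, so the choice of $u_0$ is immaterial for entering and leaving); and when $G[M]$ is connected one can additionally route the token freely inside $G[M]$ to align the endpoints of the lifted sequence with $I$ and $J$ (the neighbours of $M$ are token-free whenever $M$ is occupied). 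This gives the equivalence of $(G,I,J)$ and $(G',I',J')$ whenever $G[M]$ is connected.

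The remaining, and main, case is $G[M]$ disconnected, which is exactly where the hypothesis that Reduction Rule~B is not applicable — equivalently, that the instance is balanced — enters. If $|M\cap I|=|M\cap J|=1$, balancedness guarantees that both tokens lie in the same component $C$ of $G[M]$, so we may route the lifted token inside $C$ only and argue as in the connected case. If exactly one of $M\cap I,M\cap J$ is non-empty, say $M\cap I=\{u\}$ (the other case is symmetric, as \TS-reconfiguration is reversible and $G$ is also $J$-reduced), let $C_1$ be the component of $G[M]$ containing $u$ and apply \cref{lem:move-cc-modules}: either $u$ has a neighbour $c\notin M$ with $N(c)\cap I=\{u\}$, so the token leaves $M$ in one move and thereafter plays the role of $m$, yielding the equivalence as above; or $N(M)\setminus M$ is permanently blocked and $|I'\cap C_1|=1$ for every reachable $I'$, whence $(G,I,J)$ is a \NO-instance since $J\cap C_1=\emptyset$ — and in this sub-case $(G',I',J')$ is a \NO-instance as well, for a hypothetical \TS-sequence of $G'$ from $I'$ to $J'$ must at some step slide the token off $m$, and that prefix lifts (keeping the $M$-token at $u$) to a \TS-sequence of $G$ reaching a configuration with no token in $C_1$, contradicting \cref{lem:move-cc-modules}. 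Finally, if $M\cap I=M\cap J=\emptyset$, no token need ever enter $M$ and $\phi$ together with its lift gives the equivalence directly. The delicate point throughout is precisely this disconnected case, which is why \cref{lem:move-cc-modules} and the reduction to balanced instances were set up beforehand; the rest is bookkeeping with the collapsing map $\phi$ and its lift.
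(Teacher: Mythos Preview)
Your proof is correct and follows the paper's approach: \cref{obs:separate-module} handles the \NO-cases for both rules, and for Rule~D the safety of contracting $M$ is justified via the balancedness hypothesis together with \cref{lem:move-cc-modules}, just as in the paper. Your treatment is in fact more careful than the paper's sketch --- you make the collapsing map $\phi$ and its inverse lift explicit, and you cover sub-cases (such as $|M\cap I|=0$, and $G[M]$ disconnected with only one of $I,J$ meeting $M$) that the paper's proof elides.
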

\begin{proof}
    Suppose first that Reduction Rule D is applicable and let $\{u \} = M \cap I$. By Observation~\ref{obs:separate-module} and Lemma~\ref{lem:move-cc-modules}, if $|M \cap J| > 1$ then $(G, I, J)$ is a \NO-instance; so assume that $\{v\} = M \cap J$. Given that Reduction Rule B is not applicable, the tokens on $u$ and $v$ are in the same connected component of $M$, so we may slide the token on $u$ to $v$. Notice that in any independent set reachable from $I$ and $J$, the module $M$ contains at most one token. Therefore, we may contract the module $M$ to a single vertex to obtain the instance $(G, I, J)_{/M}$ which is equivalent to $(I, J, M)$.

Now suppose that Reduction Rule E is applicable. By Observation~\ref{obs:separate-module} and Lemma~\ref{lem:move-cc-modules}, if $I \cap M$ and $J\cap M$ do not have the same size then $I$ and $J$ do not admit a \TS-reconfiguration sequence. Otherwise, by Lemma~\ref{lem:move-cc-modules}, no token of $M$ can leave the module, so we can treat $G[M]$ and $G[V \setminus N[M]]$ independently.
\end{proof}

We now prove we may consider without loss of generality instances of \TSR on fork-free graphs that are prime.
\begin{theorem}\label{thm:reduction-to-prime}
    Let $(G,I,J)$ be an instance of \TSR such that $G$ is fork-free. We may compute in polynomial from $(G, I, J)$ an equivalent instance $(G', I', J')$ of \TSR on a fork-free graph $G'$ that is $I'$-reduced and $J'$-reduced, such that each connected component of $G'$ is prime.
\end{theorem}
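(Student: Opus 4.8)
The plan is to exhaustively apply Reduction Rules A, B, D, and E, interleaved appropriately, and argue that the process terminates in polynomial time with the desired output. First I would apply Reduction Rule A exhaustively (together with Lemma~\ref{lem:ruleZ}) so that the graph becomes $I$-reduced and $J$-reduced; each application deletes a vertex or produces a trivial \NO-instance, so this takes polynomial time, and by Observation~\ref{lem:only-reducible} the reduced property is preserved under the remaining rules as long as they only delete vertices or contract modules (contraction of a module with at most one token cannot create a vertex with three neighbors in a reachable independent set, by Observation~\ref{obs:separate-module}). Next, I would apply Reduction Rule B exhaustively to obtain a balanced instance; by Lemma~\ref{obs:contracte-module} this is safe, and since each application either contracts a module (strictly decreasing the number of vertices) or terminates, it runs in polynomial time. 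Finally, under the assumption that Reduction Rule B is no longer applicable, I would apply Reduction Rules D and E exhaustively; these are safe by Lemma~\ref{lem:split-proc-safe}, and again each application strictly decreases $|V(G)|$, so the process terminates in polynomially many steps.

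When none of the rules B, D, E is applicable to a connected component, that component has no non-trivial module: indeed, if $M$ were a non-trivial module, then either $|M\cap I|\le 1$ and $|M\cap J|\le 1$, in which case Rule D (or Rule B, if the tokens lie in different components of $G[M]$) applies; or $|M\cap I|\ge 2$ or $|M\cap J|\ge 2$, in which case Rule E applies (after possibly swapping the roles of $I$ and $J$, noting $|M\cap I|=|M\cap J|$ is forced for a \YES-instance by Observation~\ref{obs:separate-module}). Hence every connected component of the final graph $G'$ is prime. Throughout, $G'$ remains fork-free because both vertex deletion and taking connected components preserve the fork-free property, and module contraction preserves it as well (an induced fork in the contracted graph would pull back to an induced fork in $G$, since at most one vertex of the fork can lie in the contracted module and all vertices of a module have identical neighborhoods outside it).

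The one subtlety to handle carefully is the order of application and the interaction between the rules: Rule B requires the instance to be $I$-reduced for the call to Lemma~\ref{lem:move-cc-modules} inside its safeness proof, and Rules D and E require that Rule B is not applicable. So the correct schedule is: (1) apply Rule A until $I$- and $J$-reduced; (2) apply Rule B until balanced; (3) apply Rules D and E until none applies; and crucially, after steps (2) and (3) one must re-check that the instance is still $I$-reduced and $J$-reduced and re-apply Rule A if necessary, then re-check Rule B, and so on. Since each rule application (other than Rule A's precondition check) strictly decreases the number of vertices, the total number of rounds is at most $|V(G)|$, and the whole procedure runs in polynomial time. The main obstacle is precisely verifying that this interleaving stabilizes — i.e., that reapplying Rule A after a module contraction cannot un-balance the instance in a way that forces unboundedly many rounds — which follows from the monotone decrease of $|V(G)|$ across all rules that actually fire.
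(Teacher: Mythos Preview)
Your proposal is correct and follows essentially the same approach as the paper: exhaustively apply Reduction Rules A, B, D, E in order, use the safeness lemmas for equivalence, argue primeness by contradiction on a surviving non-trivial module, and note that fork-freeness is closed under vertex deletion and module contraction. Your write-up is more explicit than the paper's about the interleaving schedule and the termination argument via the monotone decrease of $|V(G)|$, but the underlying argument is the same.
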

\begin{proof}
  We apply Reductions Rules A,B,D,E as long as we can (in this order) to obtain in polynomial time the instance $(G', I', J')$ of \TSR. Assume for contradiction that there is some connected component $C$ of $G'$ contains a non-trivial module $M$. If $M$ contains more than one token then Reduction Rule E can be applied. Otherwise, Reduction Rule D can be applied, a contradiction. Furthermore, $(G', I', J')$ and $(G, I, J)$ are equivalent by the safeness of the reduction rules (see \cref{lem:ruleA,lem:split-proc-safe,obs:contracte-module}). Finally, $G'$ is fork-free, since fork-free graphs are closed under taking induced subgraphs and contracting modules.
\end{proof}

\subsection{Reaching I-free vertices}\label{sec:equiv-ts-tj}
\label{sec:ifree}
In the following, let $G = (V, E)$ be a fork-free graph and let $I$ be an independent set of $G$ such that $G$ is $I$-reduced (Reduction Rule A is not applicable). Our goal in this section is to prove the following theorem. 

\begin{theorem}\label{thm:ts_tj_equiv_fork}
    Let $v \in I$ and let $u$ be an $I$-free vertex
    of $G$. Then $J := I - v + u$ is an independent set of $G$. 
    If $G$ is prime then there is a polynomial-time algorithm that computes a reconfiguration sequence $I\TSeq J$ or finds a permanently blocked vertex of $G$.
\end{theorem}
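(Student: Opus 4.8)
The plan is to imagine sliding the token $v$ along a shortest path $P = v = p_0, p_1, \dots, p_\ell = u$ from $v$ to the $I$-free vertex $u$, and to argue that the obstructions to doing this can always be removed or resolved. Since $u$ is $I$-free and $G$ is $I$-reduced, every internal vertex $p_i$ of $P$ has at most two tokens in its neighborhood (Observation~\ref{lem:at-most-two}); moreover, because $P$ is a shortest path, the only tokens that can block progress along $P$ are tokens on $P$ itself or tokens adjacent to $P$. The key structural input is the characterization of induced claws in prime fork-free graphs due to Brandst\"adt et al.~\cite{brandstadt2004minimal}: if three consecutive vertices $p_{i-1}, p_i, p_{i+1}$ of $P$ together with a token $t$ adjacent to $p_i$ but to neither neighbor would form an induced claw centered at $p_i$, the prime-fork-free structure forces $t$ to have a controlled neighborhood on $P$, which we will exploit to "rotate" tokens forward. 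The intended formalization of this is Lemma~\ref{lem:rotation-claw} (referenced but not yet stated in the excerpt): either we may move the blocking token $t$ one step forward along $P$, or $t$ (or some vertex near it) is permanently blocked.

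First I would set up the invariant carefully: at a generic stage we have an independent set $I'$ reachable from $I$, still with a token on some vertex $p_j$ of $P$ (initially $j=0$, $p_0=v$), and I want to push it to $p_{j+1}$ or beyond. If $p_{j+1}$ is $I'$-free we just slide and increment $j$; otherwise there is a token $t$ on $p_{j+1}$ or on a neighbor of $p_{j+1}$ off $P$. If $t$ sits on $p_{j+1}$ itself, I recurse: try to push $t$ forward along $P$ first (this is where the reverse-order processing in the proof overview comes in — we resolve the furthest-along blocker first, then work backwards). If $t$ is off $P$ and adjacent to $p_{j+1}$, then because $P$ is shortest, $t$ is not adjacent to $p_j$ or $p_{j+2}$ unless it equals a path vertex, so $\{p_j, p_{j+1}, p_{j+2}, t\}$ is either an induced claw or has an extra edge; in the claw case, Lemma~\ref{lem:rotation-claw} (via the Brandst\"adt et al.\ characterization) tells us $t$ can itself be slid forward onto $P$ or is permanently blocked. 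Each successful forward move strictly increases a potential function (e.g.\ the sum over blocking tokens of their distance from $u$ along $P$, or the lexicographic position of the frontier), so the process terminates after polynomially many moves, and at termination either the token originally on $v$ has reached $u$ — giving $I \TSeq J$ — or we have exhibited a permanently blocked vertex, which we return.

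The main obstacle I expect is the case analysis around a blocking token $t$ that is off the path and adjacent to $p_{j+1}$: one must show that whenever $t$ cannot be pushed forward, some concrete vertex is permanently blocked, and for this the definition of "permanently blocked" requires control over the neighborhoods of $t$ and its $I'$-free neighbors under \emph{all} reachable independent sets, not just $I'$. This is exactly where primeness and fork-freeness must be combined: fork-freeness rules out the configuration $\{t, t', p_{j+1}, y, z\}$ where $t$ slides to an $I'$-free neighbor $t'$ with a different neighborhood on $\{p_{j+1}\}$ and $y,z$ are two further tokens around $p_{j+1}$ (as in the proof of Lemma~\ref{lem:ktoken}), while primeness (through \cite{brandstadt2004minimal}) pins down the global claw structure so that the locally-blocked set we identify is genuinely $X$-twin-closed under token moves. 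Handling interacting blockers — a token $t$ blocked by another token $t''$ which is in turn blocked — is managed by always resolving the blocker closest to $u$ first and appealing to the shortest-path property so that distinct blockers have disjoint influence, but the bookkeeping to make this rigorous and polynomial-time is the delicate part; a careful choice of termination measure will be needed to avoid cyclic dependencies among blockers.
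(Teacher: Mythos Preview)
Your plan is in the right spirit but misses the key structural lemma that makes the paper's argument go through, and contains a factual slip that would derail the case analysis you sketch.

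First, the slip: it is \emph{not} true that a vertex $t$ off a shortest path and adjacent to $p_{j+1}$ must be non-adjacent to $p_j$ and $p_{j+2}$. An off-path vertex may be adjacent to up to three consecutive path vertices without shortening $P$. So the set $\{p_j,p_{j+1},p_{j+2},t\}$ need not be a claw, and your appeal to Lemma~\ref{lem:rotation-claw} at that point is not available in general. You acknowledge the ``extra edge'' case in passing but never say what to do with it; this is a genuine gap, not just bookkeeping.

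Second, and more importantly, the paper does \emph{not} invoke the rotation lemma along the path at all when $\ell\ge 4$. Instead it proves two purely structural facts about shortest paths from an $I$-free vertex in an $I$-reduced fork-free graph (Lemmas~\ref{lem:first-neighbor-path} and~\ref{lem:no-two-neigh-path}): the neighbor of $u$ on $P$ has at most one token in its neighborhood, and no two tokens in $N[V(P)]\cap I$ share the same \emph{leftmost} neighbor on $P$. These give a total order $a_1,\dots,a_t$ on the blocking tokens by leftmost neighbor, and the paper then shows directly (by a short fork-finding case analysis) that one can slide $a_1$ to $u$, then $a_2$ to $a_1$, and so on, ending with $v=a_t$ vacated. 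No recursion, no potential function, no primeness needed for $\ell\ge 4$. Lemma~\ref{lem:rotation-claw} --- and with it primeness and the Brandst\"adt et al.\ characterization --- is used \emph{only} for the base case $\ell=3$, where a single claw centered at $u_2$ may obstruct the two-step slide.

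Your recursive push-forward scheme, by contrast, would need Lemma~\ref{lem:rotation-claw} (or something like it) at every step, would have to handle the non-claw configurations you skipped, and would require a termination measure robust to the fact that ``rotating'' a blocker via that lemma may move \emph{two} tokens, not one, so neither of the potential functions you suggest is obviously monotone. The distinct-leftmost-neighbor lemma is the missing idea that dissolves all of this.
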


First, we provide the intuition behind the proof. Consider a shortest path $P$ from $u$ to $v$ in $G$: Either we can directly slide step by step a token along the path from $v$ to $u$, or some other tokens $a_1, a_2, \ldots, a_t \in I$ are in the (closed) neighborhood of the path $P$ and therefore prevent $v$ from moving to $u$. Let $a_1, a_2, \ldots, a_t$ be ordered by increasing distance to $v$. The idea is to move $a_t$ to $v$ and then for $i$ for $t-1$ down to $1$,  move $a_i$ to $a_{i+1}$. Finally, $u$ may move to $a_1$. The resulting independent set is precisely $J = I - v + u$, so $I \TSeq J$.
However, two tokens in $\{a_1, \ldots, a_t\}$ might block each other (if both have the same neighbors on the path). We first prove that if $P$ has length at least three then this is impossible. We use a characterization of claws in prime graphs due to Brandst{\"a}dt et al. \cite{brandstadt2004minimal} in order to handle the case where $P$ has length two. In this case we may not be able to directly reach $J$, but then some vertex of $G$ is permanently blocked and can therefore be deleted safely by Lemma~\ref{lem:ruleZ}.

Let $u, v \in V$ and let $P = u_1, \ldots u_\ell$ be a $u-v$ path in $G$. The \emph{leftmost} (resp. \emph{rightmost}) neighbor of $x \in N(V(P))$ on $P$ is the vertex of $V(P) \cap N(x)$ with the lowest (resp. highest) index. 

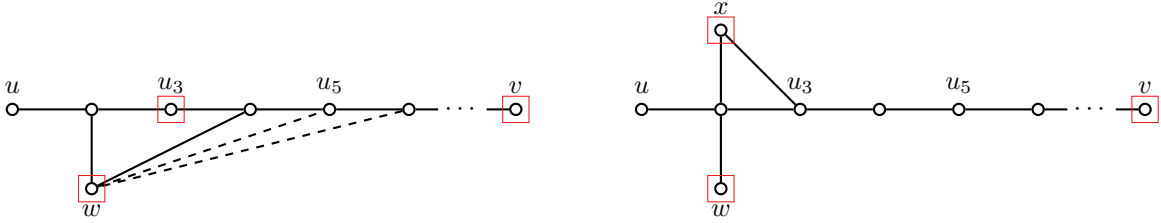
\begin{figure}
    \centering
    \begin{subfigure}[t]{.45\linewidth}
        \begin{tikzpicture}[vertex/.style={shape=circle,thick,draw,node distance=3em,inner sep=0.15em},edge/.style={draw,thick}]
            \node[vertex,label=above:$u$] (u1) {};
            \node[vertex,right of=u1] (u2) {};
            \node[vertex,right of=u2,label=above:$u_3$] (u3) {};
            \node[vertex,right of=u3] (u4) {};
            \node[vertex,right of=u4,label=above:$u_5$] (u5) {};
            \node[vertex,right of=u5] (u6) {};
            \node[right=0.75em of u6] (dots) {$\cdots$};
            \node[vertex,right=0.75em of dots,label=above:$v$] (u7) {};
            \node[vertex,below of=u2,label=below:$w$] (w) {};

            \draw[edge] (u1) -- (u2) -- (u3) -- (u4) -- (u5) -- (u6);
            \draw[edge] (u2) -- (w) -- (u4);
            \draw[edge,dashed] (u6) -- (w) --(u5);

            \draw [edge] (u6.0) -- +(0:0.3);
            \draw [edge] (u7.180) -- +(180:0.3);

            \path (u7.center) coordinate[draw,rectangle,minimum size=3.5mm,inner sep=0pt,outer sep=0pt,Red];
            \path (u3.center) coordinate[draw,rectangle,minimum size=3.5mm,inner sep=0pt,outer sep=0pt,Red];
            \path (w.center) coordinate[draw,rectangle,minimum size=3.5mm,inner sep=0pt,outer sep=0pt,Red];
        \end{tikzpicture}
    \end{subfigure}
    \hspace{2em}
    \begin{subfigure}[t]{.45\linewidth}
        \begin{tikzpicture}[vertex/.style={shape=circle,thick,draw,node distance=3em,inner sep=0.15em},edge/.style={draw,thick}]
            \node[vertex,label=above:$u$] (u1) {};
            \node[vertex,right of=u1] (u2) {};
            \node[vertex,right of=u2,label=above:$u_3$] (u3) {};
            \node[vertex,right of=u3] (u4) {};
            \node[vertex,right of=u4,label=above:$u_5$] (u5) {};
            \node[vertex,right of=u5] (u6) {};
            \node[right=.75em of u6] (dots) {$\cdots$};
            \node[vertex,right=0.75em of dots,label=above:$v$] (u7) {};
    
            \node[vertex,below of=u2,label=below:$w$] (w) {};
            \node[vertex,above of=u2,label=above:$x$] (x) {};
            
            \draw[edge] (u1) -- (u2) -- (u3) -- (u4) -- (u5) -- (u6);
            \draw[edge] (u3) -- (x) -- (u2) -- (w);
            \draw [edge] (u6.0) -- +(0:0.3);
            \draw [edge] (u7.180) -- +(180:0.3);
            
            \path (u7.center) coordinate[draw,rectangle,minimum size=3.5mm,inner sep=0pt,outer sep=0pt,Red];
            \path (x.center) coordinate[draw,rectangle,minimum size=3.5mm,inner sep=0pt,outer sep=0pt,Red];
            \path (w.center) coordinate[draw,rectangle,minimum size=3.5mm,inner sep=0pt,outer sep=0pt,Red];
        \end{tikzpicture}
        \end{subfigure}
    \caption{An illustration in the two cases of the proof of Lemma \ref{lem:first-neighbor-path}. The dashed lines on the left indicate non-edges and the red squares indicate tokens.}
    \label{fig:equiv}
\end{figure}

\begin{lemma}\label{lem:first-neighbor-path}
	Let $u$ be an $I$-free vertex of $G$ and let $v \in I$. Let $P$ be
	a shortest path from $u$ to $v$ in $G$. If $P$ has length at least three then
	the neighbor of $u$ in $P$ has at most one neighbor in $I$.
\end{lemma}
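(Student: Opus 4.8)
Write $P = u_1 u_2 \cdots u_\ell$ with $u_1 = u$, $u_\ell = v$, and $\ell \geq 4$ (length at least three). Let $w := u_2$ be the neighbor of $u$ on $P$. Suppose for contradiction that $w$ has at least two neighbors in $I$, say $a, b \in N(w) \cap I$ with $a \neq b$. Since $G$ is $I$-reduced, Observation~\ref{lem:at-most-two} tells us $|N(w) \cap I| \leq 2$, so $N(w) \cap I = \{a,b\}$ exactly. Note $v = u_\ell \in I$, and since $P$ is a shortest path and $\ell \geq 4$, the vertex $v$ is not adjacent to $w = u_2$; so $v \notin \{a,b\}$, and $v$ is a \emph{third} token once we locate $a, b$ near $w$.

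**The plan.** The idea is to produce an induced fork centered at $w$, using $u = u_1$ as one leaf (it is $I$-free, hence nonadjacent to $a, b$), using $u_3$ as a path extension, and using the tokens $a, b$ and $v$ to supply the remaining structure — contradicting fork-freeness. The key relations to establish are: (i) $u_1$ is adjacent to neither $a$ nor $b$ (immediate, since $u$ is $I$-free); (ii) $u_1 \not\sim u_3$ (since $P$ is shortest, $u_1$ and $u_3$ are at distance $2$); (iii) at least one of $a, b$, say $a$, is nonadjacent to $u_3$ — this is where shortest-path-ness of $P$ enters, because if both $a$ and $b$ were adjacent to $u_3$ then... actually one must be careful here, so let me instead split on whether $a$ or $b$ lies on $P$ or not. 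First I would handle the case $a, b \notin V(P)$: then I claim $\{w, u_1, u_3, a, b\}$ or a suitable $5$-set induces a fork. We have $w \sim u_1$, $w \sim u_3$, $w \sim a$, $w \sim b$; and $u_1 \not\sim u_3$, $u_1 \not\sim a$, $u_1 \not\sim b$ (the last two since $u$ is $I$-free). If additionally $a \not\sim b$ and at least one of $u_3 \sim a$, $u_3 \sim b$ fails — say $u_3 \not\sim a$ — then $\{a, w, u_1, u_3\}$ together with... hmm, I need a path of length $2$ hanging off the claw. Take the claw with center $w$ and leaves $u_1, u_3, a$ (valid if $u_3 \not\sim a$, $u_1 \not\sim a$, $u_1 \not\sim u_3$, and $a \not\sim b$ is not yet needed): to get a fork I subdivide one edge, so I need a vertex adjacent to exactly one leaf and to nothing else in the claw. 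The vertex $b$ is adjacent to $w$, so that does not subdivide a leaf-edge. Instead, use $v$: since $\ell \geq 4$, $v = u_\ell$ is nonadjacent to $w = u_2$ and to $u_1 = u_3$?? No — $v \sim u_{\ell-1}$, and if $\ell = 4$ then $v = u_4 \sim u_3$. So when $\ell = 4$, attach $v$ to the leaf $u_3$: then $\{w; u_1, u_3, a\}$ is a claw and $v \sim u_3$, $v \not\sim w$ (distance $2$ on shortest path), $v \not\sim u_1$ (distance $3$), and I need $v \not\sim a$ — if $v \sim a$ then $a$ is adjacent to both $w = u_2$ and $v = u_4$, making $u_2 a u_4$ a path of length $2$, fine, no contradiction with shortestness, so this sub-subcase needs separate treatment (use the other token $b$ as the subdividing vertex, or swap roles of $a$ leaf). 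This case analysis is the crux.

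**Where I expect trouble, and the remedy.** The main obstacle is the bookkeeping: ruling out, in every configuration of adjacencies among $\{a, b, u_3, v\}$, that no induced fork centered at $w$ arises — because some of these "bad" adjacencies (e.g. $a \sim b$, or $a \sim v$, or $a \sim u_3$ and $b \sim u_3$) destroy one candidate fork but must then be shown to create another. The clean way to organize this, which I would adopt, is: first argue $v \neq u_3$ is a token at distance $\geq 2$ from $w$ and distance $\geq 3$ from $u$ on a shortest path; then observe that among $\{u_1\} \cup \{a, b\}$ we have three pairwise-nonadjacent vertices in $N(w)$ (using $u_1$ $I$-free and — if $a \sim b$ fails — we have a claw immediately at $w$), so if $a \sim b$ we instead have only $\{u_1, a\}$ and $\{u_1, b\}$ as independent leaf-pairs and must route through $u_3$ or $v$ more carefully. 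In the $a \sim b$ subcase, note $\{w, a, b\}$ is a triangle while $u_1 \not\sim a, b$; then consider the path $u_1 - w - a$ and the vertex $u_3$ (nonadjacent to $u_1$): if $u_3 \not\sim a$ and $u_3 \not\sim b$ then $\{u_3, w, a, b\}$... this is getting intricate, so in the final write-up I would most likely follow the paper's own hint and invoke \emph{shortestness of $P$} to force $a$ (or $b$) off $P$ and nonadjacent to $u_3$, reducing to a single canonical fork $\{w; u_1, u_3, a\}$ subdivided at the $u_3$–side by the continuation $u_4$ of $P$ (valid because $\ell \geq 4$ guarantees such a $u_4$ or longer tail exists and lies at the right distances), and handle the residual adjacency $u_4 \sim a$ by pushing the subdivision one more step down $P$, using that $P$ has length $\geq 3$ so enough vertices are available. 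So the real content is: \emph{length $\geq 3$ gives enough room on $P$ to always find a fork}, and the formal proof is a finite case check on the first few vertices of $P$ versus $a, b$.

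\begin{proof}[Proof sketch to be expanded]
	Let $P = u_1 \cdots u_\ell$ with $u_1 = u$, $u_\ell = v$, $\ell \geq 4$, and set $w = u_2$. Suppose $|N(w) \cap I| \geq 2$; by Observation~\ref{lem:at-most-two}, $N(w) \cap I = \{a, b\}$ with $a \neq b$. Since $u$ is $I$-free, $u \notin I$ and $u \not\sim a$, $u \not\sim b$. Since $P$ is shortest, $u_1 \not\sim u_3$ and, as $\ell \geq 4$, $v = u_\ell \not\sim w$; in particular $v \notin \{a, b\}$.

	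\textbf{Case 1: some $x \in \{a,b\}$ satisfies $x \notin V(P)$ and $x \not\sim u_3$.} Then $\{w; u_1, u_3, x\}$ induces a claw (all of $u_1, u_3, x \in N(w)$; $u_1 \not\sim u_3$; $u_1 \not\sim x$; $u_3 \not\sim x$). The vertex $u_4$ exists since $\ell \geq 4$; it satisfies $u_4 \sim u_3$, $u_4 \not\sim w$ and $u_4 \not\sim u_1$ (shortest path). If also $u_4 \not\sim x$, then $\{u_4, u_3, w, u_1, x\}$ induces a fork with center $w$, a contradiction. If $u_4 \sim x$ we instead use that $\ell \geq 4$ forces $v = u_\ell$ at distance $\geq 2$ from $w$; replacing $u_4$ by the appropriate later vertex of $P$ and iterating (each step of $P$ is nonadjacent to $w, u_1$), we eventually reach a vertex of $P$ nonadjacent to $x$ or reach $v$ itself, and in every case obtain an induced fork centered at $w$, a contradiction.

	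\textbf{Case 2: the previous case fails.} Then each of $a, b$ lies on $P$ or is adjacent to $u_3$. If $a$ or $b$ lies on $P$, it must be $u_3$ (the only neighbor of $w$ on $P$ other than $u_1 \notin I$), which forces the other token, say $b$, to be adjacent to $u_3$ or on $P$ — impossible for a second one — so $b \notin V(P)$ and $b \sim u_3$; symmetrically we may assume $a \sim u_3$. Now $a, b \in N(u_3)$ and $w \in N(u_3)$, with $a \not\sim w$? No: $a \sim w$. Hence $u_3$ is adjacent to the three $I$-vertices... we must check independence: if $a \not\sim b$, then $\{u_3; a, b, w\}$ plus the edge $w u_1$ (with $u_1 \not\sim u_3, a, b$) gives a fork centered at $u_3$, a contradiction. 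If $a \sim b$, then $\{w, a, b\}$ is a triangle, $u_1 \not\sim a, b$, and $v$ supplies the final vertex: as $\ell \geq 4$, $v \not\sim w$ and $v \not\sim u_1$, and a short case check on $v \sim a$, $v \sim b$ again produces an induced fork (centered at $w$ or $u_3$), a contradiction.

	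In all cases we contradict that $G$ is fork-free; hence $|N(w) \cap I| \leq 1$.
\end{proof}
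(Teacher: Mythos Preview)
Your sketch has two concrete errors and one real gap. First, the entire ``$a \sim b$'' discussion is vacuous: $a,b \in I$, so they are automatically non-adjacent. Second, and fatally for your Case~2, the set $\{u_3; a, b, w\}$ (with $w=u_2$) is \emph{not} a claw: both $a$ and $b$ are adjacent to $u_2$ by hypothesis, so two of the three ``leaves'' are joined to the third. The fork you name there simply does not exist. A correct treatment of this configuration (both tokens off $P$, both adjacent to $u_3$) has to look further down $P$: e.g.\ if $u_4$ is adjacent to exactly one of $a,b$ you get a fork on $\{u_1,u_2,a,b,u_4\}$; if $u_4$ is adjacent to neither you get a claw $\{u_3;a,b,u_4\}$ extendable by $u_5$; and if $u_4$ is adjacent to both you need $I$-reducedness to control $\ell$ and push further.

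The ``iteration'' in your Case~1 when $u_4 \sim x$ is not an argument either. Your fork $\{u_4,u_3,u_2,u_1,x\}$ uses $u_4$ as the pendant attached to the leaf $u_3$; replacing $u_4$ by any later $u_k$ destroys this since $u_k \not\sim u_3$ for $k\ge 5$ (shortest path). You cannot just slide the pendant along $P$ and keep the same claw. What actually works is that shortestness forces $x\not\sim u_j$ for $j\ge 5$, so if $u_4\sim x$ you must \emph{re-center} the claw at $u_4$ (leaves $u_3,u_5,x$) and then find a pendant at $u_6$---which in turn needs $\ell\ge 6$, and the small-$\ell$ cases have to be killed separately via $I$-reducedness.

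The paper avoids this spiral by splitting differently: Case~1 is ``$u_3\in I$'' (one token on $P$), Case~2 is ``both tokens off $P$''. In Case~2 the organizing invariant is the \emph{rightmost} neighbor of each token on $P$, which makes the fork explicit in every subcase; in the one residual subcase ($j=3$) the paper even slides a token from $x$ to $u_3$ to reduce to Case~1, using that $G$ remains reduced after a slide (Observation~\ref{lem:only-reducible}). Your decomposition could be made to work, but as written both cases are broken.
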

\begin{proof}
	Let $P = u_1, \ldots, u_\ell$ be a shortest $u - v$ path in $G$ such that $\ell \geq 4$. Suppose for a contradiction that $u_2$ has two neighbors in $I$ (it cannot have more than two since $G$ is $I$-reduced). There are two cases to consider. 

	\paragraph{Case 1:} $u_3 \in I$ and $u_2$ has a neighbor $w \in N(V(P))$ in $I$ (see Figure \ref{fig:equiv} (left)).  
	Notice that $u$, $u_3$, and $w$ are pairwise non-adjacent since $u$ is $I$-free and $u_3, w \in I$. Since $\ell \geq 4$, the edge $wu_4$ exists, otherwise $\{u_2, u_1, u_3, w, u_4\}$ induces a fork. Since $v, w \in I$,  we have $u_4 \neq v$, so $\ell \geq 5$. If $u_5 = v$, then $u_4$ has three neighbors in $I$, a contradiction to our assumption that $G$ is $I$-reduced, so $\ell \geq 6$. Then either $wu_5$ or $wu_6$ is an edge of $G$, so $P$ cannot be a shortest $u - v$ path, a contradiction.

	\paragraph{Case 2:}  $u_2$ has two neighbors $w,x \in I$ that are not on $P$ (see Figure \ref{fig:equiv} (right)). Let $u_i, u_j$ be the rightmost neighbors of $w$ and $x$ on $P$ respectively. First assume that $i = j$. Then $u_j \neq v$ since $v \in I$ and $u_{j+1} \neq v$ since  $G$ is $I$-reduced, so $\ell \geq i+2$. But then $\{u_i, w, x, u_{i+1}, u_{i+2}\}$ induces a fork, a contradiction. We therefore assume without loss of generality that $j > i \geq 2$. If $j > 3$ then $\{u_2, u_1, w, x, u_j\}$ induces a fork. If $j = 3$ then $w$ has no other neighbor on $P$ but $u_2$ (since $w$ is non-adjacent to $u$ and $u_i$). Then we can slide the token on $x$ to $u_3$: if not, then another token on a vertex $z$ must prevent this move. But then $\{u_2, u_1, w, u_3, z\}$ induces a fork. We may therefore slide the token on $x$ to $u_3$ and the previous case applies. 
\end{proof}

\begin{lemma}\label{lem:no-two-neigh-path}
Let $u$ be an $I$-free vertex of $G$, $v \in I$, and let $P$ a be shortest path $u-v$ path in $G$. If $P$ has length at least three then no two vertices of $N[V(P)] \cap I$ can have the same leftmost neighbor on $P$. 
\end{lemma}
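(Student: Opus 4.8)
The plan is to argue by contradiction: suppose two vertices $a, b \in N[V(P)] \cap I$ share the same leftmost neighbor $u_i$ on $P$. First I would use Lemma~\ref{lem:first-neighbor-path} to control the position of $u_i$. Since $u$ is $I$-free, $u_1 = u$ has no token, and its only neighbor on $P$ with a token nearby is restricted: Lemma~\ref{lem:first-neighbor-path} says $u_2$ has at most one neighbor in $I$, so $u_2$ cannot be the common leftmost neighbor of two distinct tokens. Hence $i \geq 3$. Also note $a$ and $b$ are non-adjacent (both in $I$), and each of them is either on $P$ or adjacent to $P$; since $u_i$ is their \emph{leftmost} neighbor, neither $a$ nor $b$ is adjacent to any $u_j$ with $j < i$, and in particular neither equals $u_{i-1}$ (which is adjacent to $u_i$ but then would have $u_i$ as a right neighbor, not relevant) — more carefully, if $a = u_{i'}$ is on the path then its leftmost neighbor on $P$ is $u_{i'-1}$ or $u_{i'+1}$, so $a$ on $P$ with leftmost neighbor $u_i$ forces $a \in \{u_{i-1}, u_{i+1}\}$; but $u_{i-1}$ would need $u_i$ to be its leftmost neighbor which fails since $u_{i-2}$ is also adjacent (when $i \geq 3$), so the only path-vertex possibility is $a = u_{i+1}$. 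At most one of $a,b$ can be $u_{i+1}$, so at least one of them, say $b$, is off the path and adjacent to $u_i$ but to no earlier vertex of $P$.

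Next I would build a fork. Consider the vertex $u_{i-1}$, which is adjacent to $u_i$ and (since $i \geq 3$) to $u_{i-2}$, and which is non-adjacent to $u$ is not needed — what matters is that $u_{i-2}$ exists. The candidate fork is centered at $u_i$ with leaves built from $u_{i-1}$ (extended to $u_{i-2}$, giving the subdivided edge) and the two tokens $a, b$. Concretely, if both $a$ and $b$ are off the path: $a, b$ are non-adjacent to each other and to $u_{i-1}$ and $u_{i-2}$ (non-adjacency to $u_{i-1}, u_{i-2}$ because $u_i$ is their leftmost neighbor), and non-adjacent to $u_{i-2}$; then $\{u_i, a, b, u_{i-1}, u_{i-2}\}$ induces a fork with center $u_i$, contradiction. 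If one of them, say $a = u_{i+1}$, is on the path: then $b$ is off-path, non-adjacent to $u_{i-1}$ and $u_{i-2}$, and $u_{i+1}$ is non-adjacent to $u_{i-1}$ (as $P$ is an induced shortest path) and to $u_{i-2}$; again $\{u_i, u_{i+1}, b, u_{i-1}, u_{i-2}\}$ induces a fork centered at $u_i$, contradiction. In either case we contradict fork-freeness of $G$.

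The main obstacle I anticipate is the careful bookkeeping in the boundary case $i = 3$ (so $u_{i-2} = u_1 = u$), and ruling out spurious adjacencies: I must check that $a, b$ are genuinely non-adjacent to $u_{i-2}$ — this is exactly the statement that $u_i$ is their \emph{leftmost} neighbor, so no adjacency to $u_j$ for $j < i$, in particular $j = i-2$ — and that $u_{i+1}$, when it plays the role of a leaf, is non-adjacent to $u_{i-2}$, which follows because $P$ is a shortest (hence induced) path, so $u_{i+1}$ is adjacent on $P$ only to $u_i$ and $u_{i+2}$. One should also separately dispatch the degenerate sub-case where the common leftmost neighbor sits so close to $u$ that $u_{i-2}$ does not exist, i.e. $i \le 2$; but $i \ge 3$ was already forced by Lemma~\ref{lem:first-neighbor-path} together with $I$-freeness of $u$, so this case is vacuous. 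A final sanity check is that the five listed vertices are genuinely distinct, which holds since tokens are distinct from path-vertices except for the explicitly handled case $a = u_{i+1}$, and $a \ne b$, $u_{i-2} \ne u_{i-1} \ne u_i$ along an induced path.
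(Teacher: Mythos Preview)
Your proposal is correct and follows essentially the same approach as the paper: assume two tokens $a,b$ share leftmost neighbor $u_i$, use Lemma~\ref{lem:first-neighbor-path} (and $I$-freeness of $u$) to force $i\ge 3$, and then exhibit the fork $\{u_i,a,b,u_{i-1},u_{i-2}\}$. The paper's proof is simply the terse version of yours; your extra case analysis (whether one of $a,b$ equals $u_{i+1}$, and the distinctness check) only makes explicit what the paper leaves implicit.
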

\begin{proof} 
	Let $P = u_1, u_2, \dots, u_\ell$, such that $\ell \geq 4$.
	Suppose for a contradiction that there exists $w,x \in N[V(P)] \cap I$
	that have the same leftmost neighbor $u_i \in P$.
	By Lemma \ref{lem:first-neighbor-path}, we
	have $i > 2$. Notice that $u_{i-2}$ and $u_i$ are non-adjacent. Since $u_i$ is the leftmost neighbor of $w$ and $x$, we have that $\{w, x, u_i, u_{i-1}, u_{i-2}  \}$ induces a fork.
\end{proof}

The two previous lemmas allow us to deal with the case where a shortest path
from $v \in I$ to $u \notin N[I]$ has length at least three. The following
theorem, due to Brandst{\"a}dt et al., will be useful in order to deal with the
case where a shortest $u - v$ path has length two:

\begin{figure}
    \centering
    \begin{subfigure}[t]{.2\linewidth}
        \centering
        \begin{tikzpicture}[vertex/.style={shape=circle,thick,draw,node distance=3em,inner sep=0.15em},edge/.style={draw,thick}]
            \node[vertex,label=left:$c$] (c) {};
            \node[vertex,above of=c,label=left:$u$] (u) {};
            \node[vertex,below of=c,label=left:$w$] (w) {};
            \node[vertex,right of=c,label=right:$v$] (v) {};
            \node[vertex,above of=v,label=right:$x$] (x) {};
            \node[vertex,below of=v,label=right:$y$] (y) {};
            \draw[edge] (u) -- (x) -- (v) -- (y) -- (w);
            \draw[edge,ultra thick] (w) -- (c) -- (u);
            \draw[edge,ultra thick] (c) -- (v);
        \end{tikzpicture}
        \caption{$H_1$\label{fig:h1}}
    \end{subfigure}
    \begin{subfigure}[t]{.2\linewidth}
        \centering
        \begin{tikzpicture}[vertex/.style={shape=circle,thick,draw,node distance=3em,inner sep=0.15em},edge/.style={draw,thick}]
            \node[vertex,label=left:$c$] (c) {};
            \node[vertex,above of=c,label=left:$u$] (u) {};
            \node[vertex,below of=c,label=left:$w$] (w) {};
            \node[vertex,right of=c,label=right:$v$] (v) {};
            \node[vertex,above of=v,label=right:$x$] (x) {};
            \node[vertex,below of=v,label=right:$y$] (y) {};
            \draw[edge] (u) -- (x) -- (v) -- (y) -- (w);
            \draw[edge,ultra thick] (w) -- (c) -- (u);
            \draw[edge,ultra thick] (c) -- (v);
            \draw[edge] (x) to [bend left=55] (y);
        \end{tikzpicture}
        \caption{$H_2$\label{fig:h2}}
    \end{subfigure}
    \begin{subfigure}[t]{.2\linewidth}
        \centering
        \begin{tikzpicture}[vertex/.style={shape=circle,thick,draw,node distance=3em,inner sep=0.15em},edge/.style={draw,thick}]
            \node[vertex,label=left:$c$] (c) {};
            \node[vertex,above of=c,label=left:$u$] (u) {};
            \node[vertex,below of=c,label=left:$w$] (w) {};
            \node[vertex,right of=c,label=right:$v$] (v) {};
            \node[vertex,above of=v,label=right:$x$] (x) {};
            \node[vertex,below of=v,label=right:$y$] (y) {};
            \draw[edge] (u) -- (x) -- (v) -- (y) -- (w);
            \draw[edge,ultra thick] (w) -- (c) -- (u);
            \draw[edge,ultra thick] (c) -- (v);
            \draw[edge] (c) -- (y);
        \end{tikzpicture}
        \caption{$H_3$\label{fig:h3}}
    \end{subfigure}
    \begin{subfigure}[t]{.2\linewidth}
        \centering
        \begin{tikzpicture}[vertex/.style={shape=circle,thick,draw,node distance=3em,inner sep=0.15em},edge/.style={draw,thick}]
            \node[vertex,label=left:$c$] (c) {};
            \node[vertex,above of=c,label=left:$u$] (u) {};
            \node[vertex,below of=c,label=left:$w$] (w) {};
            \node[vertex,right of=c,label=right:$v$] (v) {};
            \node[vertex,above of=v,label=right:$x$] (x) {};
            \node[vertex,below of=v,label=right:$y$] (y) {};
            \draw[edge] (u) -- (x) -- (v) -- (y) -- (w);
            \draw[edge,ultra thick] (w) -- (c) -- (u);
            \draw[edge,ultra thick] (c) -- (v);
            \draw[edge] (x) to [bend left=55] (y);
            \draw[edge] (c) -- (y);
        \end{tikzpicture}
        \caption{$H_4$\label{fig:h4}}
    \end{subfigure}
    \begin{subfigure}[t]{.2\linewidth}
        \centering
        \begin{tikzpicture}[vertex/.style={shape=circle,thick,draw,node distance=3em,inner sep=0.15em},edge/.style={draw,thick}]
            \node[vertex,label=left:$c$] (c) {};
            \node[vertex,above of=c,label=left:$u$] (u) {};
            \node[vertex,below of=c,label=left:$w$] (w) {};
            \node[vertex,right of=c,label=right:$v$] (v) {};
            \node[vertex,above of=v,label=right:$x$] (x) {};
            \node[vertex,below of=v,label=right:$y$] (y) {};
            \node[vertex,right of=v,label=right:$z$] (z) {};
            \draw[edge] (u) -- (x) -- (v) -- (y) -- (w);
            \draw[edge,ultra thick] (w) -- (c) -- (u);
            \draw[edge,ultra thick] (c) -- (v);
            \draw[edge] (x) to [bend left=45] (y);
            \draw[edge] (x) -- (c) -- (y);
            \draw[edge] (x) -- (z) -- (y);
        \end{tikzpicture}
        \caption{$H_5$\label{fig:h5}}
    \end{subfigure}
    \begin{subfigure}[t]{.2\linewidth}
        \centering
        \begin{tikzpicture}[vertex/.style={shape=circle,thick,draw,node distance=3em,inner sep=0.15em},edge/.style={draw,thick}]
            \node[vertex,label=left:$c$] (c) {};
            \node[vertex,above of=c,label=left:$u$] (u) {};
            \node[vertex,below of=c,label=left:$w$] (w) {};
            \node[vertex,right of=c,label=right:$v$] (v) {};
            \node[vertex,above of=v,label=right:$x$] (x) {};
            \node[vertex,below of=v,label=right:$y$] (y) {};
            \node[vertex,right of=v,label=right:$z$] (z) {};
            \draw[edge,ultra thick] (u) -- (x) -- (v);
            \draw[edge] (v) -- (y) -- (w);
            \draw[edge,ultra thick] (x) -- (z);
            \draw[edge] (w) -- (c) -- (u);
            \draw[edge] (c) -- (v);
            \draw[edge] (x) to [bend left=45] (y);
            \draw[edge] (x) -- (c) -- (y);
            \draw[edge] (z) -- (y);
        \end{tikzpicture}
        \caption{$H_5$ (2nd claw)\label{fig:h52}}
    \end{subfigure}
    \begin{subfigure}[t]{.2\linewidth}
        \centering
        \begin{tikzpicture}[vertex/.style={shape=circle,thick,draw,node distance=3em,inner sep=0.15em},edge/.style={draw,thick}]
            \node[vertex,label=left:$c$] (c) {};
            \node[vertex,above of=c,label=left:$u$] (u) {};
            \node[vertex,below of=c,label=left:$w$] (w) {};
            \node[vertex,right of=c,label=right:$v$] (v) {};
            \node[vertex,above of=v,label=right:$x$] (x) {};
            \node[vertex,below of=v,label=right:$y$] (y) {};
            \node[vertex,right of=v,label=right:$z$] (z) {};
            \draw[edge] (u) -- (x) -- (v);
            \draw[edge,ultra thick] (v) -- (y) -- (w);
            \draw[edge] (x) -- (z);
            \draw[edge] (w) -- (c) -- (u);
            \draw[edge] (c) -- (v);
            \draw[edge] (x) to [bend left=45] (y);
            \draw[edge] (x) -- (c) -- (y);
            \draw[edge,ultra thick] (z) -- (y);
        \end{tikzpicture}
        \caption{$H_5$ (3rd claw)\label{fig:h53}}
    \end{subfigure}
    \caption{The fork-free prime expansions of a claw. The claws considered in Lemma \ref{lem:rotation-claw} are marked by thick edges. The graph $H_5$ is shown three times, once for each of its induced claws.}
    \label{fig:claw_expansions}
\end{figure}
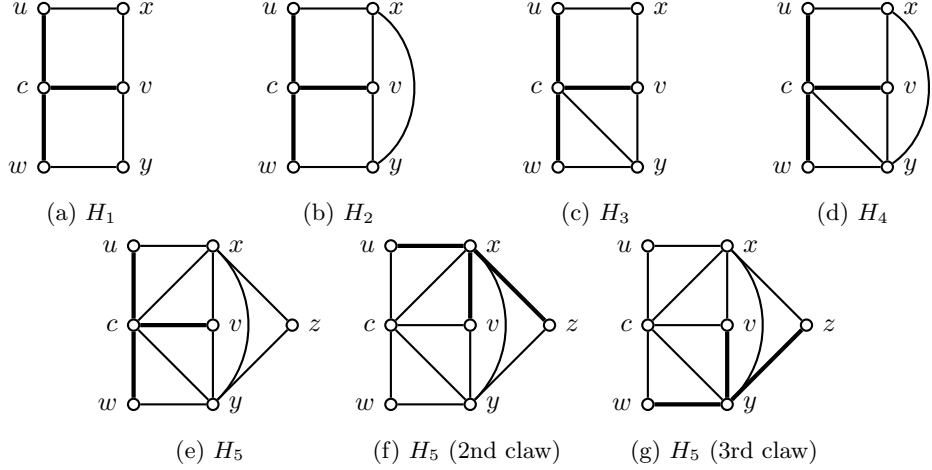

\begin{theorem}[\cite{brandstadt2004minimal}]\label{thm:extension-claw}
    Let $G$ be a fork-free prime graph which contains a claw induced by the vertices $\{c, u, v ,w\}$. Then one of the graphs $H_1, \ldots, H_5$ shown in Figure \ref{fig:claw_expansions} is an induced subgraph of $G$.
\end{theorem}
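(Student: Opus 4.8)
The plan is to reduce the statement to a finite check. Fix the claw $\{c,u,v,w\}$ with center $c$. Among all induced subgraphs of $G$ that contain $\{c,u,v,w\}$ and are prime there is an inclusion‑minimal one, $G'$ (it exists since $G$ itself qualifies), and $\{c,u,v,w\}$ still induces a claw in $G'$. It therefore suffices to show that a fork‑free prime graph $G'$ in which $\{c,u,v,w\}$ induces a claw and no proper induced subgraph containing $\{c,u,v,w\}$ is prime must be isomorphic to one of $H_1,\dots,H_5$; the theorem then follows because $G'$ is an induced subgraph of $G$. Throughout, two facts do all the work: primeness forces the existence of vertices \emph{separating} pairs of vertices of a known induced subgraph --- a subset $S$ with $|S|\geq 2$ that no outside vertex separates is a non‑trivial module --- and fork‑freeness tightly restricts how any such separating vertex attaches to the vertices already placed.

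The main local lemma is: no vertex $z\notin\{c,u,v,w\}$ is adjacent to exactly one of $u,v,w$ while missing $c$; indeed if $z\sim u$ and $z$ is non‑adjacent to $c,v,w$ then $\{c,u,v,w,z\}$ induces a fork with center $c$, leaves $v,w$ and leg $c\,u\,z$. Hence every vertex outside the claw has one of only twelve admissible \emph{attachment types}, recorded by its neighborhood in $\{c,u,v,w\}$ (empty; $\{c\}$; $\{c\}$ together with one or two leaves; two or three leaves without $c$; all four). Since $G'$ is prime and $|\{u,v,w\}|\geq 2$, the set $\{u,v,w\}$ is not a module, so by the local lemma $G'$ has a vertex attached to the claw at two leaves, or at one leaf and $c$; fixing such a vertex and exploiting the symmetry among $u,v,w$ splits the proof into a bounded number of main cases, each giving a partial expansion of the claw on five or six vertices. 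In each case one reapplies primeness to this partial expansion: whichever of its $\geq 2$‑element subsets could still be a module --- typically $\{c,z\}$ where $z$ is attached to $c$ and two leaves, or a pair of external vertices with identical attachment --- must be separated by a further vertex, whose attachment is again pinned down using a short list of forbidden five‑ and six‑vertex configurations obtained from fork‑freeness. Minimality of $G'$ then forces every added vertex to lie in the growing configuration.

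After a constant number of iterations every branch terminates in an induced copy of one of $H_1,\dots,H_5$: two non‑adjacent external vertices of types $\{u,v\}$ and $\{v,w\}$ give $H_1$, adding the edge between them gives $H_2$, upgrading the second to type $\{c,v,w\}$ gives $H_3$ (or $H_4$ if that edge is also present), and in the branch where both external vertices are adjacent to $c$ the center $c$ becomes dominating on the six vertices found so far, so $V\setminus\{c\}$ is a module and primeness forces one last vertex adjacent to exactly those two external vertices --- which, after a fork‑freeness check, produces $H_5$. The main obstacle is the bookkeeping: one must carefully enumerate the attachment types of the first one or two separating vertices and, above all, handle the degenerate cases where a separating vertex is adjacent to $c$ (these create near‑twins of the center and are precisely what forces the larger expansions $H_3,H_4,H_5$), as well as verify that the process never produces more than the seven vertices of $H_5$. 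Working with the inclusion‑minimal prime subgraph $G'$ from the outset, rather than building up greedily inside $G$, is what keeps this termination and finiteness argument under control.
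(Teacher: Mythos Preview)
The paper does not prove this theorem at all: it is quoted verbatim from Brandst\"adt, Ho\`ang, and Le~\cite{brandstadt2004minimal} and used as a black box in the proof of Lemma~\ref{lem:rotation-claw}. There is therefore no ``paper's own proof'' to compare your attempt against.

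As for your sketch itself, the overall strategy---pass to an inclusion-minimal prime induced subgraph containing the claw, then use fork-freeness to restrict attachment types and primeness to force separating vertices---is the standard route to results of this type and is essentially how the cited paper proceeds. Your local lemma (a vertex adjacent to exactly one leaf and missing the center yields a fork) is correct and does most of the structural work. That said, what you have written is a proof \emph{plan}, not a proof: the phrases ``after a constant number of iterations every branch terminates'' and ``the main obstacle is the bookkeeping'' are precisely where the content lies, and you have not carried it out. In particular, the claim that minimality of $G'$ ``forces every added vertex to lie in the growing configuration'' needs justification---you must argue that each separating vertex you adjoin keeps the current subgraph prime (or can be chosen to), else minimality gives you nothing. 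If you intend to include a self-contained proof, you will need to actually execute the case analysis; otherwise, citing~\cite{brandstadt2004minimal} as the paper does is appropriate.
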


We use this characterization of the prime extensions of a claw in a fork-free graph to prove that if we are given a claw containing two tokens, we can either ``rotate'' the tokens around the center of the claw locally without moving any other token or that the center of the claw is permanently blocked and can thus be safely deleted from the graph by Lemma~\ref{lem:ruleZ}.

\begin{lemma}\label{lem:rotation-claw}
Let $I$ be an independent set of a fork-free prime graph $G$ such that $G$ is $I$-reduced. Let $\{c,u,v,w\}$ induce a claw in $G$ with center $c$ such that $\{u,v\} \subseteq I$. Then exactly one of the following holds:
\begin{enumerate}
    \item There exists a reconfiguration sequence from $I$ to $I - v + w$ and from $I$ to $I - u + v$ such that only the tokens initially on $u$ and $v$ are moved and such a sequence can be found in polynomial time, or
    \item the vertex $c$ is permanently blocked.
\end{enumerate}
\end{lemma}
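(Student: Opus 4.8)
The plan is to reduce the statement to a finite case analysis via the classification of the prime fork-free expansions of a claw (Theorem~\ref{thm:extension-claw}) and, in each case, either construct the two ``rotation'' sequences of item~1 explicitly or recognise that $c$ is permanently blocked. First I would record the easy consequences of the hypotheses: since $G$ is $I$-reduced and $u,v\in I\cap N(c)$, the vertex $c$ has exactly two tokens in its neighbourhood, so in particular $w\notin I$ and $u$ and $v$ have no common neighbour in $I$. I would also observe that the two alternatives are mutually exclusive: whenever a reconfiguration sequence slides a token from a vertex of $N(c)$ to a vertex outside $N(c)$, the vertex $c$ is not permanently blocked by definition, and each rotation sequence constructed below contains such a step. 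Hence it suffices to prove that at least one alternative holds; when $c$ is permanently blocked it can afterwards be deleted safely by Lemma~\ref{lem:ruleZ}.

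Applying Theorem~\ref{thm:extension-claw} to the claw $\{c,u,v,w\}$, one of $H_1,\dots,H_5$ of Figure~\ref{fig:claw_expansions} is an induced subgraph of $G$. I would label its auxiliary vertices $x$ and $y$ (and $z$ in the case of $H_5$) as in the figure, so that $x$ lies on an induced $c$-avoiding path $u,x,v$ and $y$ on an induced $c$-avoiding path $v,y,w$. Since $x$ is adjacent to $u,v\in I$ and $y$ to $v\in I$, we have $x,y\notin I$; in $H_5$ the vertex $z$ is, within $H_5$, adjacent only to $x$ and $y$, so it may or may not belong to $I$. The first rotation is then obtained by sliding the token from $v$ to $y$ and on to $w$, reaching $I-v+w$; the second rotation continues by sliding the token from $u$ to $x$ and on to the leaf $v$ freed in the first rotation (the ``$I-u+v$'' move of item~1). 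Both sequences move only the tokens originally on $u$ and $v$, and (in each $H_i$, possibly after inserting a short detour through the ``diagonal'' vertex $z$ of $H_5$, which is non-adjacent to $c$) at least one of their steps slides a token from $N(c)$ to a vertex outside $N(c)$, as needed for the mutual exclusivity.

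The bulk of the work is to certify that these slides are valid, i.e.\ that no other token blocks an intermediate vertex ($x$ or $y$) or the landing vertex ($w$). A potential blocker is a vertex $p\in I$, distinct from the token currently moving, that is adjacent to one of $x,y,w$. I would rule this out using fork-freeness: each of $x,y,w$ together with three pairwise non-adjacent neighbours induces a claw, the paths involved are induced, $c$ already carries its two tokens (so $p\not\sim c$), and there are further edges forced by the particular $H_i$; combining these, such a $p$ forces an induced fork, typically on $p$, the blocked vertex, two of its neighbours and one further vertex of $I$. Where a particular route is blocked, one reroutes through the extra structure of $H_i$ (for instance through $z$ in $H_5$).

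The one configuration in which no such rerouting is possible is the symmetric case in which an obstructing token sits so as to block \emph{every} $c$-avoiding escape from $\{u,v\}$ at once; concretely this is $H_5$ with $z\in I$, where the token on $z$ blocks both $x$ and $y$. In this case I would show that $c$ is permanently blocked, i.e.\ that \emph{every} independent set reachable from $I$ still has two tokens in $N(c)$ and admits no slide of a token from $N(c)$ to a vertex outside $N(c)$. I expect this to be the main obstacle, since it concerns \emph{all} reachable configurations rather than $I$ alone: the argument would again go through fork-freeness --- any vertex of $G$ outside the displayed copy of $H_5$ that could act as an escape route induces a fork with vertices among $u,v,w,x,y,z$ --- together with a counting argument in the spirit of Lemma~\ref{lem:ktoken}, which propagates the resulting ``frozen'' local structure along any reconfiguration sequence. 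Putting the cases together gives that exactly one of (1) and (2) holds.
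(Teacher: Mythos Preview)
Your overall strategy—invoke Theorem~\ref{thm:extension-claw}, attempt the two rotations $v\to y\to w$ and then $u\to x\to v$, and exclude blockers via fork-freeness—is the paper's strategy. The gap is in your identification of the obstruction case, which is where the real content of the lemma lies.

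You assert that the only configuration resisting rerouting is ``$H_5$ with $z\in I$''. This case is vacuous: in $H_5$ the vertex $x$ is adjacent to $u$, $v$ and $z$, so $z\in I$ would give $x$ three $I$-neighbours, contradicting that $G$ is $I$-reduced. The genuine obstruction—never mentioned in your sketch—is an \emph{external} token $y'\in I$, i.e.\ a vertex of $G$ outside the displayed $H_i$, adjacent to $y$. In fact the paper shows that the first rotation $v\to y\to w$ always succeeds unconditionally (so your hedging about rerouting there is unnecessary); it is only the second move, pushing a token from $\{u,w\}$ back to $v$ through $y$ (or $x$, by the $u\!\leftrightarrow\!w$, $x\!\leftrightarrow\!y$ automorphism), that may be blocked by such a $y'$. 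When this happens, fork-freeness forces $y'$ to be adjacent to $x$ as well, so that the \emph{set} $X=\{c,x,y\}$ is locally blocked by $B_X=\{u,w,y'\}$.

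Your plan for certifying permanent blocking (``a counting argument in the spirit of Lemma~\ref{lem:ktoken}'') does not work here: Lemma~\ref{lem:ktoken} needs at least three token-neighbours, while $c$ has exactly two. The paper instead verifies the definition of ``permanently blocked'' directly for $X=\{c,x,y\}$ (not for $\{c\}$ alone): for each of the three tokens $u,w,y'$ in $B_X$ it shows, by a separate fork-hunt, that any slide lands on an $X$-twin. This triple case analysis is the heart of the proof and is not suggested by your outline.
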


\begin{proof}
    We first consider the case where the induced subgraph of $G$ containing the claw $\{c, u, v, w\}$ is $H_5$ shown in Figure~\ref{fig:claw_expansions}. Notice that the graphs $H_1, \ldots, H_4$ are subgraphs of $H_5$. In the following we will use two arguments: First, we show that certain edges must exist since otherwise there is an induced fork. Second, we show that certain edges cannot be present since Reduction Rule A is not applicable. In almost all cases, if these arguments apply to $H_5$ they also apply to $H_1, \ldots, H_4$, so it suffices to treat $H_5$. Finally, there is an automorphism between the first, the second and the third claw in $H_5$ shown in Figure~\ref{fig:claw_expansions}. That is, the second and the third claw can be obtained from the first by renaming vertices (preserving adjacencies and non-adjacencies), so exactly the same arguments apply as for the first claw of $H_5$.
    Let $I$ be an independent set of $G$ containing the vertices $u$ and $v$. The lemma follows from the following claims.

    \begin{claim}
        We can move the token $v$ of $I$ to $w$ via $y$.
    \end{claim}
    \begin{claimproof}
        We show that neither $w$ nor $y$ have a token in their neighborhood that is preventing us from moving the token on $v$ to $w$. Suppose that $y$ has a neighbor $y'$ containing a token. Then $c$ and $x$ are non-adjacent to $y'$ since Reduction Rule A is not applicable. Furthermore, $w$ and $y'$ are non-adjacent since otherwise $\{y'w'c'u'v\}$ induces a fork. But then $\{w,y,y',x,u\}$ induces a fork if $xy$ is present ($H_2$, $H_4$, $H_5$) and $\{w,y,y',v,x\}$ induces a fork otherwise ($H_1$ and $H_3$). Therefore $y'$ does not exist. Now suppose that $w$ has a neighbor $w'$ that contains a token. 
        Then $c$ and $w'$ are non-adjacent since Reduction Rule A is not applicable. But then $\{w',w,c,u,v\}$ induces a fork, so the token $w'$ does not exist. Therefore, we may move the token on $v$ to $w$ via $y$.
    \end{claimproof}

    \begin{claim} \label{claim:w-to-v}
        Either we can move the token $w$ of $I-v+w$ to $v$ via $y$ or the vertex $c$ is permanently blocked.
    \end{claim}
    \begin{claimproof}
        We try to move the token on $w$ to $v$ via $y$.
        The vertex $v$ has no neighbor $v'$ containing a token if two tokens are on $u$ and $w$, since otherwise $\{u,c,w,v,v'\}$ induces a fork.
        Therefore, we can move $w$ to $v$ via $y$ unless $y$ has a neighbor $y'$ with a token on it. In this case $c$ is non-adjacent to $y'$ since Reduction Rule A is not applicable. If $x$ is adjacent to $y$ ($H_2$, $H_4$, $H_5$) then $x$ is also adjacent to $y'$, since otherwise$\{u,x,y,w,y'\}$ induces a fork. Otherwise, if $x$ is non-adjacent to $y$ ($H_1$ and $H_3$) then $x$ is adjacent to $y'$, otherwise $\{x,v,y,w,y'\}$ induces a fork. It follows that $X = \{c, x, y\}$ is locally blocked, i.e., each vertex of $X$ has precisely two tokens in $B_X = \{u, w, y'\}$ in its neighborhood. In order to apply Lemma~\ref{lem:ruleZ}, we show that if we slide any token $d \in B_X$ to a vertex $d'$ then $d'$ is an $X$-twin of $d$.

        \begin{itemize}
            \item Suppose that $u \in B_X$ slides to a neighbor $u'$.  Then $y$ and $u'$ are non-adjacent by Observation~\ref{lem:only-reducible} and $v$ and $u'$ are non-adjacent since otherwise $\{w,y,z,v,u'\}$ induces a fork. Furthermore, $c$ and $u'$ are adjacent since otherwise $\{v, c,w,u,u'\}$ induces a fork. Finally, $x$ and $u'$ are adjacent since otherwise $\{u',u,x,v,y'\}$ induces a fork. Therefore, the vertex $u'$ is an $X$-neighbor of $u$.
            \item Suppose that $w \in B_X$ slides to a neighbor $w'$. Then $x$ and $w'$ are non-adjacent (otherwise $\{u,x,y',w',w\}$ induces a fork) and $v$ and $w'$ are non-adjacent (otherwise $\{u,x,z,v,w'\}$ induces a fork). Then $c$ and $w'$ are adjacent (otherwise $\{v,c,u,w,w'\}$ induces a fork) and $w'$ and $y$ are adjacent (otherwise $\{u,c,w,w',y\}$ induces a fork). Therefore, the vertex $w'$ is an $X$-neighbor of $w$.
            \item Suppose that $y' \in B_X$ slides to a neighbor $y''$. Then $v$ and $y''$ are non-adjacent (otherwise $\{u,c,w,v,y''\}$ induces a fork) and $c$ and $y''$ are non-adjacent (by Observation~\ref{lem:only-reducible}). Furthermore, $y$ and $y''$ are adjacent (otherwise $\{w,y,v,y',y''\}$ induces a fork) and $x$ and $y''$ are adjacent (otherwise $\{u,x,v,y',y''\}$ induces a fork). Therefore, the vertex $y''$ is an $X$-neighbor of $w$.
        \end{itemize}
    \end{claimproof}
    The next two claims follow from the previous ones by noticing that renaming the vertices $u$ and $w$ as well as $x$ and $y$ is an automorphism of $H_5$.
    \begin{claim}
        We can move the token $v$ of $I-u+w$ to $u$ via $x$.
    \end{claim}
    \begin{claim}
        Either we can move the token on $u$ of $I-v+w$ to $v$ via $x$ or the vertex $c$ is permanently blocked.
    \end{claim}
\end{proof}

\begin{figure}
    \centering
        \begin{tikzpicture}[vertex/.style={shape=circle,thick,draw,node distance=3em,inner sep=0.15em},edge/.style={draw,thick}]
            \node[vertex,label=above:$u$] (u1) {};
            \node[right of=u1] (dots1) {$\cdots$};
            \node[vertex,right of=dots1,label={[xshift=2mm]above left:$u_{i_{j-1}}$}] (u2) {};
            \node[vertex,right of=u2,label=above:$u_{i_j}$] (u3) {};
            \node[vertex,right of=u3,label={[xshift=-1mm]above right:$u_{i_{j+1}}$}] (u4) {};
            \node[right of=u4] (dots2) {$\cdots$};
            \node[vertex,right of=dots2,label=above:$v$] (u5) {};

            \node[vertex,above of=u2,label=above:$a_{j-1}$] (ajm1) {};
            \node[vertex,below of=u3,label=below:$a_{j}$] (aj) {};
            \node[vertex,above of=u4,label=above:$a_{j+1}$] (ajp1) {};

            \draw[edge] (u2) -- (u3) -- (u4);
            \draw[edge] (ajm1) -- (u2);
            \draw[edge] (aj) -- (u3);
            \draw[edge] (ajp1) -- (u4);

            \draw[edge,dashed] (u3) -- (ajm1) -- (u4);
            \draw [edge] (u1.0) -- +(0:0.3);
            \draw [edge] (u2.180) -- +(180:0.3);
            \draw [edge] (u4.0) -- +(0:0.3);
            \draw [edge] (u5.180) -- +(180:0.3);

            \path (ajm1.center) coordinate[draw,rectangle,minimum size=3.5mm,inner sep=0pt,outer sep=0pt,Red];
            \path (aj.center) coordinate[draw,rectangle,minimum size=3.5mm,inner sep=0pt,outer sep=0pt,Red];
            \path (ajp1.center) coordinate[draw,rectangle,minimum size=3.5mm,inner sep=0pt,outer sep=0pt,Red];
        \end{tikzpicture}
    \caption{An illustration of the situation in the proof of Theorem \ref{thm:ts_tj_equiv_fork}. The red squares represent the initial positions of the tokens. The dashed lines represent non-edges.}
    \label{fig:leftmost-neighbor}
\end{figure}
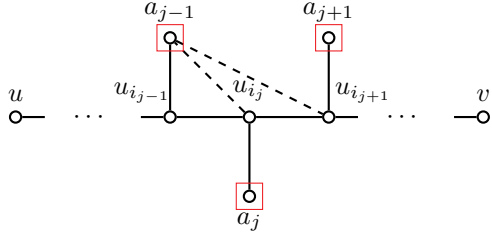

We now have all the ingredients to prove Theorem \ref{thm:ts_tj_equiv_fork}.

\begin{proof}[Proof of Theorem \ref{thm:ts_tj_equiv_fork}]
    Let $I$ be an independent set of $G$ such that Reduction Rule A is not applicable and let $u$ be a vertex of $G$ that is $I$-free. Let $v \in I$ and $J := I - v + u$. Since $u$ is $I$-free, $J$ is an independent set of $G$. We show that $I \TSeq J$. Let $P := u_1, \ldots, u_{\ell}$ be a shortest $u$-$v$ path in $G$. Since $u$ is $I$-free, we have $\ell \geq 3$. If $\ell = 3$ then either we can directly slide from $v$ to $u$ through $u_2$ or there is a token on $z \in N(u_2)$. In this case, the vertices $\{u_2, u, v, z\}$ induce a claw and by Lemma \ref{lem:rotation-claw}, we can either ``rotate'' the tokens on the claw to reach $J$, or $u = u_1$ is permanently blocked. So we may assume that $\ell \geq 4$. By Lemma \ref{lem:no-two-neigh-path}, the leftmost neighbors on $P$ of the $I$-tokens in the neighborhood of $P$ are distinct. 
    Let $\{a_1, \ldots a_t\} := N[P] \cap I$ with respective leftmost neighbors $u_{i_1}, \ldots, u_{i_t}$ on $P$, such that $i_{j} < i_{j+1}$ for all $j$. We show by induction on $i$ that there exists a sequence of slides that moves the token initially on $a_{i+1}$ to $a_i$ for $1 < i < t$  in increasing order, where $a_0 := u$.
    \begin{description}
        \item[Case i=1.] Slide the token on $a_1$ to $a_0$ via $u_{i_1}$ and then slide from $u_{i_1}$ to $u$ along $P$. This is possible since the other tokens on $N[P]$ lie on $a_2, \ldots, a_t$, which have no neighbors in $u_0, \ldots, u_{i_1}$ by the definition of leftmost neighbors.
        \item[Induction step.]  Let $1 < j < t$ and assume that the tokens initially on $a_1, \ldots, a_j$ have moved respectively to $a_0, \ldots, a_{j-1}$. We show that the token on $a_{j+1}$ can slide to $a_j$. The tokens in $N[P]$ currently are on $\{a_0, \ldots, a_t\} \setminus \{a_j\}$.\newline
        We first slide the token on $a_{j+1}$ to $u_{i_{j+1}}$. By the definition of leftmost neighbors, none of $a_{j+2}, \ldots a_t$ is adjacent to $u_{i_{j+1}}$. Furthermore, none of $a_0, \ldots, a_{j-2}$ is adjacent to $u_{i_{j+1}}$ since $P$ is a shortest $u$-$v$ path. Therefore, the only token that may prevent us from sliding $a_{j+1}$ to $u_{i_{j+1}}$ is on $a_{j-1}$, namely if the edge $a_{j-1}u_{i_{j+1}}$ is present. Assume for a contradiction that this is the case. Then $u_{i_{j-1}}$ is at distance two from $u_{i_{j+1}}$ in $P$ since $P$ is a shortest path. By Lemma~\ref{lem:no-two-neigh-path}, the leftmost neighbors of $a_1, \ldots, a_t$ are distinct, so $u_{i_{j-1}}, u_{i_j}, u_{i_{j+1}}$ are three consecutive vertices of $P$. See Figure \ref{fig:leftmost-neighbor} for an illustration. Since $u_{i_j}$ is a leftmost neighbor of $a_j$, we have that $a_j$ and $u_{i_{j-1}}$ are non-adjacent. Since $a_{j-1}$ and $a_{j+1}$ are adjacent to $u_{i_{j+1}}$ and Reduction Rule A is not applicable ($a_{j-1}, a_j, a_{j+1} \in I$), we have that $a_{j}$ is non-adjacent to $u_{i_{j+1}}$. But then $\{u_{i_{j-1}}, u_{i_j}, a_j, u_{i_{j+1}}, a_{j+1}\}$ induces a fork, so $a_{j-1}$ is non-adjacent to $u_{i_{j+1}}$. We may therefore slide the token on $a_{j+1}$ to $u_{i_{j+1}}$. 
        \newline 
        If $a_{i_{j+1}}$ is adjacent to $a_j$, we can slide the token on $a_{i_{j+1}}$ to its destination $u_j$. Otherwise, the only obstruction that prevents us from moving the token on $a_{i_{j+1}}$ to $a_{i_{j+1}-1}$ is the presence of the edge $a_{j-1}a_{i_{j+1}-1}$. But then $u_{i_j}$, $u_{j}$, and $u_{i_{j+1}-1} = u_{i_j + 1}$ are three consecutive vertices of $P$ since $P$ is a shortest path. Then $u_{i_j + 1}$ is adjacent to $a_j$, since otherwise $\{u_{i_{j-1}}, u_{i_j}, a_j, u_{i_j+1}, u_{i_{j+1}}\}$ induces a fork. But then $\{a_{j-1}, u_{i_j}, a_j, u_{i_j + 1}, u_{i_j}, a_{j+1}\}$ induces a fork. Therefore, we can slide the token on $u_{i_{j+1}} = u_{i_j+2}$ to $u_{i_j + 1}$. To conclude, observe that $u_{i_j+1}$ is adjacent to $a_j$, since otherwise $\{u_{i_{j-1}}, u_{i_j}, a_j, u_{i_j+1}, u_{i_j+1}\}$ induces a fork. Therefore, we can slide the token on $u_{i_j+2}$ to its destination $a_j$.
    \end{description}
        Therefore, $I \TSeq I - v + u$ or $G$ contains a permanently blocked vertex.
        In order to reach $I-v+u$ from $I$, we move tokens along a shortest $u$-$v$ path $P$ to their destination as described above, which can be done in polynomial time.
\end{proof}

\subsection{A polynomial-time algorithm for Token Sliding in fork-free graphs}\label{sec:resolve-cycles}

Using the reductions from the previous sections we provide a polynomial-time algorithm for \TSR in fork-free graphs. 
A \emph{complex} is a complete bipartite graph minus a matching. We recall the  following characterization of bipartite fork-free graphs due to Alekseev:

\begin{theorem}[\cite{alekseev_polynomial_2004}]\label{thm:alekseev}
    A bipartite and connected fork-free graph $G$ is either a path, a cycle or a complex.
\end{theorem}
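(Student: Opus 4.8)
The plan is to split on the maximum degree of $G$. If $\Delta(G) \le 2$, then connectivity immediately forces $G$ to be a path or a cycle, so we are done. Assume therefore $\Delta(G) \ge 3$ and fix a vertex $v$ of maximum degree; let $(A,B)$ be the bipartition (unique since $G$ is connected) with $v \in A$, and set $X := N(v) \subseteq B$, so $|X| \ge 3$. It then suffices to show that $G$ is a complex, i.e.\ that the bipartite graph on $(A,B)$ whose edges are the non-edges of $G$ between $A$ and $B$ has maximum degree at most one; since $G$ is connected, this exactly says $G = K_{A,B}$ minus a matching.

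\textbf{Step 1 (distances from $v$).} I would first show every vertex lies within distance $3$ of $v$. If not, take $w$ with $d(v,w) \ge 4$ and a shortest path $v = u_0, u_1, u_2, u_3, u_4, \dots$, so $u_0u_1u_2u_3u_4$ induces a $P_5$. Since $\deg(v)\ge 3$, $v$ has neighbours $y_1,y_2 \notin\{u_1\}$; if $u_2$ were adjacent to neither, then $\{v,u_1,u_2,y_1,y_2\}$ induces a fork with center $v$ and subdivided edge $v u_1 u_2$. So, after renaming, $u_2$ has a neighbour $z := y_1 \notin\{u_1,u_3\}$ ($z=u_3$ is impossible as $z\sim v$). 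Now if $z \not\sim u_4$ then $\{u_2,u_1,u_3,z,u_4\}$ induces a fork with center $u_2$ and subdivided edge $u_2u_3u_4$; hence $z \sim u_4$, so $u_0zu_4$ has length two, contradicting $d(u_0,u_4)=4$. Writing $Y$ and $Z$ for the vertices at distance $2$ and $3$ from $v$, we get $A=\{v\}\cup Y$, $B = X \sqcup Z$, and $N(z)\subseteq Y$ for $z\in Z$. A one-line fork argument (a claw $\{v;x_0,x_1,x_2\}$ with $x_i\in X$, extended through $x_0$ to a neighbour of $y$) shows each $y\in Y$ misses at most one vertex of $X$.

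\textbf{Step 2 (finishing).} With this structure the key step is to show the bipartite complement contains no $P_3$, equivalently that every vertex of $G$ is non-adjacent to at most one vertex on the other side. Assuming such a $P_3$, I would distinguish cases by whether its center lies in $A$ or $B$ and whether the two involved $B$-vertices lie in $X$ or in $Z$, producing an induced fork each time; the recurring device is that $v$ is adjacent to all of $X$ and to none of $Z$, so $v$ together with a suitable vertex of $X$ can serve as the subdivided edge. For example, if $a,a'\in Y$ are both non-adjacent to the same $x\in X$, then each misses only $x$ in $X$, so both are adjacent to some $x'\in X\setminus\{x\}$, and $\{x',a,a',v,x\}$ is a fork with subdivided edge $x'vx$; if $z,z'\in Z$ are distinct, then either they have a common neighbour $y\in Y$ (an $X$-neighbour of $y$ then gives the fork) or their neighbourhoods are disjoint, in which case a common $X$-neighbour $x^*$ of some $y\in N(z)$ and $y'\in N(z')$ — which exists because $y,y'$ each miss at most one vertex of $X$ and $|X|\ge 3$ — yields the fork $\{x^*;v,y,y'\}$ extended through $y$ to $z$; the mixed case (a $y\in Y$ non-adjacent to both an $x\in X$ and a $z\in Z$) is handled the same way. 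Hence the bipartite complement is a matching and $G$ is a complex.

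The main obstacle is the case analysis of Step 2: although each case reduces to checking that five named vertices induce a fork, one must enumerate carefully how a $P_3$ can sit in the bipartite complement relative to the split $B = X\sqcup Z$, and verify that the ``spare'' vertices used (an $x'\in X$, a common $X$-neighbour $x^*$) always exist — which is precisely where the maximality of $\deg(v)$ and the bound $|X|\ge 3$ are used.
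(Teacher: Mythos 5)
The paper does not prove this statement at all: it is imported verbatim from Alekseev's 2004 paper on fork-free graphs, so there is no in-paper argument to compare against. Your proposal is a correct, self-contained, and elementary proof, and I checked the key steps: the $\Delta\le 2$ dichotomy is fine; in Step~1 the two fork constructions (center $v$ with leaves $y_1,y_2,u_1$ and pendant $u_2$; center $u_2$ with leaves $u_1,z,u_3$ and pendant $u_4$) are induced because bipartiteness kills all same-side edges and shortest-path minimality kills the chords, and the shortcut $v\text{--}z\text{--}u_4$ gives the contradiction; the claw $\{v;x_0,x_1,x_2\}$ with pendant $y$ correctly shows each $y\in Y$ misses at most one vertex of $X$. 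In Step~2 all the announced forks are induced for the same reasons, and the needed auxiliary vertices ($x'\in X\setminus\{x\}$, a common $X$-neighbour $x^*$ of $y$ and $y'$) exist because $|X|\ge 3$ and each $Y$-vertex misses at most one vertex of $X$. Two small remarks on the write-up rather than the mathematics: your argument for distinct $z,z'\in Z$ actually proves $|Z|\le 1$, which makes the ``two non-neighbours of a $Y$-vertex both in $Z$'' configuration vacuous, and your mixed-case fork ($x^*$ with leaves $v,y,y'$ and pendant $z$ on $y'$) in fact shows the stronger fact that every $z\in Z$ is complete to $Y$ without even using $y\not\sim x$; saying these two facts explicitly would shorten the case enumeration. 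What your route buys over the citation is a short, first-principles verification that only uses forbidden-fork arguments on BFS layers from a maximum-degree vertex, at the cost of a case analysis that the reader must check is exhaustive; the cited result of Alekseev is part of a broader structural theory of fork-free graphs and comes packaged with the algorithmic consequences the paper actually needs elsewhere.
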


In the following, let $G$ be a fork-free graph and $I$ and $J$ be two independent sets of $G$. Since the symmetric difference of two independent sets induces a bipartite graph, $I \Delta J$ is a collection of disjoint paths, cycles, complex and isolated vertices. The following Lemma shows that we can reduces to the case where it does not contain any complex that is not a cycle or a path:

\begin{lemma}\label{lem:sym-diff-fork-free}
    If $G$ is $I$-reduced and $J$-reduced then $I \Delta J$ is a collection of disjoint paths, cycles and isolated vertices.
\end{lemma}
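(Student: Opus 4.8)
The plan is to apply Alekseev's structural characterization of connected bipartite fork-free graphs (\cref{thm:alekseev}) to each connected component of $G[I\Delta J]$ and then to rule out the genuine complex case using the two reducedness hypotheses. First I would record the standard observation that $G[I\Delta J]$ is bipartite: every vertex of $I\Delta J$ lies in exactly one of $I\setminus J$ and $J\setminus I$, there is no edge inside $I\setminus J$ (as $I$ is independent) nor inside $J\setminus I$ (as $J$ is independent), so $(I\setminus J,\,J\setminus I)$ induces a bipartition. Hence every connected component $C$ of $G[I\Delta J]$ is connected, bipartite with the (unique) bipartition $\bigl(V(C)\cap(I\setminus J),\,V(C)\cap(J\setminus I)\bigr)$, and fork-free, being an induced subgraph of $G$. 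By \cref{thm:alekseev}, each such $C$ is a path, a cycle, or a (connected) complex.

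Next I would show that every component $C$ has maximum degree at most $2$; since a connected graph of maximum degree at most $2$ is an isolated vertex, a path, or a cycle, this immediately collapses the complex case and yields the lemma. Suppose, for contradiction, that some vertex $v\in V(C)$ has at least three neighbours in $C$. Because $C$ is bipartite with the bipartition above, if $v\in I\setminus J$ then all the $C$-neighbours of $v$ lie in $J\setminus I\subseteq J$, so $v$ is adjacent to at least three tokens of $J$; then Reduction Rule A is applicable with respect to $G$ and $J$, contradicting that $G$ is $J$-reduced. Symmetrically, if $v\in J\setminus I$, then $v$ has at least three neighbours in $I\setminus J\subseteq I$, so Reduction Rule A is applicable with respect to $G$ and $I$, contradicting that $G$ is $I$-reduced. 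Therefore every vertex of $C$ has at most two neighbours in $C$, and taking the union over all components of $G[I\Delta J]$ finishes the argument.

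This is essentially a direct corollary of the earlier material, so I do not anticipate a genuine obstacle; the only points that need a little care are (i) verifying that the natural vertex split of a component of $G[I\Delta J]$ is actually its bipartition, so that a vertex of degree at least three there really has at least three neighbours all lying in a single one of $I,J$ — which is exactly what connects ``high degree in $C$'' with ``adjacent to at least three tokens''; and (ii) noting that although a complex may be disconnected in general, here complexes occur only as connected components, so that ``$C$ is a complex of maximum degree at most $2$'' genuinely reduces to ``$C$ is a path, a cycle, or an isolated vertex''.
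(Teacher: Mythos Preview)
Your argument is correct and follows the same core idea as the paper: the $I$- and $J$-reducedness hypotheses bound the degree of every vertex in $G[I\Delta J]$ by $2$, forcing each component to be a path, cycle, or isolated vertex. The paper's proof is just the two-line version of this; your detour through \cref{thm:alekseev} is harmless but unnecessary, since once you have maximum degree at most $2$ the conclusion is immediate without ever classifying bipartite fork-free graphs.
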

\begin{proof}
    Since Reduction Rule A is not applicable, each vertex of $I$ has at most two neighbors in $J$ and vice-versa. Therefore, each connected component of  $G[I \Delta J]$ is a cycle or a path. 
\end{proof}

Let $C$ be a non-trivial connected component of $G[I \Delta J]$. Let $I' := (I \setminus V(G)) \cup (J \cap V(C))$. We say that $C$ can be \emph{resolved} if $I \TSeq I'$.
That is, $C$ can be resolved if there exists a reconfiguration sequence that replaces on $C$ the tokens of $I$ with the tokens of $J$. By abuse of notation, we say that an isolated vertex $u$ of $G[I \setminus J]$ can be resolved if there exists a reconfiguration sequence that slides $u$ to an isolated vertex of $G[J \setminus I]$. Note that if $C$ is a path it can be resolved simply by  sliding tokens one after the other along the path. If $C$ is a cycle we proceed as follows.
If there exists a free vertex $u$ with respect to the current independent set $I$, we pick an arbitrary token on $C \cap I$ and slide it to $v$ or find a permanently blocked vertex according to Theorem~\ref{thm:ts_tj_equiv_fork}. In the former case we slide the remaining tokens on $C \cap I$ to $C \cap J$ just as for paths and then slide the token on $v$ back to $C$ (again according to Theorem~\ref{thm:ts_tj_equiv_fork}). We can suppose that $I$ is not maximum since otherwise the reduction to claw-free graphs of Section \ref{sec:max-isr} applies. If $I$ is maximal, we show next that we can always ``free'' a vertex to resolve a cycle as explained before.

We first need a few definitions. Let $G$ be a graph and $A$ be an independent set of $G$. An independent set $X$ of $G - A$ is called \emph{magnifier of $A$} if the independent set $(A \cup X) \setminus (N(X) \cap A)$ is strictly larger that $A$ and the corresponding subgraph $G[X \cup (N(X) \cap A)]$ is called \emph{$A$-augmenting graph}~\cite{alekseev_polynomial_2004}. Notice that augmenting graphs are bipartite. Augmenting graphs generalize augmenting paths and are important for computing maximum independent sets. 

\begin{lemma}\label{lem:augmenting-fork-free}
    If $G$ is $I$-reduced and $I$ is maximal but not maximum then there exists an $I$-augmenting path.
\end{lemma}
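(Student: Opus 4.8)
The plan is to leverage the structure theorem for bipartite fork-free graphs (Theorem~\ref{thm:alekseev}) together with the classical fact that a maximal independent set which is not maximum admits \emph{some} $I$-augmenting graph, and then argue that this augmenting graph can be taken to be a path. First I would recall that since $I$ is maximal but not maximum, Alekseev's augmenting-graph framework guarantees the existence of a magnifier $X$ of $I$, i.e., an independent set $X$ of $G - I$ such that $(I \cup X) \setminus (N(X) \cap I)$ is strictly larger than $I$; the induced subgraph $H := G[X \cup (N(X)\cap I)]$ is an $I$-augmenting graph, and it is bipartite with parts $X$ and $B := N(X)\cap I$. Moreover we may take $H$ to be \emph{connected} (a disjoint union contains a connected augmenting graph as one of its components, since the "gain" $|X| - |B|$ is additive over components, so some component has positive gain) and \emph{minimal} with respect to taking induced subgraphs that are still augmenting.

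Next I would apply Theorem~\ref{thm:alekseev} to the connected bipartite fork-free graph $H$: it is a path, a cycle, or a complex. A cycle has $|X| = |B|$ and hence is not augmenting, so $H$ is a path or a complex. If $H$ is an induced path, then since $H$ is augmenting we have $|X| > |B|$, which forces the path to have odd length with both endpoints in $X$; this is exactly an $I$-augmenting path in the usual sense (the endpoints of the path are $I$-free vertices, or more precisely the path alternates between $V\setminus I$ and $I$ with $X$-vertices at the ends), and we are done. The remaining case is that $H$ is a complex, i.e., a complete bipartite graph minus a matching, with parts $X$ and $B$ and $|X| > |B|$. Here I would use the $I$-reducedness of $G$: every vertex of $B \subseteq I$, and in particular every vertex of $X$, sees at most two tokens of $I$. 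Since $X$ is independent and $N(X)\cap I = B$, each vertex of $X$ has at most two neighbors in $B$ within $H$; but $H$ being a complete bipartite graph minus a matching means each vertex of $X$ is non-adjacent to at most one vertex of $B$, hence adjacent to at least $|B| - 1$ vertices of $B$. Combining, $|B| - 1 \le 2$, so $|B| \le 3$, and a short case analysis on $|B| \in \{1,2,3\}$ with $|X| > |B|$ (using that in a complex each part-vertex misses at most one vertex of the other part) shows that $H$ contains an induced augmenting \emph{path} of length $1$ or $3$ — for instance when $|B|=1$ any vertex of $X$ is an $I$-free vertex adjacent to nothing in $I$ except possibly the one vertex of $B$, giving a $P_2$; when $|B|=2$ the matching structure of the complex produces a $P_4$; the $|B|=3$ sub-case is eliminated because $|X|\ge 4$ together with $G$ being $I$-reduced and fork-free would force a vertex of $B$ with three $I$-neighbors or an induced fork.

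The main obstacle I anticipate is the complex case: one has to rule out that $G$ is $I$-reduced forces too much and actually check that in every surviving configuration a genuinely \emph{induced} $P_2$ or $P_4$ that is augmenting sits inside the complex, rather than merely an augmenting graph of a more complicated shape. The cleanest route is probably to note that a complex minus-a-matching with $|X|>|B|$ and the degree bound $|B|\le 3$ always contains two vertices $x, x' \in X$ with $N(x')\cap B \subseteq N(x)\cap B$ (pigeonhole on the at-most-$|B|$ possible neighborhoods, of which there are $|B|+1$ relevant patterns but $|X|\ge |B|+1$ vertices), and then either $x'$ is already $I$-free (done, take the trivial augmenting path $\{x'\}$, i.e. a single vertex we can slide to), or the short alternating walk through a vertex of $B$ not seen by $x'$ yields the induced path; verifying that this walk is induced and augmenting is the only place requiring care, and fork-freeness of $G$ is exactly what prevents chords.
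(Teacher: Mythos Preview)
Your overall strategy matches the paper's: take an $I$-augmenting graph $H$, invoke Alekseev's classification of connected bipartite fork-free graphs (Theorem~\ref{thm:alekseev}), discard cycles since they are balanced, and then use $I$-reducedness to control the complex case. The paper's proof is extremely terse here --- it simply asserts ``since furthermore Reduction Rule~A is not applicable, we have that $H$ is an $I$-augmenting path'' --- so you are right that the complex case is where the content lies.

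However, your handling of the complex case has genuine gaps. First, two parity slips: a $P_2$ has one vertex on each side and a $P_4$ has two on each side, so neither is augmenting; the relevant short augmenting paths are $P_3$ and $P_5$. More seriously, your $|B|=2$ analysis does not go through. Take $G=K_{3,2}$ with $I$ equal to the side of size~$2$. Then $G$ is fork-free, $G$ is $I$-reduced (every non-token has exactly two $I$-neighbours), $I$ is maximal but not maximum, and the unique augmenting graph is $K_{3,2}$ itself. Yet there is \emph{no} $I$-augmenting path: every $x\in X$ sees both $b_1,b_2$, so no endpoint can be ``clean'', and every candidate $P_5$ has the chord $x_ib_j$. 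Your pigeonhole ``cleanest route'' does not escape this either, since all three $X$-vertices have identical neighbourhood $\{b_1,b_2\}$ and none is $I$-free. This shows that $I$-reducedness alone is not enough for the conclusion as you (and the paper) state it; note that in the algorithm the lemma is only ever invoked when each component of $G$ is prime (Theorem~\ref{thm:reduction-to-prime}), and $K_{3,2}$ is not prime. You should either add that hypothesis and use it, or argue via a minimal augmenting subgraph in a way that genuinely rules out the $K_{3,2}$ configuration. (Minor: in your elimination of $|B|=3$ you wrote ``a vertex of $B$ with three $I$-neighbours''; you mean a vertex of $X$.)
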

\begin{proof}
    Since $I$ is not maximum, there exists an $I$-augmenting graph $H$ and since $I$ is maximal, $H$ has at least two vertices. Since $G$ is fork-free, Theorem~\ref{thm:alekseev} ensures that $H$ is either a path or a complex. Since furthermore Reduction Rule A is not applicable, we have that $H$ is an $I$-augmenting path.
\end{proof}

We use the following special case of the result of Gerber et al., who showed that $I$-augmenting paths can be computed in polynomial time in graphs with no induced skew stars. 

\begin{theorem}[\cite{GERBER2006352}]\label{thm:augmenting-chain-fork}
    There is a polynomial-time algorithm that finds an $I$-augmenting path in $G$ if it exists.
\end{theorem}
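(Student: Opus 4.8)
The plan is to reduce the search for an $I$-augmenting path to a maximum-matching computation, in the spirit of the Minty--Sbihi algorithm for claw-free graphs, using fork-freeness to control the induced structure of the path. Recall that an $I$-augmenting path is an induced path $v_0 v_1 \cdots v_{2k}$ whose odd vertices lie in $I$ and whose even vertices $v_0, v_2, \ldots, v_{2k}$ lie outside $I$ and have all of their $I$-neighbours on the path; swapping $\{v_1, v_3, \ldots, v_{2k-1}\}$ for $\{v_0, v_2, \ldots, v_{2k}\}$ then increases $|I|$ by one. The first observation is that any even vertex of such a path has at most two neighbours in $I$ (exactly two in the interior, exactly one at an endpoint), because a vertex of a path has at most two neighbours on the path and an even vertex has no $I$-neighbour off the path. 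Consequently a vertex $v$ with $|N(v) \cap I| \geq 3$ can never lie on an $I$-augmenting path, so I would first delete all such vertices; after this clean-up $|N(v) \cap I| \leq 2$ for every remaining vertex, exactly as in \cref{lem:at-most-two}. (In our application $G$ is already $I$-reduced, so this step is free.)

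First I would guess the two endpoints. An endpoint $v_0$ satisfies $|N(v_0) \cap I| = 1$, so there are at most $O(n^2)$ candidate ordered pairs $(a,b)$ of potential endpoints, and it suffices to design, for each fixed pair, a polynomial-time procedure deciding whether there is an $I$-augmenting path from $a$ to $b$. For a fixed pair, I would encode the candidate path as an alternating structure on the tokens: view each non-$I$ vertex $x$ with $N(x) \cap I = \{s,t\}$ as a potential edge $st$ of an auxiliary graph $\mathcal{H}$ on vertex set $I$, and each $1$-tight vertex as a pendant attachment; an $I$-augmenting path then corresponds to a path in $\mathcal{H}$ from the token of $a$ to the token of $b$ using distinct tokens and pairwise non-adjacent non-$I$ vertices. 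The point of the reduction is that, once the ``conflict'' constraints are built into $\mathcal{H}$, such an alternating path can be extracted from a maximum matching via Edmonds' blossom algorithm, which runs in polynomial time.

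The key step, and the main obstacle, is to guarantee that the path returned is genuinely \emph{induced} and that its interior non-$I$ vertices carry no $I$-neighbour off the path. This is exactly where fork-freeness is indispensable: an illegal chord or an extra attachment to $I$ would produce a forbidden configuration. Concretely, the same local analysis as in \cref{lem:first-neighbor-path} and \cref{lem:no-two-neigh-path} shows that two non-$I$ vertices sharing a token, or a non-$I$ vertex with a stray $I$-neighbour, force an induced fork on five suitably chosen vertices; hence these situations cannot occur along a shortest such path. I would therefore build the forbidden adjacencies into the auxiliary graph so that only valid (chordless, $I$-tight) alternating paths survive, and argue via the fork-free structure that a shortest augmenting path between $a$ and $b$, if one exists, is necessarily of this form.

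Finally, to be sure that restricting to paths loses nothing, I would invoke \cref{thm:alekseev}: the subgraph induced by an augmenting set together with its $I$-neighbourhood is bipartite and fork-free, hence a path, cycle, or complex; cycles and complexes have balanced (or non-augmenting) bipartitions once $G$ is $I$-reduced, so \cref{lem:augmenting-fork-free} already tells us that a path is the only augmenting shape we need to detect. Putting these pieces together yields a polynomial-time algorithm: clean up high-degree vertices, iterate over $O(n^2)$ endpoint pairs, and for each run the matching-based search whose correctness is underwritten by the fork-free exclusion arguments. The delicate part throughout is the bookkeeping of the induced constraint inside the matching formulation; the structural fork-free lemmas are precisely what make this bookkeeping local rather than global, and thereby tractable.
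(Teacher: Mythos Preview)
The paper does not prove this statement: \cref{thm:augmenting-chain-fork} is quoted from Gerber, Hertz and Lozin and used as a black box in the algorithm of \cref{sec:resolve-cycles}. There is therefore no proof in the paper to compare your argument against.

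As for the substance of your sketch, the overall architecture---cleaning up vertices with three $I$-neighbours, guessing the two endpoints, building a ``token graph'' on $I$ whose edges are the $2$-tight non-$I$ vertices, and then appealing to a matching computation---is indeed in the spirit of Minty--Sbihi and of the cited paper. The genuine gap is the step you describe as ``build the forbidden adjacencies into the auxiliary graph so that only valid (chordless, $I$-tight) alternating paths survive'': this sentence is carrying essentially the entire difficulty and is left unspecified. A path in your auxiliary graph $\mathcal{H}$ gives a sequence of tokens linked by $2$-tight non-$I$ vertices, but nothing in the construction forces those non-$I$ vertices to be pairwise non-adjacent in $G$, nor the resulting walk in $G$ to be chordless. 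The lemmas you invoke, \cref{lem:first-neighbor-path} and \cref{lem:no-two-neigh-path}, are statements about \emph{shortest $u$--$v$ paths with $u$ $I$-free}, not about $I$-augmenting paths, and in particular they say nothing about adjacencies among the non-$I$ vertices along such a path. The actual argument in~\cite{GERBER2006352} needs a considerably more careful encoding (with additional gadgetry and a conflict analysis specific to the forbidden-subgraph hypothesis) so that an Edmonds-type search is guaranteed to return a genuinely induced augmenting chain; your outline identifies the right target but does not supply that mechanism.
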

We now have everything at hand to show how to resolve cycles of the symmetric difference in polynomial time.

\begin{lemma}\label{lem:resolve-isolated-cycles}
    Assume that $G$ is prime, $I$-reduced and $J$-reduced, and that $I$ and $J$ are not maximum. Then we can find in polynomial-time either a $C$-resolving sequence or a permanently blocked vertex of $G$.
\end{lemma}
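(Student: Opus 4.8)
The plan is to combine the tools assembled in this section into a single resolving procedure. Fix a non-trivial connected component $C$ of $G[I \bigtriangleup J]$; by Lemma~\ref{lem:sym-diff-fork-free}, $C$ is either a path or an (even) cycle, and if $C$ is a path we resolve it directly by sliding tokens one after another from their $I$-positions to their $J$-positions along the path (this never creates a conflict since the path is induced and the endpoints' external neighbourhoods are irrelevant to the slide). So assume $C$ is a cycle. The key point is that we want an $I$-free vertex to serve as temporary parking space: if we have a vertex $f$ with $N(f) \cap I = \emptyset$, then by Theorem~\ref{thm:ts_tj_equiv_fork} (applied with $v$ an arbitrary token of $I \cap V(C)$ and $u = f$) we can in polynomial time either reconfigure $I$ to $I - v + f$ or exhibit a permanently blocked vertex; in the latter case we are done by Lemma~\ref{lem:ruleZ}. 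In the former case, once one token of the cycle is parked at $f$, the remaining vertices of $C \cap I$ together with their $J$-targets form an induced path, so we slide those tokens to their $J$-positions exactly as in the path case, and finally apply Theorem~\ref{thm:ts_tj_equiv_fork} once more (now with the roles reversed) to slide the token on $f$ back onto the last free vertex of $C \cap J$. This yields a $C$-resolving sequence.

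It remains to produce an $I$-free vertex. Since $I$ is not maximum we distinguish two subcases. If $I$ is not maximal, then by definition there is already a vertex $f$ with $I + f$ independent, i.e.\ an $I$-free vertex, and we are in the situation above. If $I$ is maximal (but not maximum), then by Lemma~\ref{lem:augmenting-fork-free} there exists an $I$-augmenting path, and by Theorem~\ref{thm:augmenting-chain-fork} we can find one, say $Q = q_0, q_1, \ldots, q_{2r}$, in polynomial time; here $q_0, q_2, \ldots, q_{2r} \notin I$ and $q_1, q_3, \ldots, q_{2r-1} \in I$. Sliding the token on $q_{2i-1}$ to $q_{2i}$ for $i$ from $r$ down to $1$ transforms $I$ into a strictly larger independent set along $Q$; in particular after these $r$ slides the endpoint $q_0$ becomes an $I'$-free vertex of the current independent set $I'$ (with $I \TSeq I'$), and since Reduction Rule~A stays inapplicable along the way (Observation~\ref{lem:only-reducible}), we may now run the parking argument of the previous paragraph with $I'$ in place of $I$ and $f = q_0$.

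The one subtlety to check is that augmenting along $Q$ does not destroy the component $C$ we are trying to resolve or the target positions of its tokens. But note that $Q \cap I$ and $C \cap I$ need not be disjoint only in controlled ways: any token moved by the augmentation that lies on $C$ can be moved back afterwards, because once $q_0$ is free we still have the freedom granted by Theorem~\ref{thm:ts_tj_equiv_fork} to shuffle a single cycle token in and out of the parking vertex, and the path-style slides on the rest of the cycle are order-independent. More carefully, we resolve $C$ relative to the current (larger) independent set $I'$: we first bring $I'$ back to agreeing with $J$ on all vertices outside $C$ that the augmentation touched --- which is possible since those moves were slides along an induced augmenting path and are reversible --- and only then resolve $C$. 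I expect the main obstacle to be precisely this bookkeeping: ensuring that the augmenting-path moves and the cycle-resolving moves can be scheduled so that they do not interfere, i.e.\ that after freeing a vertex we can still realise $I' \bigtriangleup J$ on each component independently. The clean way to phrase it is to observe that having one permanently-free vertex reduces the cycle to a path-with-a-pendant, and then invoke Theorem~\ref{thm:ts_tj_equiv_fork} as a black box for the two cycle-token moves while handling everything else by trivial induced-path slides; any failure of Theorem~\ref{thm:ts_tj_equiv_fork} to produce a sequence hands us a permanently blocked vertex, which is the other allowed output.
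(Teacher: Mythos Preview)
Your plan is essentially the paper's proof: free a vertex (directly if $I$ is not maximal, or by first sliding along an $I$-augmenting path found via Theorem~\ref{thm:augmenting-chain-fork}), use Theorem~\ref{thm:ts_tj_equiv_fork} to park one cycle token there, slide the remaining cycle tokens along the resulting path, and invoke Theorem~\ref{thm:ts_tj_equiv_fork} once more to bring the parked token back. One slip to fix: sliding the tokens along $Q$ does not transform $I$ into a ``strictly larger'' independent set---token sliding preserves cardinality---you only need (and correctly state) that $q_0$ becomes $I'$-free afterwards. The interference worry you flag in your last paragraph is legitimate but is glossed over in the paper as well, which simply writes that ``the previous case applies and we may then slide the tokens of the chain back to their initial position''; the routine scheduling you sketch is exactly what is needed.
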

\begin{proof}
There are two cases to consider. First, suppose that $I$ is not maximal. Then there exists a vertex $u \notin N[I]$. Pick an arbitrary token on $v$ in $C$. Since $u$ is $I$-free, Theorem~\ref{thm:ts_tj_equiv_fork} ensures that we can find in polynomial time either a locally blocked vertex or a sequence of slides from $I$ to $A := I - u + v$. 
    In the later case, we can first apply the sequence from $I$ to $A$ and then move the remaining tokens on $C \cap A$ to $C \cap J$. Let $A'$ denote the obtained independent set: since Reduction Rule A is not applicable with respect to $I$ and $I \TSeq A'$, we have that Reduction Rule A is not applicable with respect to $A'$. Furthermore, there exists a vertex $u' \in C$ that is $A'$-free. We can therefore apply Theorem~\ref{thm:ts_tj_equiv_fork} to find in polynomial time either a sequence of slides from $A' - u + u'$ from $A'$, thus resolving the cycle $C$, or a locally blocked vertex of $G$.

    Now suppose that $I$ is maximal, so by Lemma~\ref{lem:augmenting-fork-free}, there must exist a $I$-augmenting path. By Theorem \ref{thm:augmenting-chain-fork}, such a path $v_1,u_1, \ldots u_k, v_k$ can be found in polynomial time, where $u_1, \ldots, u_k$ denote the vertices of the path that are in $I$. After sliding for $1 \leq i < k$ in increasing order the token on $u_i$ to $v_i$, the vertex $v_k$ has no token in its neighborhood. Therefore, the previous case applies and we may then slide the token of the chain back to their initial position 
\end{proof}

We now have everything we need to prove the main result of this section.

\begin{theorem}
    \TSR in fork-free graphs admits a polynomial-time algorithm.
\end{theorem}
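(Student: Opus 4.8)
The plan is to combine the tools of Sections~\ref{sec:irrelevant}--\ref{sec:resolve-cycles} into one iterative procedure. Let $(G,I,J)$ be an instance with $G$ fork-free and $|I|=|J|$. If $I$ (equivalently $J$) is a maximum independent set of $G$, I would invoke Theorem~\ref{thm:misr-fork-me} and stop; so from now on assume $I$ and $J$ are not maximum. First I would run the reduction of Theorem~\ref{thm:reduction-to-prime}: applying Reduction Rules~A, B, D and E exhaustively produces in polynomial time an equivalent instance (or a trivial \NO-instance) that is $I$-reduced and $J$-reduced and in which every connected component is prime. Since no token can cross between connected components, we may then treat the components separately; in particular we answer \NO at once if some component carries a different number of tokens in $I$ than in $J$, so assume $G$ is connected.

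Next I would consider $G[I\bigtriangleup J]$, whose connected components are paths, cycles and isolated vertices by Lemma~\ref{lem:sym-diff-fork-free} (this uses that $G$ is $I$- and $J$-reduced), and resolve these components one by one. A path component is resolved simply by sliding its tokens toward one of its ends, one after the other; resolving a path strictly decreases $|I\bigtriangleup J|$ and its tokens are never moved again. It remains to handle cycles and isolated vertices of $G[I\setminus J]$.

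For such a component $C$ I would invoke Lemma~\ref{lem:resolve-isolated-cycles}: since $G$ is prime, $I$-reduced, $J$-reduced and $I,J$ are not maximum, it produces in polynomial time either a $C$-resolving reconfiguration sequence --- which we apply, again strictly decreasing $|I\bigtriangleup J|$ --- or a permanently blocked vertex of $G$. In the second case we delete that vertex, which is safe by Lemma~\ref{lem:ruleZ} and makes the graph strictly smaller, and restart from the reduction of Theorem~\ref{thm:reduction-to-prime}, since deleting a vertex may destroy primality but keeps the graph fork-free and leaves it $I$-reduced and $J$-reduced. When $I\bigtriangleup J=\emptyset$ we output \YES, and when a reduction rule reports a trivial \NO-instance we output \NO; correctness follows because every reduction rule and every slide performed preserves the answer. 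Each iteration either resolves a component of $G[I\bigtriangleup J]$ or deletes a vertex of $G$, so there are polynomially many iterations, each running in polynomial time, which yields the claimed algorithm.

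I expect the real difficulty to be wholly contained in Lemma~\ref{lem:resolve-isolated-cycles}, and through it in Theorem~\ref{thm:ts_tj_equiv_fork}: the delicate case is when $I$ is maximal but not maximum, where one must first manufacture an $I$-free vertex from an $I$-augmenting path (Lemma~\ref{lem:augmenting-fork-free} and Theorem~\ref{thm:augmenting-chain-fork}) and then push a token of the cycle onto that free vertex along a shortest path, a move that can get stuck only at a permanently blocked vertex. A secondary point to check is that resolving one component never disturbs an already-resolved component or a target position prescribed by $J$: the augmenting-path detour is undone afterwards and every other token move stays inside the component being processed.
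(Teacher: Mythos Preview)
Your proposal is correct and follows essentially the same route as the paper: test for maximality (falling back on Theorem~\ref{thm:misr-fork-me}), reduce to prime $I$- and $J$-reduced components via Theorem~\ref{thm:reduction-to-prime}, resolve path components directly, and handle the remaining components with Lemma~\ref{lem:resolve-isolated-cycles}, deleting any permanently blocked vertex that turns up and restarting. The only cosmetic difference is that the paper treats isolated vertices of $G[I\bigtriangleup J]$ separately, invoking Theorem~\ref{thm:ts_tj_equiv_fork} directly (an isolated vertex $v\in J\setminus I$ is automatically $I$-free), whereas you fold them into Lemma~\ref{lem:resolve-isolated-cycles}; both are fine. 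Your closing remarks about the augmenting-path detour being undone and other tokens returning to their positions are exactly the invariants used implicitly in the paper's proofs of Theorem~\ref{thm:ts_tj_equiv_fork} and Lemma~\ref{lem:resolve-isolated-cycles}.
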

\begin{proof}
    Let $(G, I, J)$ be an instance of \TSR such that $G$ is fork-free. 
    We may assume that $I$ and $J$ are not maximum, since otherwise Theorem~\ref{thm:misr-fork-me} applies. This can be tested in polynomial time by the algorithm given in~\cite{alekseev_polynomial_2004}.
    Otherwise, we invoke \cref{thm:reduction-to-prime} and obtain in polynomial time an equivalent instance $(G', I', J')$, such that $G'$ is fork-free and each connected component is prime (or $(G', I', J')$ is a trivial \NO-instance).

Since $(G', I, J)$ is a \YES-instance if and only if $(C, I \cap V(C), J \cap V(C))$ is a \YES-instance for each connected component $C$ of $G'$, we may assume that $G'$ is connected.
By Lemma \ref{lem:sym-diff-fork-free}, the graph $G'[I \Delta J]$ is a collection of disjoint paths, cycles and isolated vertices. If $G'[I \Delta J]$ contains a path $P$ we return the sequence sliding the tokens from $I \cap P$ to $J \cap P$ one by one. If $G'[I \Delta J]$ contains two isolated vertices $u \in I$ and $v \in J$, Theorem~\ref{thm:ts_tj_equiv_fork} ensures that we can either find a sequence from $I$ to $I-u+v$ or a permanently blocked vertex polynomial time. If $G'[I \Delta J]$ contains a cycle $C$, by Lemma~\ref{lem:resolve-isolated-cycles}, we can either find a resolving sequence for $C$ or a permanently blocked vertex. If at any step we find a permanently blocked vertex, we can delete it from $G'$ by \cref{lem:ruleZ} and restart the algorithm. 
Since furthermore the reduction rules A, B, D, E can be applied exhaustively in polynomial time, the overall running time of the algorithm is polynomial.
\end{proof}

\section{Conclusion}
\label{sec:conclusion}
We showed \TSR and \TJR remain \PSPACE-complete on $H$-free graphs, unless $H$ is a path, a claw, or a subdivision of the claw. Furthermore, we showed that \TSR admits a polynomial-time algorithm on fork-free graphs. The overall goal is the same as for the nominal problem, namely to obtain a complete classification of the complexity of ISR on $H$-free graphs. In this direction, an interesting open question is whether \TSR admits a polynomial-time algorithm on fork-free graphs if the independent sets are not maximum. There are two main reasons why our approach for \TSR does not work for \TJR. First, we cannot assume that tokens stay in their respective modules, so the modular decomposition seems less useful. Secondly, the irrelevant vertices for \TSR may not be irrelevant for \TJR, for example, vertices that are adjacent to three tokens. This means in particular that in order to resolve a cycle, it is not sufficient to consider augmenting chains, but it may be necessary to deal with large complexes induced by the symmetric difference of the independent sets.
A second interesting question is whether ISR admits a polynomial-time algorithm on $P_5$-free graphs. As a first step, the case of maximum independent sets could be considered.

\bibliographystyle{alpha}
\bibliography{biblio}

\end{document}